\documentclass[a4paper, 10pt]{article}
% \settopmatter{printfolios=true,printccs=false,printacmref=false}
% \setcopyright{none}

% This is for algorithmic typechecking rules, set to algrulestrue to show these rules
\newif\ifalgrules\algrulestrue

\newif\ifjune\junefalse

\newif\ifappendix\appendixtrue

\usepackage{authblk}
\usepackage{amsmath}
\usepackage{amssymb}
\usepackage{amsthm}
\usepackage{color}
\usepackage{mathpartir}
\usepackage{multirow}
\usepackage{stmaryrd}
\usepackage{cmll}
\usepackage[numbers]{natbib}
\usepackage{verbatim}
\usepackage{thmtools}
\usepackage{thm-restate}
\usepackage{hyperref}
\hypersetup{colorlinks,allcolors=black}
\usepackage{cleveref}
\usepackage[a4paper, total={6.5in, 9.5in}]{geometry}
\usepackage [english]{babel}
\usepackage [autostyle, english = american]{csquotes}
\MakeOuterQuote{"}

\ifjune
\usepackage{fontspec}
\fi

\usepackage[margin=false,inline=true]{fixme}
\fxsetup{status=final}
\FXRegisterAuthor{aaa}{anaaa}{\color{cyan}AAA}
\FXRegisterAuthor{mg}{anmg}{\color{red}MG}
\FXRegisterAuthor{jw}{anjw}{\color{orange}JW}
\FXRegisterAuthor{pb}{anpb}{\color{blue}PB}
\newcommand{\aaa}[1]{\aaanote{#1}}

\newcommand{\pb}[1]{\pbnote{#1}}

\DeclareMathOperator*{\Set}{\mathit{set}}
\newcommand{\R}{\mathbb{R}}
\newcommand{\Sens}{\mathbb{R}^{\geq 0}_{\infty}} % TODO: add square \newcommand{\Sens}{\mathbb{R}^{\geq 0}_{\infty,\square}}
\newcommand{\Dist}{\mathbb{R}^{\geq 0}_{\infty}}
\newcommand{\db}{\mathsf{db}}

\newcommand{\RR}{\(\mathbb R\)R}
\newcommand{\lolliR}{\(\multimap\)R}
\newcommand{\lolliL}{\(\multimap\)L}
\newcommand{\tensorR}{\(\otimes\)R}
\newcommand{\tensorL}{\(\otimes\)L}

\newcommand{\plusR}[1]{\(\oplus_{#1}\)R}
\newcommand{\plusL}{\(\oplus\)L}

\newcommand{\RI}{\(\mathbb R\)I}
\newcommand{\lolliI}{\(\multimap\)I}
\newcommand{\lolliE}{\(\multimap\)E}
\newcommand{\tensorI}{\(\otimes\)I}
\newcommand{\tensorE}{\(\otimes\)E}
\newcommand{\plusI}[1]{\(\oplus_{#1}\)I}
\newcommand{\plusE}{\(\oplus\)E}

\newcommand{\CaseOf}[5]{\textbf{case } #1 \textbf{ of } #2.\;#3 \mid #4.\;#5}
\newcommand{\LetPairIn}[4]{\textbf{let } (#1, #2) = #3 \textbf{ in } #4}
\newcommand{\LetBangIn}[3]{\textbf{let } !#1 = #2 \textbf{ in } #3}
\newcommand{\mLet}[3]{\textbf{mlet } #1 = #2 \textbf{ in } #3}
\newcommand{\LetIn}[3]{\textbf{let } #1 = #2 \textbf{ in } #3}
\newcommand{\return}{\textbf{return }}
\newcommand{\inj}{\textbf{inj}}
% TODO: put a space after inj and make it take one argument

\newcommand{\Met}{\mathsf{Met}}
\newcommand{\MD}{\mathsf{MD}}
\newcommand{\HD}{\mathsf{HD}}
\newcommand{\lift}[1]{#1^\dagger}
\newcommand{\lp}{$L^p\text{ }$}
\newcommand{\system}{Bunched Fuzz}

\DeclareMathOperator*{\arginf}{arginf}

\DeclareMathOperator*{\plist}{\mathit{p}\mathtt{list}}
\DeclareMathOperator*{\Fuzzlist}{\mathtt{list}}
\DeclareMathOperator*{\set}{\mathtt{set}}
\DeclareMathOperator*{\colim}{\mathrm{colim}}

\renewcommand{\!}{\:!}
\renewcommand{\,}{\:,}

\newcommand{\intrp}[1]{\llbracket #1 \rrbracket}

\bibliographystyle{ACM-Reference-Format}

\newtheorem{theorem}{Theorem}[section]
\newtheorem{lemma}[theorem]{Lemma}
\newtheorem{proposition}[theorem]{Proposition}
\theoremstyle{definition}
\newtheorem{definition}[theorem]{Definition}
\theoremstyle{remark}
\newtheorem*{remark}{Remark}

\ifjune
\setmainfont{OpenDyslexic}
\fi

\newcommand\doubleplus{+\kern-1.3ex+\kern0.8ex}
\newcommand\mdoubleplus{\ensuremath{\mathbin{+\mkern-10mu+}}}

\begin{document}

% \authorrunning{j. wunder et al.}

\title{\system{}: Sensitivity for Vector Metrics}
\author[1]{june wunder}
\author[1]{Arthur Azevedo de Amorim}
\author[2]{Patrick Baillot}
\author[1]{Marco Gaboardi}
\affil[1]{Boston University, USA}
\affil[2]{Univ. Lille, CNRS, Inria, Centrale Lille, UMR9189 CRIStAL, F-59000 Lille, France}

\maketitle

\begin{abstract}
  \emph{Program sensitivity} measures the distance between the outputs of a
  program when run on two related inputs.  This notion, which plays a key role
  in areas such as data privacy and optimization, has been the focus of several
  program analysis techniques introduced in recent years.  Among the most
  successful ones, we can highlight type systems inspired by linear logic, as
  pioneered by Reed and Pierce in the Fuzz programming language.  In Fuzz, each
  type is equipped with its own distance, and sensitivity analysis boils down to
  type checking.  In particular, Fuzz features two product types, corresponding
  to two different notions of distance: the \emph{tensor product} combines the
  distances of each component by \emph{adding} them, while the \emph{with
    product} takes their \emph{maximum}.

  In this work, we show that these products can be generalized to arbitrary
  \emph{$L^p$ distances}, metrics that are often used in privacy and
  optimization.  The original Fuzz products, tensor and with, correspond to the
  special cases $L^1$ and $L^\infty$.  To ease the handling of such products, we
  extend the Fuzz type system with \emph{bunches}---as in the logic of bunched
  implications---where the distances of different groups of variables can be
  combined using different $L^p$ distances.  We show that our extension can be
  used to reason about quantitative properties of probabilistic programs.
\end{abstract}

% Tentative Outline
% * Motivations
%
% * Related work
%
% * Background:
%        linear type systems for sensitivity and differential privacy
%        lp distances
%
% * Motivational introduction to the logic:
%     - relationship between fuzz and linear logic
%     - generalizing the product, and relationship to bunched logic
%     - how to deal with other measures on probabilities, hellinger and how we captured it in the lics paper (more on the calculus)
%
% * Full language and type system (June - describing the components) :
%     - term language (Fuzz)
%     - types: connectives, grammar of types
%     - bunches and typing judgements
%     - typing rules
%     - probability monad. f-divergences; example of Hellinger metric.
%
% * Semantics (denotational and (optional? operational?)):
%     - interpretation of types as (pre-)metric spaces
%     - soundness theorem: well-typed terms are interpreted as non-expansive functions
% 	  - and operational semantics (small-step?) ?
% 	  - subject reduction and adequacy properties ?
%
% * The language as a logic (non-probabilistic short section)
% 	  - more fine grained discussion of the relationship with linear logic and bunched logic.
% 	  - properties
%     - cut elimination
%
% * Examples ?
\section{Introduction}
When developing a data-driven application, we often need to analyze its
\emph{sensitivity}, or \emph{robustness}, a measure of how its outputs can be
affected by varying its inputs. For example, to analyze the privacy guarantees
of a program, we might consider what happens when we include the data of one
individual in its inputs~\cite{DBLP:conf/tcc/DworkMNS06}. When analyzing the
stability of a machine-learning algorithm, we might consider what happens when
we modify one sample in the training set~\cite{DBLP:journals/jmlr/BousquetE02}.

Such applications have spurred the development of several techniques to reason
about program sensitivity~\cite{DBLP:conf/icfp/ReedP10,
  DBLP:conf/sigsoft/ChaudhuriGLN11}. One successful approach is based on
linear-like~\cite{DBLP:journals/tcs/Girard87} type systems, as pioneered in Reed
and Pierce's \emph{Fuzz} language~\cite{DBLP:conf/icfp/ReedP10}.

The basic idea behind Fuzz is to use typing judgments to track the sensitivity
of a program with respect to each variable.  Each type comes equipped with a
notion of distance, and the typing rules explain how to update variable
sensitivities for each operation.  Because different distances yield different
sensitivity analyses, it is often useful to endow a set of values with different
distances, which leads to different Fuzz types.  For example, like linear logic,
Fuzz has two notions of products: the tensor product $\otimes$ and the Cartesian
product $\with$ (with).  The first one is equipped with the $L^1$ (or Manhattan)
distance, where the distance between two pairs is computed by \emph{adding} the
distances between the corresponding components.  The second one is equipped with
the $L^\infty$ (or Chebyshev) distance, where the component distances are
combined by taking their \emph{maximum}.

The reason for focusing on these two product types is that they play a key role
in differential privacy~\cite{DBLP:conf/tcc/DworkMNS06}, a rigorous notion of
privacy that was the motivating application behind the original Fuzz design.
However, we could also consider equipping pairs with more general \emph{$L^p$
  distances}, which interpolate between the $L^1$ and $L^\infty$ and are
extensively used in convex optimization~\cite{coBook04}, information
theory~\cite{CIT-004} and statistics~\cite{10.5555/64130}.  Indeed, other type
systems for differential privacy inspired by
Fuzz~\cite{DBLP:journals/pacmpl/NearDASGWSZSSS19} include types for vectors and
matrices under the $L^2$ distance, which are required to use the Gaussian
mechanism, one of the popular building blocks of differential privacy. Supporting more general \lp{} metrics would allow us to capture even more such building blocks~\cite{HardtT10,osti_10183971}, which would enable further exploration of the tradeoffs between differential privacy and accuracy.

In this paper, we extend these approaches and show that Fuzz can be enriched
with a family of tensor products $\otimes_p$, for $1\leq p \leq\infty$. These
tensor products are equipped with the $L^p$ distance, the original Fuzz products
$\otimes$ and $\with$ corresponding to the special cases $\otimes_1$ and
$\otimes_\infty$.  Moreover, each connective $\otimes_p$ is equipped with a
corresponding ``linear implication'' $\multimap_p$, unlike previous related
systems where such an implication only exists for $p = 1$.  Following prior
work~\cite{DBLP:conf/popl/AmorimGHKC17, DBLP:conf/lics/AmorimGHK19}, we give to
our extension a semantics in terms of non-expansive functions, except that the
presence of the implications $\multimap_p$ forces us to equip input and output
spaces with more general distances where the triangle inequality need not hold.

A novelty of our approach is that, to support the handling of such products, we
generalize Fuzz environments to \emph{bunches}, where each $L^p$ distance comes
with its own context former. Thus, we call our type system \system.
This system, inspired by languages derived from the logic of Bunched
Implications (BI)~\cite{DBLP:journals/bsl/OHearnP99}
(e.g.~\cite{DBLP:journals/jfp/OHearn03}), highlights differences between the
original Fuzz design and linear logic---for example, products distribute over
sums in Fuzz and BI, but not in linear logic.
While similar indexed products and function spaces have also appeared in the
literature, particularly in works on categorical
grammars~\cite{DBLP:series/lncs/6850}, here they are employed to reason about
vector distances and function sensitivity.

While designing \system, one of our goals was to use sensitivity to reason about
randomized algorithms.  In the original Fuzz, probability distributions are
equipped with the \emph{max divergence} distance, which can be used to state
differential privacy as a sensitivity
property~\cite{DBLP:conf/icfp/ReedP10}. Subsequent work has shown how Fuzz can
also accommodate other distances over probability
distributions~\cite{DBLP:conf/lics/AmorimGHK19}. However, such additions
required variants of \emph{graded monads}, which express the distance between
distributions using indices (i.e. grades) on the monadic type of distributions
over their \emph{results}, as opposed to sensitivity indices on their
\emph{inputs}, as it was done in the original Fuzz.  In particular, this makes
it more difficult to reason about distances separately with respect to each
input.  Thanks to bunches, however, we can incorporate these composition
principles more naturally. For example, \system{} can reason about the Hellinger
distance on distributions without the need for output grading, as was done in
prior systems~\cite{DBLP:conf/lics/AmorimGHK19}.

We will also see that, by allowing arbitrary \lp norms, we can generalize prior
case studies that were verified in Fuzz and obtain more general methods for
reasoning about differential privacy (\Cref{sect:examples}).  Consider the \lp
mechanism~\cite{osti_10183971,HardtT10}, which adds noise to the result of a
query whose sensitivity is measured in the \lp norm.  Since Fuzz does not have
the means to analyze such a sensitivity measure, it cannot implement the \lp
mechanism; Bunched Fuzz, however, can analyze such a measure, and thus allows
for a simple implementation in terms of the exponential mechanism.  Such a
mechanism, in turn, can be used to implement a variant of a gradient descent
algorithm that works under the \lp norm, generalizing an earlier version that
was biased towards the $L^1$ norm~\cite{DBLP:journals/pacmpl/Winograd-CortHR17}.
Summarizing, our contributions are:
\begin{itemize}
\item We introduce \system, an extension of Fuzz with types for general $L^p$
  distances: we add type constructors of the form $\otimes_p$ (for
  $1 \leq p \leq \infty$) for pairs under the $L^p$ distance along with
  constructors of the form $\multimap_p$ for their corresponding function
  spaces.  To support the handling of such types, we generalize Fuzz typing
  contexts to \emph{bunches of variable assignments}.

\item We give a denotational semantics for \system{} by interpreting programs as
  non-expansive functions over spaces built on $L^p$ distances.

\item We show that \system{} can support types for probability distributions for
  which the sampling primitive, which enables the composition of probabilistic
  programs, is compatible with $L^p$ distances.

\item We show a range of examples of programs that can be written in
  \system. Notably, we show that \system{} can support reasoning about the
  Hellinger distance without the need for grading, and we show generalizations of several examples from the differential privacy literature.

\end{itemize}
Check the full version of this paper for more technical
details~\cite{DBLP:journals/corr/abs-2202-01901}.
\section{Background}

\subsection{Metrics and Sensitivity}
\label{sec:metric-spaces}

To discuss sensitivity, we first need a notion of distance. We call
\textit{extended pseudosemimetric space} a pair $X=(|X|,d_X)$ consisting of a
carrier set $|X|$ and an \textit{extended pseudosemimetric}
$d_X : |X|^2\rightarrow \Dist$, which is a function satisfying, for all
$x, y \in |X|$:
\begin{enumerate}
\item  $d_X(x,x) = 0$,
\item  $d_X(x,y) = d_X(y,x)$.
\end{enumerate} This relaxes the
standard notion of metric space in a few respects. First, the distance between
two points can be infinite, hence the \emph{extended}. Second, different points
can be at distance zero, hence the \emph{pseudo}. Finally, we do not require the
\emph{triangular inequality}:
\begin{align} \label{triangulaarinequality}
 d_X(x,y) \leq d_X(x,z)+ d_X(z,y),
\end{align}
hence the \emph{semi}.
We focus on extended pseudosemimetrics because they support constructions that
true metrics do not. In particular, they make it possible to scale the distance
of a space by $\infty$ and enable more general function spaces.  However, to
simplify the terminology, we will drop the ``extended pseudosemi'' prefix in the
rest of the paper, and speak solely of metric spaces.  In some occasions, we
might speak of a \emph{proper metric space}, by which we mean a space where the
triangle inequality \emph{does} hold (but not necessarily the other two
requirements that are missing compared to the traditional definition of metric
space).

\aaa{We might end up adding $\square$ sensitivity to the type system to avoid
  issues with 0 and $\infty$.  If so, we should use $\Sens$ to denote the set of
  sensitivities and $\Dist$ to denote the set of distances.}

Given a function $f : X \to Y$ on metric spaces, we say that it is
$s$-sensitive, for $s$ in $\Dist$, if we have:
\[ \forall x_1, x_2 \in X, \; d_Y(f(x_1),f(x_2))\leq s \cdot d_X(x_1,x_2), \]
(We extend addition and multiplication to $\Dist$ by setting
%$\infty \bullet s = s \bullet \infty = \infty$.
$\infty \cdot s = s \cdot \infty = \infty$.)  We may also say that $f$ is
\emph{$s$-Lipschitz continuous}, though the traditional definition of Lipschitz
continuity does not include the case $s = \infty$. If a function is
$s$-sensitive, then it is also $s'$-sensitive for every $s' \geq s$.
% Functions that are $0$-sensitive are constants,
%
% AAA: This is not true because 0 * \infty = \infty
Every function of type $X \to Y$ is $\infty$-sensitive.
If a function is 1-sensitive, we also say that $f$ is non-expansive. We use
$X \multimap Y$ to denote the set of such non-expansive functions.  The identity
function is always non-expansive, and non-expansive functions are closed under
composition.  Thus, metric spaces and non-expansive functions form a category,
denoted $\Met$.

\subsection{Distances for Differential Privacy}
\label{sec:dp}

Among many applications, sensitivity is a useful notion because it provides a
convenient language for analyzing the privacy guarantees of
algorithms---specifically, in the framework of differential
privacy~\cite{DBLP:conf/tcc/DworkMNS06}.  Differential privacy is a technique
for protecting the privacy of individuals in a database by blurring the results
of a query to the database with random noise.  The noise is calibrated so that
each individual has a small influence on the probability of observing each
outcome (while ideally guaranteeing that the result of the query is still
useful).

Formally, suppose that we have some set of databases $\db$ equipped with a
metric.  This metric roughly measures how many rows differ between two
databases, though the exact definition can vary.  Let $f : \db \to DX$ be a
randomized database query, which maps a database to a discrete probability
distribution over the set of outcomes $X$.  We say that $f$ is
\emph{$\epsilon$-differentially private} if it is an $\epsilon$-sensitive
function from $\db$ to $DX$, where the set of distributions $DX$ is equipped
with the following distance, sometimes known as the \emph{max divergence}:
\begin{align}
  \label{eq:privacy-distance}
  \MD_X(\mu_1, \mu_2) & = \sum_{x \in X} \ln\left|\frac{\mu_1(x)}{\mu_2(x)}\right|.
\end{align}
(Here, we stipulate that $\ln|0/0| = 0$ and $\ln|p/0| = \ln|0/p| = \infty$ for
$p \neq 0$.)

To understand this definition, suppose that $D_1$ and $D_2$ are two databases at
distance 1---for instance, because they differ with respect to the data of a
single individual.  If $f$ is $\epsilon$-differentially private, the above
definition implies that $f(D_1)$ and $f(D_2)$ are at most $\epsilon$ apart. When
$\epsilon$ is large, the probabilities of each outcome in the result
distributions can vary widely.  This means that, by simply observing one output
of $f$, we might be able to guess with good confidence which of the databases
$D_1$ or $D_2$ was used to produce that output.  Conversely, if $\epsilon$ is
small, it is hard to tell which database was used because the output
probabilities will be close.  For this reason, it is common to view $\epsilon$
as a \emph{privacy loss}---the larger it is, the more privacy we are giving up
to reveal the output of $f$.

Besides providing a strong privacy guarantee, this formulation of closeness for
distributions provides two important properties.  First, we can \emph{compose}
differentially private algorithms without ruining their privacy guarantee.  Note
that $DX$ forms a monad, where the return and bind operations are given as
follows:
\begin{align}
  \label{eq:kleisli-distrReturn}
  \eta(x)
  & = y \mapsto
    \begin{cases}
      1 & \text{if $x = y$} \\
      0 & \text{otherwise}
    \end{cases} \\
  \label{eq:kleisli-distr}
  f^\dagger(\mu) & = y \mapsto \sum_{x \in X}\mu(x) \cdot f(x)(y).
\end{align}
Intuitively, the return operation produces a deterministic distribution, whereas
bind samples an element $x$ from $\mu$ and computes $f(x)$.  When composing
differentially private algorithms, their privacy loss can be soundly added
together:
\begin{theorem}
  \label{thm:sensitivity-of-kleisli}
  Suppose that $f : \db \to DX$ is $\epsilon_1$-differentially private and that
  $g : \db \to X \to DY$ is such that the mapping $\delta \to g(\delta)(x)$ is
  $\epsilon_2$-differentially private for every $x$.  Then the composite
  $h : \db \to DY$ defined as
  \begin{align*}
    h(\delta) & = g(\delta)^\dagger(f(\delta))
  \end{align*}
  is $(\epsilon_1 + \epsilon_2)$-differentially private.
\end{theorem}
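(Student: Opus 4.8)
The plan is to bound $\MD_Y(h(\delta_1), h(\delta_2))$ for two databases $\delta_1,\delta_2$ at distance $1$ (the general case follows by scaling, since $\MD$ is linear in the appropriate sense, or reduces to this case because $\db$-distances are integers up to the general bound). Unfolding the definitions, $h(\delta_i) = y \mapsto \sum_{x} f(\delta_i)(x)\, g(\delta_i)(x)(y)$, so I want to control, for each fixed $y \in Y$,
\[
  \ln\left|\frac{\sum_x f(\delta_1)(x)\, g(\delta_1)(x)(y)}{\sum_x f(\delta_2)(x)\, g(\delta_2)(x)(y)}\right|.
\]
The key idea is a termwise ratio bound: by the hypothesis that $\delta \mapsto g(\delta)(x)$ is $\epsilon_2$-differentially private and that $d_{\db}(\delta_1,\delta_2) \le 1$, for every $x$ and every $y$ we have $e^{-\epsilon_2} \le g(\delta_1)(x)(y)/g(\delta_2)(x)(y) \le e^{\epsilon_2}$; similarly $e^{-\epsilon_1} \le f(\delta_1)(x)/f(\delta_2)(x) \le e^{\epsilon_1}$ for every $x$. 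Hence each numerator term satisfies
\[
  f(\delta_1)(x)\, g(\delta_1)(x)(y) \le e^{\epsilon_1 + \epsilon_2}\, f(\delta_2)(x)\, g(\delta_2)(x)(y).
\]

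Summing this inequality over $x \in X$ and using that all terms are nonnegative gives
\[
  \sum_x f(\delta_1)(x)\, g(\delta_1)(x)(y) \;\le\; e^{\epsilon_1 + \epsilon_2} \sum_x f(\delta_2)(x)\, g(\delta_2)(x)(y),
\]
i.e. $h(\delta_1)(y) \le e^{\epsilon_1+\epsilon_2}\, h(\delta_2)(y)$, and by the symmetric argument $h(\delta_2)(y) \le e^{\epsilon_1+\epsilon_2}\, h(\delta_1)(y)$. Taking logarithms of the ratio yields $\bigl|\ln (h(\delta_1)(y)/h(\delta_2)(y))\bigr| \le \epsilon_1+\epsilon_2$ for each $y$, and summing over $y$ (which is how $\MD_Y$ is defined — note $\MD$ here is a sum, so strictly I should argue each term is bounded and the privacy definition built on it; more precisely, since $\MD$ as written in \eqref{eq:privacy-distance} is a sum, the right reading is that the relevant max-divergence is the supremum over $y$, giving the bound directly) shows $h$ is $(\epsilon_1+\epsilon_2)$-differentially private with respect to $\delta_1,\delta_2$. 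The general case of databases at arbitrary distance $k$ follows because $\epsilon$-sensitivity means the bound scales by $d_{\db}(\delta_1,\delta_2)$, and the termwise argument goes through verbatim with $\epsilon_i \cdot d_{\db}(\delta_1,\delta_2)$ in place of $\epsilon_i$.

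The main obstacle is the careful handling of the edge cases in the definition of $\MD$ — namely the conventions $\ln|0/0| = 0$ and $\ln|p/0| = \infty$ — when some $f(\delta_i)(x)$ or $g(\delta_i)(x)(y)$ vanishes. I need to check that if a denominator term $f(\delta_2)(x) g(\delta_2)(x)(y)$ is zero, then the privacy hypotheses force the corresponding numerator term to be zero as well (since a finite divergence bound prohibits a point from jumping from probability $0$ to positive probability), so the zero terms can be dropped from both sums without affecting the inequality; and the degenerate case where $h(\delta_2)(y) = 0$ entirely forces $h(\delta_1)(y) = 0$, keeping the ratio at the convention value $\ln|0/0| = 0$. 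Once this bookkeeping is dispatched, the core argument is just the termwise bound plus monotonicity of summation.
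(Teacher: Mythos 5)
The paper itself never proves this statement---it is quoted as standard background (the sequential composition theorem of differential privacy)---so there is no in-paper argument to compare against. Judged on its own, your proof is the standard one and is essentially correct: per-point ratio bounds for $f$ and for each $g(\cdot)(x)$, multiplication, summation over $x$, and the symmetric argument give $e^{-(\epsilon_1+\epsilon_2)\,d_{\db}(\delta_1,\delta_2)} \le h(\delta_1)(y)/h(\delta_2)(y) \le e^{(\epsilon_1+\epsilon_2)\,d_{\db}(\delta_1,\delta_2)}$ for every $y$, and the zero-probability bookkeeping you describe (a finite bound forces $\mu_2(y)=0 \Rightarrow \mu_1(y)=0$, and $0/0$ terms drop out) is exactly what is needed. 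Your closing remark is also the right way to handle arbitrary $d_{\db}(\delta_1,\delta_2)$, since in this paper ``$\epsilon$-differentially private'' literally means $\epsilon$-sensitive into $(DX,\MD)$: rerunning the argument with $\epsilon_i\cdot d_{\db}(\delta_1,\delta_2)$ is cleaner than the integrality/scaling hedge at the start, and it avoids any appeal to a triangle inequality, which the paper's semimetrics do not provide.

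The one point of friction is the one you flagged: as literally written, \Cref{eq:privacy-distance} sums signed logarithms over outcomes, under which a per-outcome bound does not bound the metric---and under which even your pointwise hypotheses on $f$ and $g$ would not follow, since a bound on a signed sum does not control individual terms. But that reading cannot be intended: a signed sum is antisymmetric, so it would violate the paper's own symmetry axiom for metrics. The intended distance, as in the original Fuzz line of work, is the max divergence $\max_y\left|\ln\frac{\mu_1(y)}{\mu_2(y)}\right|$, and under that reading both your use of the hypotheses and your conclusion from the per-$y$ bound are exactly right. So your proof is correct for the intended metric, and you were right to call out the definition rather than sum your per-$y$ bounds.
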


The other reason why the privacy metric is useful is that it supports many
building blocks for differential privacy.  Of particular interest is the
\emph{Laplace mechanism}, which blurs a numeric result with noise drawn from the
two-sided Laplace distribution.  If $x \in \R$, let $\mathcal{L}(x)$ be the
distribution with density\footnote{We use here a Laplace distribution with scale
  1.}  $y \mapsto \frac{1}{2}e^{-|x-y|}$.
\begin{theorem}
  The Laplace mechanism $\mathcal{L}$ is a non-expansive function of type
  $\R \to D\R$.\footnote{The definitions do not quite match up our setting,
    since $\mathcal{L}$ is a continuous, and not discrete distribution.  The
    result can be put on firm footing by working with a discretized version of
    the Laplace distribution~\cite{DworkR14}.}
\end{theorem}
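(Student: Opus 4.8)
The plan is to unfold the definition of non-expansiveness and reduce the statement to a pointwise bound on the log-density-ratio of two Laplace distributions, which is then an instance of the reverse triangle inequality on $\R$.

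Concretely, $\R$ carries its usual metric $d_\R(x_1,x_2) = |x_1 - x_2|$, so by definition we must show $\MD_\R(\mathcal{L}(x_1), \mathcal{L}(x_2)) \le |x_1 - x_2|$ for all $x_1, x_2 \in \R$. Let $p_i(y) = \tfrac12 e^{-|x_i - y|}$ be the density of $\mathcal{L}(x_i)$. The first step is the elementary identity
\[
  \ln \frac{p_1(y)}{p_2(y)} \;=\; |x_2 - y| - |x_1 - y|,
\]
valid for every $y$ (the normalizing factor $\tfrac12$ cancels). The second step is to bound the right-hand side in absolute value using the reverse triangle inequality, $\bigl|\,|x_2 - y| - |x_1 - y|\,\bigr| \le |x_2 - x_1|$, uniformly in $y$. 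The third step is to feed this pointwise bound into the definition of $\MD_\R$ in \eqref{eq:privacy-distance}, so that aggregating over outcomes still yields at most $|x_1 - x_2|$; this gives $1$-sensitivity, i.e. non-expansiveness. The bound is in fact attained by any $y$ lying outside the interval with endpoints $x_1$ and $x_2$, so $\mathcal{L}$ is even an isometry onto its image, though we only need the inequality.

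The one genuine subtlety is the mismatch flagged in the footnote: the space $D\R$ here is built from discrete distributions, whereas $\mathcal{L}(x)$ is continuous. The fix is to run the very same argument on a discretized Laplace $\tilde{\mathcal{L}}$ supported on a lattice $\delta\mathbb{Z}$, with $\tilde{\mathcal{L}}(x)(k) = e^{-|x-k|}/Z$. The step that needs care — and which I expect to be the main obstacle — is checking that the normalizing constant $Z = \sum_{j \in \delta\mathbb{Z}} e^{-|x-j|}$ is independent of $x$ when $x \in \delta\mathbb{Z}$ (it is, since $j \mapsto j - x$ permutes the lattice), so that the ratio $\tilde{\mathcal{L}}(x_1)(k)/\tilde{\mathcal{L}}(x_2)(k)$ reduces to $e^{|x_2-k|-|x_1-k|}$ exactly as above and the reverse-triangle-inequality bound goes through verbatim. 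Inputs off the lattice either require a preliminary rounding step or a continuous reformulation of the distribution monad; in either case the quantitative content is unchanged, and nothing beyond this bookkeeping is needed.
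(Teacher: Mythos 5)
The paper does not actually prove this theorem---it imports it as a standard fact about the Laplace mechanism, with the footnote deferring the continuous-vs-discrete mismatch to the discretized Laplace distribution of Dwork and Roth---so there is no in-paper argument to compare against. Your argument is the standard one and is essentially correct: the normalizing constants cancel, the log-density ratio is $|x_2-y|-|x_1-y|$, the reverse triangle inequality bounds this uniformly by $|x_1-x_2|$, and for the discretized mechanism the translation invariance of the lattice (for on-lattice centers) makes the normalizer independent of $x$, so the same bound carries over; handling off-lattice inputs by rounding is exactly the bookkeeping the footnote alludes to. One point you should make explicit: as printed, \Cref{eq:privacy-distance} defines $\MD$ as a \emph{sum} over outcomes of absolute log-ratios, and under that literal reading your aggregation step does not go through---indeed your own observation that the bound is attained at every $y$ outside the interval between $x_1$ and $x_2$ would make that sum infinite for the discretized mechanism. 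The statement, and your step 3, are correct for the intended metric, namely the max divergence of the Fuzz literature, i.e.\ the \emph{supremum} over outcomes of $\left|\ln\frac{\mu_1(y)}{\mu_2(y)}\right|$; the summation sign in the paper is best read as a typo, but your proof should state that it is taking a supremum of the pointwise bound rather than summing it.
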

Thus, to define an $\epsilon$-differentially private numeric query on a
database, it suffices to define an $\epsilon$-sensitive, deterministic numeric
query, and then blur its result with Laplace noise.  Differential privacy
follows from the composition principles for sensitivity. This reasoning is
justified by the fact that the Laplace mechanism adds noise proportional to the
sensitivity of the numeric query in $L^1$ distance.

\subsection{Sensitivity as a Resource}

Because differential privacy is a sensitivity property, techniques for analyzing
the sensitivity of programs can also be used to analyze their privacy
guarantees.  One particularly successful approach in this space is rooted in
type systems inspired by linear logic, as pioneered by Reed and Pierce in the
Fuzz programming language~\cite{DBLP:conf/uss/HaeberlenPN11,
  DBLP:conf/icfp/ReedP10}.  At its core, Fuzz is just a type system for tracking
sensitivity.  Typing judgments are similar to common functional programming
languages, but variable declarations are of the form $x_i :_{r_i} \tau_i$:
\begin{mathpar}
  x_1 :_{r_1} \tau_1, \ldots, x_n :_{r_n} \tau_n \vdash e : \sigma.
\end{mathpar}  The
annotations $r_i \in \Dist$ are \emph{sensitivity indices}, whose purpose is to
track the effect that changes to the program input can have on its output: if we
have two substitutions $\gamma$ and $\gamma'$ for the variables $x_i$, then the
\emph{metric preservation} property of the Fuzz type system guarantees that
\begin{align}
  \label{eq:metric-preservation}
  d(e[\gamma/\vec{x}], e[\gamma'/\vec{x}])
  & \leq \sum_i r_i \cdot d(\gamma(x_i), \gamma'(x_i)),
\end{align}
where the metrics $d$ are computed based on the type of each expression and
value.  This means that we can bound the distance on the results of the two runs
of $e$ by adding up the distances of the inputs scaled by their corresponding
sensitivities.  When this bound is finite, the definition of the metrics
guarantees that the two runs have the same termination behavior.  When
$r_i = \infty$, the above inequality provides no guarantees if the value of
$x_i$ varies.

Fuzz includes data types commonly found in functional programming languages,
such as numbers, products, tagged unions, recursive types and functions.  The
typing rules of the language explain how the sensitivities of each variable must
be updated to compute each operation.  The simplest typing rule says that, in
order to use a variable, its declared sensitivity must be greater than 1:
\begin{mathpar}
  \inferrule
  { r \geq 1 }
  { \Gamma, x :_r \tau, \Delta \vdash x : \tau }
\end{mathpar}
As a more interesting example, to construct a pair $(e_1,e_2)$, the following
rule says that we need to add the sensitivities of the corresponding contexts:
\begin{mathpar}
  \inferrule
  { \Gamma_1 \vdash e_1 : \tau_1 \\ \Gamma_2 \vdash e_2 : \tau_2 }
  { \Gamma_1 + \Gamma_2 \vdash (e_1,e_2) : \tau_1 \otimes \tau_2 }.
\end{mathpar}
This behavior is a result of the distance of the tensor type $\otimes$: the
distance between two pairs in $\tau_1 \otimes \tau_2$ is the result of adding
the distances between the first and second components; therefore, the
sensitivity of each variable for the entire expression is the sum of the
sensitivities for each component.  In this sense, sensitivities in Fuzz behave
like a resource that must be distributed across all variable uses in a
program. For the sake of analogy, we might compare this treatment to how
fractional permissions work in separation logic: the predicate $l \mapsto_q x$
indicates that we own a fraction $q \in [0,1]$ of a resource stating that $l$
points to $x$. If $q = q_1 + q_2$, we can split this predicate as
$l \mapsto_{q_1} x * l \mapsto_{q_2} x$, allowing us to distribute this resource
between different threads.

The distance on $\otimes$ corresponds to the sum in the upper bound in the
statement of metric preservation (\Cref{eq:metric-preservation}).  This distance
is useful because it is the one that yields good composition principles for
differential privacy.  This can be seen in the typing rule for sampling from a
probabilistic distribution:
\begin{mathpar}
  \inferrule
  { \Gamma \vdash e_1 : \bigcirc \tau \\
    \Delta, x :_r \tau \vdash e_2 : \bigcirc \sigma }
  { \Gamma + \Delta \vdash \mLet{x}{e_1}{e_2} : \bigcirc \sigma }
\end{mathpar}
Here, $\bigcirc \tau$ denotes the type of probability distributions over values
of type $\tau$.  This operation samples a value $x$ from the distribution $e_1$
and uses this value to compute the distribution $e_2$.  We can justify the
soundness of this rule by reducing it to \Cref{thm:sensitivity-of-kleisli}: the
addition on contexts corresponds to the fact that the privacy loss of a program
degrades linearly under composition.

Besides the tensor product $\otimes$, Fuzz also features a \emph{with product}
$\with$, where the distances between components are combined by taking their
maximum.  This leads to a different typing rule for $\with$ pairs, which does
not add up the sensitivities:
\begin{mathpar}
  \inferrule
  { \Gamma \vdash e_1 : \tau_1 \\ \Gamma \vdash e_2 : \tau_2 }
  { \Gamma \vdash (e_1, e_2) : \tau_1 \with \tau_2 }
\end{mathpar}
If we compare these rules for pairs, we see a clear analogy with linear logic:
$\otimes$ requires us to combine contexts, whereas $\with$ allows us to share
them.
Fuzz's elimination rules for products continue to borrow from linear logic:
deconstructing a tensor gives both elements but deconstructing a with product
returns only one.
\begin{mathpar}
  \inferrule*[]{
    \Gamma \vdash e : \tau_1 \otimes \tau_2
    \and
    \Delta, x :_r \tau_1, y :_r \tau_2 \vdash e' : \tau'
  }{
    \Delta + r\Gamma \vdash \LetPairIn{x}{y}{e}{e'} : \tau'
  }
  \and
  \inferrule*[]{\Gamma \vdash e : \tau_1 \with \tau_2}{\Gamma \vdash \pi_i\;e : \tau_i}
\end{mathpar}
This partly explains why the connectives' distances involve addition and
maximum. When using a tensor product, both elements can affect how much the
output can vary, so both elements must be considered. (Note that Fuzz is an
affine type system: we are free to ignore one of the product's components, and
thus we can write projection functions out of a tensor product.)  When
projecting out of a with product, only one of the elements will affect the
program's output, so we only need to consider the component that yields the
maximum distance.

Fuzz uses the $!_s$ type for managing sensitivities. Intuitively, $!_s \tau$
behaves like $\tau$, but with the distances scaled by $s$; when $s = \infty$,
this means that different points are infinitely apart.  The introduction rule
scales the sensitivities of variables in the environment. This can be used in
conjunction with the elimination rule to propagate the sensitivity out of the
type and into the environment.
\begin{mathpar}
  \inferrule*[]{\Gamma \vdash e : \tau}{s\Gamma \vdash \!e : \!_s \tau}
  \and
  \inferrule*[]{
    \Gamma \vdash e : \!_s\tau
    \and
    \Delta, x :_{rs} \tau \vdash e' : \tau'
  }{
    \Delta + r\Gamma \vdash \LetBangIn{x}{e}{e'} : \tau'
  }
\end{mathpar}

Finally, the rules for the linear implication $\multimap$ are similar to the
ones from linear logic, but adjusted to account for sensitivities.
\begin{mathpar}
  \inferrule*[]
  {\Gamma \, x:_1 \tau \vdash e: \sigma}
  {\Gamma \vdash \lambda x. e : \tau \multimap \sigma }
  \and
  \inferrule*[]
  {\Gamma \vdash e : \tau\multimap\sigma \and \Delta \vdash e': \tau}
  {\Gamma + \Delta \vdash e\;e':\sigma}
\end{mathpar}
To introduce the linear implication $\multimap$, the bound variable needs to
have sensitivity 1. When eliminating $\multimap$, the environments need to be
added. In categorical language, addition, which is also present in the metric
for $\otimes$, is connected to the fact that there is an adjunction between the
functors $X \otimes (-)$ and $X \multimap (-)$.

% \aaa{Should we include other typing rules for Fuzz here? E.g. product
%   elimination, lolli, bang...}

% \pb{I would vote for including also product elimination, lolli
%   introduction and elimination, because then it can be useful for the
%   reader to compare our generalized rules to those.}

% \aaa{Is there anything that we can say about Jazz that is not covered by Duet?
%   The vector types in Jazz, as far as I can tell, are simply inherited from
%   Duet, so I don't think we need to mention anything here.}

% \pb{I propose to include here the paragraph of definitions for
%   distances and metric spaces. We can move it if you see a place which
% is more appropriate.}

\subsection{\lp distances}
\label{sec:lp-distances}

The $L^1$ and $L^\infty$ distances are instances of a more general family of
$L^p$ distances (for $p\in \R_\infty^{\geq 1}$).\footnote{The $L^p$ distances
  can be defined with $p\geq 0$ but for simplicity of our treatment we will only
  consider $p\geq1 $.}  Given a sequence of distances
$\vec{x} = (x_1,\dots,x_n) \in (\Dist)^n$, we first define the $L^p$
\emph{pseudonorm}\footnote{``pseudo-'' because it can be infinite.} as follows:
\begin{align*}
  ||\vec{x}||_p & = (\Sigma_{i=1}^{n} x_i^p)^{1/p}.
\end{align*}
This definition makes sense
whenever the distances $x_i$ and $p$ are finite.  When $p = \infty$, we define
the right-hand side as the limit $\max_{i=1}^n x_i$.  When $x_i = \infty$ for
some $i$, we define the right-hand side as $\infty$.
We have the following classical properties:
\begin{proposition}[H{\"o}lder inequality]\label{prop:Holder}
  For all $p, q\geq 1$ such that $\frac{1}{p}+\frac{1}{q}=1$, and for all
  $\vec{x}$, $\vec{y} \in (\Dist)^n$, we have:
  $ \Sigma_{ i=1}^{n} x_i y_i\leq ||\vec{x}||_p   ||\vec{y}||_q $.\\
  For $p=2$, $q=2$, this is the Cauchy-Schwarz inequality:
  $ \Sigma_{i=1}^{n} x_i y_i\leq ||\vec{x}||_2   ||\vec{y}||_2 $.
\end{proposition}
\begin{proposition}\label{prop:relations_Lp_norms}
  For $1 \leq p \leq q$ we have, for $\vec{x} \in (\Dist)^n$:
\begin{align}
  || \vec{x}||_q & \leq || \vec{x}||_p  \label{prop:relations_Lp_norms1}\\
  || \vec{x}||_p & \leq n^{\frac{1}{p}-\frac{1}{q}}|| \vec{x}||_q  \label{prop:relations_Lp_norms2}\\
  || \vec{x}||_2 & \leq || \vec{x}||_1 \leq \sqrt{n} \; ||\vec{x}||_2 \label{prop:relations_Lp_norms3}
\end{align}
\end{proposition}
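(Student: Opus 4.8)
The three inequalities are classical, so the plan is to prove \eqref{prop:relations_Lp_norms1} and \eqref{prop:relations_Lp_norms2} and then observe that \eqref{prop:relations_Lp_norms3} is just the instance $p=1$, $q=2$. In each case I would first dispose of the degenerate situations forced by the extended-nonnegative-real setting, and then run the short analytic argument on the remaining case $1 \le p < q < \infty$ with all coordinates finite.

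For \eqref{prop:relations_Lp_norms1}: if $p = q$ the claim is an equality; if some $x_i = \infty$ or $\|\vec x\|_p = \infty$ both sides are $\infty$; if $q = \infty$ and $p$ is finite, then for the index $k$ attaining the maximum we have $x_k^p \le \sum_j x_j^p$, hence $\|\vec x\|_\infty = x_k \le \|\vec x\|_p$; and if $\|\vec x\|_p = 0$ then $\vec x = \vec 0$ and both sides vanish. In the remaining case, set $y_i = x_i / \|\vec x\|_p$, so that $\sum_i y_i^p = 1$ and therefore $0 \le y_i \le 1$ for every $i$. Since $q \ge p$, this gives $y_i^q \le y_i^p$, so $\sum_i y_i^q \le \sum_i y_i^p = 1$, i.e. $\|\vec y\|_q \le 1$; multiplying through by $\|\vec x\|_p$ yields $\|\vec x\|_q \le \|\vec x\|_p$.

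For \eqref{prop:relations_Lp_norms2}: again $p = q$ makes the coefficient $n^0 = 1$ and the claim an equality; if $\|\vec x\|_q = \infty$ (in particular if some $x_i = \infty$) the right-hand side is $\infty$; and if $q = \infty$ with $p$ finite then $\sum_i x_i^p \le n\,(\max_i x_i)^p$ gives $\|\vec x\|_p \le n^{1/p}\|\vec x\|_\infty$, which is the stated bound. In the remaining case $1 \le p < q < \infty$, I would apply the H{\"o}lder inequality (\Cref{prop:Holder}) to the vectors $\vec a = (x_1^p, \dots, x_n^p)$ and $\vec b = (1, \dots, 1)$ with the conjugate pair of exponents $s = q/p$ and $t = q/(q-p)$, which indeed satisfy $s, t \ge 1$ and $\tfrac1s + \tfrac1t = 1$. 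This yields $\sum_i x_i^p = \sum_i a_i b_i \le \|\vec a\|_s\,\|\vec b\|_t = \bigl(\sum_i x_i^{q}\bigr)^{p/q}\, n^{(q-p)/q}$. Taking $p$-th roots and using $(q-p)/(pq) = \tfrac1p - \tfrac1q$ gives exactly $\|\vec x\|_p \le n^{\frac1p - \frac1q}\|\vec x\|_q$. Finally, \eqref{prop:relations_Lp_norms3} follows by specializing: $\|\vec x\|_2 \le \|\vec x\|_1$ is \eqref{prop:relations_Lp_norms1} with $p=1$, $q=2$, and $\|\vec x\|_1 \le \sqrt n\,\|\vec x\|_2$ is \eqref{prop:relations_Lp_norms2} with $p=1$, $q=2$, since $n^{1-1/2} = \sqrt n$.

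The analytic cores (rescaling for \eqref{prop:relations_Lp_norms1}, one application of H{\"o}lder for \eqref{prop:relations_Lp_norms2}) are entirely routine; the only part requiring genuine care is the case analysis on $\infty$-valued coordinates and on $p$ or $q$ equal to $\infty$, which must be peeled off first so that the finite-exponent manipulations are justified. That bookkeeping is the main—and only—obstacle.
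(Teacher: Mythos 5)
Your proof is correct. Note that the paper itself gives no argument for this proposition: it is introduced with ``We have the following classical properties'' and left unproved, so there is no authorial proof to compare against. Your route is the standard textbook one and is exactly what the omitted proof would be: for \eqref{prop:relations_Lp_norms1}, normalize by $\|\vec x\|_p$ and use $y_i^q \le y_i^p$ for $0 \le y_i \le 1$; for \eqref{prop:relations_Lp_norms2}, apply H{\"o}lder (\Cref{prop:Holder}) to $(x_1^p,\dots,x_n^p)$ and $(1,\dots,1)$ with conjugate exponents $q/p$ and $q/(q-p)$; and \eqref{prop:relations_Lp_norms3} is the specialization $p=1$, $q=2$ (where your H{\"o}lder step degenerates to Cauchy--Schwarz, matching the paper's remark in \Cref{prop:Holder}). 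Your explicit peeling of the degenerate cases ($p=q$, $q=\infty$, infinite coordinates, zero norm) is the only part genuinely specific to the extended setting $\Dist$ used here, and you handle it correctly; the checks $q/p, q/(q-p) \ge 1$ and $(q-p)/(pq) = \tfrac1p - \tfrac1q$ are all in order.
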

The $L^p$ pseudonorms yield distances on tuples.  More precisely, suppose that
$(X_i)_{1\leq i \leq n}$ are metric spaces.  The following defines a metric on
$X=X_1 \times \dots \times X_n$:
$$d_p( \vec{x}, \vec{x}')=||(d _{X_1} (x_1,x'_1),\dots,d _{X_n} (x_n,x'_n))||_p  $$
\begin{proposition}\label{prop:relations_dp}
For $1 \leq p \leq q$ we have, for $\vec{x}, \vec{x'} \in X_1 \times \dots \times X_n$:
\begin{align}
d_q(\vec{x}, \vec{x'} )  & \leq d_p(\vec{x}, \vec{x'} ) \leq
n^{\frac{1}{p}-\frac{1}{q}}  d_q(\vec{x}, \vec{x'} )  \label{prop:relations_dp1}\\
d_2(\vec{x}, \vec{x'} )  & \leq d_1(\vec{x}, \vec{x'} ) \leq
\sqrt{n} \;  d_2(\vec{x}, \vec{x'} ) \label{prop:relations_dp2}
\end{align}
\end{proposition}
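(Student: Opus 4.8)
The plan is to derive \Cref{prop:relations_dp} directly from \Cref{prop:relations_Lp_norms} by instantiating the latter at the specific nonnegative vector of componentwise distances. Concretely, given $\vec{x},\vec{x'} \in X_1 \times \dots \times X_n$, set $z_i = d_{X_i}(x_i, x'_i) \in \Dist$ for $1 \leq i \leq n$, so that $\vec{z} = (z_1,\dots,z_n) \in (\Dist)^n$. By the definition of $d_p$ just above the proposition, $d_p(\vec{x},\vec{x'}) = \|\vec{z}\|_p$ and $d_q(\vec{x},\vec{x'}) = \|\vec{z}\|_q$. Then \eqref{prop:relations_dp1} is literally \eqref{prop:relations_Lp_norms1} together with \eqref{prop:relations_Lp_norms2} applied to $\vec{z}$, and \eqref{prop:relations_dp2} is the special case $p=2$, $q=1$ — that is, \eqref{prop:relations_Lp_norms3} applied to $\vec{z}$. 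So the proof is essentially a one-line substitution once the chain of equalities is spelled out.

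The only thing worth checking carefully is that the substitution is legitimate in the extended setting: the $z_i$ may be $\infty$, and $p$ or $q$ may be $\infty$. But \Cref{prop:relations_Lp_norms} is already stated for $\vec{x} \in (\Dist)^n$ and for $1 \leq p \leq q$ in $\R_\infty^{\geq 1}$, and the conventions fixed when the $L^p$ pseudonorm was defined (namely $\|\vec{x}\|_\infty = \max_i x_i$, and $\|\vec{x}\|_p = \infty$as soon as some $x_i = \infty$) make all the inequalities hold in these corner cases as well. One should also note that $d_p$ as defined is indeed a (pseudosemi)metric — the two required axioms, $d_p(\vec{x},\vec{x}) = 0$ and symmetry, follow immediately from the corresponding axioms for each $d_{X_i}$ and the fact that $\|(0,\dots,0)\|_p = 0$ and that $\|\cdot\|_p$ depends only on the multiset of its arguments — but this is not part of the claimed statement and can be relegated to a remark.

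I do not anticipate a genuine obstacle here: the proposition is a corollary of \Cref{prop:relations_Lp_norms} obtained by transporting the norm inequalities along the defining equation of $d_p$. If there is any subtlety, it is purely bookkeeping about the order of the two bounds in \eqref{prop:relations_dp1} (the left inequality uses \eqref{prop:relations_Lp_norms1}, the right uses \eqref{prop:relations_Lp_norms2}) and making sure the constant $n^{\frac1p - \frac1q}$ is carried over verbatim. I would therefore present the proof as: fix $\vec{x},\vec{x'}$; define $\vec{z}$; observe $d_p(\vec{x},\vec{x'}) = \|\vec{z}\|_p$; invoke \Cref{prop:relations_Lp_norms} on $\vec{z}$; and conclude, with the second display being the instance $q = 1$, $p = 2$ (reading \eqref{prop:relations_Lp_norms3} as the pair of bounds $\|\vec{z}\|_2 \leq \|\vec{z}\|_1 \leq \sqrt{n}\,\|\vec{z}\|_2$).
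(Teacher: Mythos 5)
Your proof is correct and matches the paper's treatment: the paper states \Cref{prop:relations_dp} without an explicit proof, treating it as an immediate consequence of \Cref{prop:relations_Lp_norms} applied to the vector of componentwise distances $z_i = d_{X_i}(x_i,x'_i)$, which is precisely your argument (including the observation that \eqref{prop:relations_dp2} is the $p=1$, $q=2$ instance recorded as \eqref{prop:relations_Lp_norms3}). The only nitpick is your passing phrase ``the special case $p=2$, $q=1$,'' which swaps the roles of $p$ and $q$ relative to the hypothesis $p \leq q$ of \Cref{prop:relations_Lp_norms}; your final reading of it as the pair of bounds $\|\vec{z}\|_2 \leq \|\vec{z}\|_1 \leq \sqrt{n}\,\|\vec{z}\|_2$ is the correct one.
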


\section{\system{}: Programming with \lp Distances}
As we discussed earlier, the $L^1$ distance is not the only distance on products
with useful applications.  In the context of differential privacy, for example,
the $L^2$ distance is used to measure the sensitivity of queries when employing
the Gaussian mechanism, a method for private data release that sanitizes data by
adding Gaussian noise instead of Laplacian noise.\footnote{%
  Technically, the Gaussian mechanism is used to achieve a relaxation of
  differential privacy known as \emph{approximate}, or
  $(\epsilon,\delta)$-differential privacy.  Though this notion cannot be
  analyzed directly by classical verification techniques for differential
  privacy, it can be handled by recent extensions of
  Fuzz~\cite{DBLP:conf/lics/AmorimGHK19,DBLP:journals/pacmpl/NearDASGWSZSSS19}.}

It is possible to extend a Fuzz-like analysis with $L^2$ distances by adding
primitive types and combinators for vectors.  This was done, for instance, in
the Duet language~\cite{DBLP:journals/pacmpl/NearDASGWSZSSS19}, which provides
the Gaussian mechanism as one of the primitives for differential privacy.  Such
an extension can help verify a wide class of algorithms that manipulate vectors
in a homogeneous fashion, but it makes it awkward to express programs that
require finer grained access to vectors.

To illustrate this point, suppose that we have a non-expansive function
$f : \R^2 \to \R$, where the domain carries the $L^2$ metric.  Consider the
mapping
\[ g(x, y) = f(2x,y) + f(2y,x). \] How would we analyze the sensitivity of $g$?
We cannot translate such a program directly into a system like Duet, since it
does not allow us to manipulate $L^2$ vectors at the level of individual
components. However, we could rewrite the definition of $g$ to use matrix
operations, which could be easily incorporated in a variant of Duet.
Specifically, consider the following definition:
\[ g(\vec{x}) =
  f\left(
    \begin{bmatrix}
      2 & 0 \\
      0 & 1
    \end{bmatrix} \vec{x}
    \right)
  +
  f\left(
    \begin{bmatrix}
      0 & 2 \\
      1 & 0
    \end{bmatrix}
    \vec{x} \right).\] The $L^2$ sensitivity of a linear transformation
$\vec{x} \mapsto M \vec{x}$ can be easily computed if we know the coefficients
of the matrix $M$. Note that
\begin{align*}
  d(M\vec{x},M\vec y)
  & = ||M\vec x - M\vec y||_2
  = ||M(\vec x - \vec y)||_2
  = \frac{||M(\vec x - \vec y)||_2}{||\vec x - \vec y||_2}||\vec x - \vec
    y||_2 \\
  & \leq
    \left(\sup_{\vec z} \frac{||M\vec z||_2}{||\vec z||_2}\right)d(\vec x,\vec y).
\end{align*}
The quantity $\sup_{\vec{z}} ||M\vec{z}||_2 / ||\vec{z}||_2$, known as the
\emph{operator norm} of $M$, gives the precise sensitivity of the above
operation, and can be computed by standard algorithms from linear algebra.  In
the case of $g$, both matrices have a norm of $2$.  This means that we can
analyze the sensitivity of $g$ compositionally, as in Fuzz: addition is
$1$-sensitive in each variable, so we just have to sum the sensitivities of
$\vec{x}$ in each argument, yielding a combined sensitivity of $4$.
Unfortunately, this method of combining the sensitivities of each argument is
too coarse when reasoning with $L^p$ distances, which leads to an imprecise
analysis. To obtain a better bound, we can reason informally as follows.  First,
take
\[ M =
  \begin{bmatrix}
    2 & 0 \\
    0 & 1 \\
    0 & 2 \\
    1 & 0
  \end{bmatrix}. \]
  We can compute the operator norm of $M$ directly:
\begin{align*}
  ||M|| = \sup_{x,y}\frac{\sqrt{2^2x^2 + y^2 + 2^2y^2 + x^2}}{\sqrt{x^2+y^2}}
  = \sup_{x,y} \frac{\sqrt{5(x^2+y^2)}}{\sqrt{x^2+y^2}}
  = \sqrt 5,
\end{align*}
which implies that $M$ is a $\sqrt 5$-sensitive function of type
$\R^2 \to \R^4 \cong \R^2 \times \R^2$.  Moreover, thanks to
\Cref{prop:relations_dp}, we can view addition $(+)$ as a $\sqrt 2$-sensitive
operator of type $\R^2 \to \R$, since
\[ d_{\R}(x_1+x_2,y_1+y_2) \leq d_{\R}(x_1-y_1) + d_{\R}(x_2-y_2) = d_1(\vec
  x,\vec y) \leq \sqrt 2 d_2(\vec x,\vec y). \]
Thus, by rewriting the definition of $g$ as
\[ (+) \circ (f \times f) \circ M, \] where
$f \times f : \R^4 \cong \R^2 \times \R^2 \to \R \times \R$ denotes the
application of $f$ in parallel, we can compute the sensitivity of $g$ by
multiplying the sensitivity of each stage, as
$\sqrt 2 \times 1 \times \sqrt 5 = \sqrt{10} \approx 3.16$, which is strictly
better than the previous bound.

Naturally, we could further extend Fuzz or Duet with primitives for
internalizing this reasoning, but it would be preferable to use the original
definition of $g$ and automate the low-level reasoning about distances.  In this
section, we demonstrate how this can be done via \system{}, a language that
refines Fuzz by incorporating more general distances in its typing
environments. Rather assuming that input distances are always combined by
addition, or the $L^1$ distance, \system{} allows them to be combined with
arbitrary $L^p$ distances.  This refinement allows us to analyze different
components of a vector as individual variables, but also to split the
sensitivity of these variables while accounting for their corresponding vector
distances.  In the remaining of this section, we present the syntax and type
system of \system{}, highlighting the main differences with respect to the
original Fuzz design.  Later, in \Cref{sec:semantics}, we will give a semantics
to this language in terms of metric spaces, following prior
work~\cite{DBLP:conf/lics/AmorimGHK19}.

\paragraph{Types and Terms}

\Cref{fig:types-and-terms} presents the grammar of types and the main term
formers of \system{}. They are similar to their Fuzz counterparts; in
particular, there are types for real numbers, products, sums, functions, and a
unit type.  The main novelty is in the product type $\tau \otimes_p \sigma$,
which combines the metrics of each component using the $L^p$ distance
(cf. \Cref{sec:lp-distances}).  The types $\tau \otimes_1 \sigma$ and
$\tau \otimes_\infty \sigma$ subsume the types $\tau \otimes \sigma$ and
$\tau \with \sigma$ in the original Fuzz language.  Note that there is no term
constructor or destructor for the Fuzz type $\with$, since it is subsumed by
$\otimes_\infty$. The type $\tau \multimap_p \sigma$ represents non-expansive
functions endowed with a metric that is compatible with the $L^p$ metric, in
that currying works (cf. \Cref{sect:examples}). We will sometimes write
$\otimes$ for $\otimes_1$ and $\multimap$ for $\multimap_1$.

Another novelty with respect to Fuzz is that there are two constructors for
probability distributions, $\bigcirc_P$ and $\bigcirc_H$.  The first one carries
the original Fuzz privacy metric, while the second one carries the Hellinger
distance. As we will see shortly, the composition principle for the Hellinger
distance uses a contraction operator for the $L^2$ distance, which was not
available in the original Fuzz design.  Both distribution types feature term
constructors $\mathbf{mlet}$ and $\mathbf{return}$ for sampling from a
distribution and for injecting values into distributions.  To simplify the
notation, we do not use separate versions of these term formers for each type.

\begin{figure}[t]
  \centering
  \begin{align*}
    \tau, \sigma, \rho & ::= 1
      \mid \R
      \mid \!_s\tau
      \mid \bigcirc_P \tau
      \mid \bigcirc_H \tau
      \mid \tau \multimap_p \sigma
      \mid \tau \otimes_p \sigma
      \mid \tau \oplus \sigma
    & (p \in \R^{\ge1}_{\infty} , s \in \Sens) \\
    e & ::= \:x \mid r \in \R \mid () \mid \lambda x . e \mid e \: e
            \mid (e, e) \mid \LetPairIn x y {e} {e} \\
    & \mid \inj_i e \mid \left(\CaseOf{e}{x}{e}{y}{e}\right)
      \mid \!e \mid \LetBangIn{x}{e}{e} \\
    &\mid\mLet {x}{e}{e} \mid\return e \mid \cdots
  \end{align*}
  \caption{Types and terms in \system{}}
  \label{fig:types-and-terms}
\end{figure}

\paragraph{Bunches} Before describing its type system, we need to talk about how
typing environments are handled in \system{}. In the spirit of bunched logics,
environments are bunches defined with the following grammar:
$$\Gamma, \Delta ::= \cdot \mid [ x : \tau ]_s \mid \Gamma \,_p \Delta$$
The empty environment is denoted as $\cdot$. The form $[x : \tau]_s$ states that
the variable $x$ has type $\tau$ and sensitivity $s$. The form
$\Gamma\,_p\Delta$ denotes the concatenation of $\Gamma$ and $\Delta$, which is
only defined when the two bind disjoint sets of variables.  As we will see in
\Cref{sec:semantics}, bunches will be interpreted as metric spaces, and the $p$
index denote which \lp metric we will use to combine the metrics of $\Gamma$ and
$\Delta$.

The type system features several operations and relations on bunches, which are
summarized in \Cref{fig:bunchoperations}. We write
$\Gamma \leftrightsquigarrow \Gamma'$ to indicate that we can obtain $\Gamma'$
by rearranging commas up to associativity and commutativity, and by treating the
empty environment as an identity element; \Cref{fig:bunchoperations} has a
precise definition. Observe that associativity only holds for equal values of
$p$.  This operation will be used to state a permutation rule for the type
system of \system{}.

Like in Fuzz, environments have a scaling operation $s\Gamma$ which scales all
sensitivities in the bunch by $s$. For example,
$$
s ([x : \tau]_{r_1},_p[y : \sigma]_{r_2}) = ([x : \tau]_{s\cdot r_1},_p[y :
\sigma]_{s\cdot r_2}).
$$
The exact definition of scaling in such graded languages is subtle, since minor
variations can quickly lead to unsoundness.  The definition we are using
($\infty \cdot 0 = 0 \cdot \infty = \infty$), which goes back to prior
work~\cite{DBLP:conf/lics/AmorimGHK19}, is sound, but imprecise, since it leads
to too many variables being marked as $\infty$-sensitive. It would also be
possible to have a more precise variant that uses a non-commutative definition
of multiplication on distances~\cite{DBLP:conf/popl/AmorimGHKC17}, but we keep
the current formulation for simplicity. (For a more thorough discussion on these
choices and their tradeoffs, see \ifappendix \Cref{zero_and_infty}.) \else the
  "Zero and Infinity" example in Appendix B of the full version
  \cite{DBLP:journals/corr/abs-2202-01901} of this paper.) \fi

In the original Fuzz type system, rules with several premises usually have their
environments combined by adding sensitivities pointwise, which corresponds to a
use of the $L^1$ metric.  In \system{}, we have instead a family of contraction
operations $Contr(p,\Gamma,\Delta)$ for combining environments, one for each \lp
metric.  Contraction only makes sense if $\Gamma$ and $\Delta$ differ only in
sensitivities and variable names, but have the same structure otherwise.  We
write this relation as $\Gamma \approx \Delta$.  When contracting two leaves,
sensitivities are combined using the \lp norm, while keeping variable names from
the left bunch.

Unlike Fuzz, where contraction is implicit in rules with multiple premises,
\system{} has a separate, explicit contraction typing rule. The rule will be
stated using the $vars$ function, which lists all variables in a bunch.

\begin{figure}
  \begin{minipage}{.49\textwidth}
    \begin{align*}
      vars(\cdot) &= []\\
      vars([x : \tau]_s) &= [x]\\
      vars((\Gamma_1 ,_p \Gamma_2)) &= vars(\Gamma_1) \mdoubleplus vars(\Gamma_2)
    \end{align*}
  \end{minipage}
  \begin{minipage}{.49\textwidth}
    \begin{align*}
      \cdot &\approx \cdot \\
      [x : \tau]_s &\approx [y : \sigma]_r& \text{if } \tau = \sigma\\
      \Gamma_1 \,_p \Gamma_2 &\approx \Delta_1 ,_q \Delta_2
        & \text{if } p = q \wedge \Gamma_i \approx \Delta_i
    \end{align*}
  \end{minipage}

  \begin{minipage}{.55\textwidth}
    \begin{align*}
      \Gamma &\leftrightsquigarrow \Delta
        & \text{if } \Gamma = \Delta\\
      \Gamma &\leftrightsquigarrow \cdot ,_p \Delta
        & \text{if } \Gamma \leftrightsquigarrow \Delta\\
      \Gamma &\leftrightsquigarrow \Delta ,_p \cdot
        & \text{if } \Gamma \leftrightsquigarrow \Delta\\
      \Gamma_1 ,_p \Gamma_2 & \leftrightsquigarrow \Delta_1 ,_p \Delta_2
        & \text{if } \Gamma_i \leftrightsquigarrow \Delta_i \\
      \Gamma_1 ,_p \Gamma_2 &\leftrightsquigarrow \Delta_2 ,_p \Delta_1
        & \text{if } \Gamma_i \leftrightsquigarrow \Delta_i\\
      \Gamma_1 ,_p (\Gamma_2 ,_p \Gamma_3) &\leftrightsquigarrow (\Delta_1 ,_p \Delta_2) ,_p \Delta_3
        & \text{if } \Gamma_i \leftrightsquigarrow \Delta_i \\
      \Gamma_2 &\leftrightsquigarrow \Gamma_1
        & \text{if } \Gamma_1 \leftrightsquigarrow \Gamma_2
    \end{align*}
  \end{minipage}
  % \vline
  \begin{minipage}{.45\textwidth}
    \begin{align*}
      s \; \cdot &= \cdot \\
      s \: [\tau]_r &= [\tau]_{s \cdot r} \\
      s \: (\Gamma \,_p \Delta) &= s\Gamma \,_p s\Delta
    \end{align*}
  \end{minipage}

  \begin{align*}
    c(p,q)
    & =
      \begin{cases}
        1 & \text{if $p = \infty$} \\
        2^{\left|\frac1q - \frac1p\right|} & \text{otherwise}
      \end{cases} \\
    Contr(p, \cdot , \cdot)
    &= \cdot\\
    Contr(p, [x: \tau]_s , [y:\tau]_r)
    &= [x:\tau]_{ \sqrt[p]{s^p + r^p} }\\
    Contr(p, (\Gamma_1 ,_q \Gamma_2) , (\Delta_1 ,_q \Delta_2))
    &=
      c(p,q)(Contr(p, \Gamma_1, \Delta_1) ,_q Contr(p, \Gamma_2, \Delta_2)).
  \end{align*}
  \caption{Bunch Operations}\label{fig:bunchoperations}
\end{figure}

% \mg{The operations needs to be defined.}

% \paragraph{Terms} The syntax of terms is the same as in Fuzz with the addition of the probability monad bind and return.

% \begin{align*}
%   e ::= &\:x \mid c \mid () \mid \lambda x . e \mid e \: e \\
%     & \mid \!e \mid \LetBangIn{x}{e}{e} \\
%     & \mid (e, e) \mid \LetPairIn x y {e} {e} \\
%     & \mid \inj_i e \mid \left(\CaseOf{e}{x}{e}{y}{e}\right) \\
%     & \mid \return e \mid \mLet{x}{e}{e} \\
% \end{align*}
% \mg{If we present the syntax of terms for fuzz, we can probably drop this part.}

\paragraph{Type System} Our type system is similar to the one of Fuzz, but
adapted to use bunched environments. The typing rules are displayed on
\Cref{fig:BunchedFuzzTypingRules}.
%For example, in the $\otimes$E rule, notice
%that the $p$ on the tensor type is carried over to the bunch in the resulting
%environment.
For example, in the $\otimes$I rule, notice
that the $p$ on the tensor type is carried over to the bunch in the resulting
environment.
 Similarly, in the $\multimap$I rule,
the value of $p$ that annotates the bunch in the premise is carried over to the
$\multimap_p$ in the conclusion.

Like in Fuzz, the $!$E rule propagates the scaling factor, but using the bunch
structure. Rather than adding the two environments, we splice one into the
other: the notation $\Gamma(\Delta)$ denotes a compound bunch where we plug in
the bunch $\Delta$ into another bunch $\Gamma(\star)$ that has a single,
distinguished hole $\star$. As we mentioned earlier, \system{} has an explicit
typing rule for contraction, whereas contraction in Fuzz is implicit in rules
with multiple premises. Note also that we have unrestricted weakening.
%
%Finally, we have extended our language with a monad for handling probability
%distributions over values. The type $\bigcirc_H A$ contains discrete
%distributions over the type $A$ (under the Hellinger distance, which we'll
%define shortly).
%
Finally, we have the rules for typing the return and bind primitives of the
probabilistic types $\bigcirc_H$ and $\bigcirc_P$.  Those for $\bigcirc_P$ are
adapted from Fuzz, by using contraction instead of adding up the environments.
The ones for $\bigcirc_H$ are similar, but use $L^2$ contraction instead, since
that is the metric that enables composition for the Hellinger distance.

\begin{figure}
\begin{mathpar}
  \inferrule*[Right=Axiom]
    {s \ge 1}
  {[x : \tau]_s \vdash x : \tau}
  \and
  \inferrule*[Right=\RI]
    { }
    {\cdot \vdash r : \R}
  \and
  \inferrule*[Right=1I]{ }{\cdot \vdash () : 1}
  \\
  \inferrule*[Right=\lolliI]
    {\Gamma \,_p [x : \tau]_1 \vdash e : \sigma}
    {\Gamma \vdash \lambda x. e : \tau \multimap_p \sigma}
  \and
  \inferrule*[Right=\lolliE]
    {\Gamma \vdash f : \tau \multimap_p \sigma \and \Delta \vdash e : \tau}
    {\Gamma ,_p \Delta \vdash f \: e : \sigma}
  \\
  % \and
  \inferrule*[Right=\tensorI]
    {\Gamma \vdash e_1 : \tau \and \Delta \vdash e_2 : \sigma}
    {\Gamma \,_p \Delta \vdash (e_1, e_2) : \tau \otimes_p \sigma}
  \and
  \inferrule*[Right=\tensorE]
    { \Delta \vdash e_1 : \tau \otimes_p \sigma
      \and
      \Gamma([x : \tau]_s \,_p [y : \sigma]_s) \vdash e_2 : \rho}
    {\Gamma(s\Delta) \vdash \LetPairIn{x}{y}{e_1}{e_2} : \rho}
  \\
  \inferrule*[Right=\plusI1]
    {\Gamma \vdash e : \tau}
    {\Gamma \vdash \inj_1 e : \tau \oplus \sigma}
  \and
  \inferrule*[Right=\plusI2]
    {\Gamma \vdash e : \sigma}
    {\Gamma \vdash \inj_2 e : \tau \oplus \sigma}
  \and
  \inferrule*[Right=\plusE]
    {
      \Gamma \vdash e_1 : \tau \oplus \sigma
      \and \Delta([x : \tau]_s) \vdash e_2 : \rho
      \and \Delta([y : \sigma]_s) \vdash e_3 : \rho }
    {\Delta(s\Gamma) \vdash \CaseOf {e_1}{x}{e_2}{y}{e_3} : \rho}
  \\
  \inferrule*[Right=!I]
    {\Gamma \vdash e : \tau}
    {s\Gamma \vdash \! e : \!_s \tau}
  \and
  \inferrule*[Right=!E]
    {\Gamma \vdash e_1 : \!_r \tau \and \Delta([x : \tau]_{rs}) \vdash e_2 : \sigma}
    {\Delta(s\Gamma) \vdash \LetBangIn{x}{e_1}{e_2} : \sigma}
  % \\
  \and
  \inferrule*[Right=Contr]
    {\Gamma(\Delta \,_p \Delta') \vdash e : \tau \and \Delta \approx \Delta'}
    {\Gamma(Contr(p, \Delta, \Delta')) \vdash e[vars(\Delta')/vars(\Delta)] : \tau}
  \and
  % \\
  \inferrule*[Right=Weak]
    {\Gamma(\cdot) \vdash e : \tau}
    {\Gamma(\Delta) \vdash e : \tau}
  \and
  \inferrule*[Right=Exch]
    {\Gamma \vdash e : \tau \and \Gamma \leftrightsquigarrow \Gamma'}
    {\Gamma' \vdash e : \tau}
  \\
  \inferrule*[Right=Bind-P]
    { \Gamma \approx \Delta
      \\\\
      \Gamma \vdash e_1 : \bigcirc_P \tau
      \and
      \Delta ,_p [x : \tau]_s \vdash e_2 : \bigcirc_P \sigma
      }
    {Contr(1, \Gamma, \Delta) \vdash \mLet {x}{e_1}{e_2} : \bigcirc_P \sigma}
  % \\
  \and
  \inferrule*[Right=Return-P]
    { \Gamma \vdash e : \tau}
    { \infty\Gamma \vdash \return e : \bigcirc_P \tau}
  % \\
  \and
  \inferrule*[Right=Bind-H]
    { \Gamma \approx \Delta
      \\\\
      \Gamma \vdash e_1 : \bigcirc_H \tau
      \and
      \Delta ,_p [x : \tau]_s \vdash e_2 : \bigcirc_H \sigma
      }
    {Contr(2, \Gamma, \Delta) \vdash \mLet {x}{e_1}{e_2} : \bigcirc_H \sigma}
  % \\
  \and
  \inferrule*[Right=Return-H]
    { \Gamma \vdash e : \tau}
    { \infty\Gamma \vdash \return e : \bigcirc_H \tau}
\end{mathpar}
\caption{Bunched Fuzz typing rules}\label{fig:BunchedFuzzTypingRules}
\end{figure}

Let us now explain in which sense $\otimes_\infty$ corresponds to the $\with$ connective of Fuzz. We will need the following lemma:
\begin{lemma}[Renaming]\label{lem:renaming}
Assume that there is a type derivation of $\Gamma \vdash e : \tau $ and that $\Gamma \approx \Gamma'$. Then there exists a derivation of
 $\Gamma' \vdash e[vars(\Gamma')/vars(\Gamma)]  : \tau $.
\end{lemma}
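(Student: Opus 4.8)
The plan is to prove \Cref{lem:renaming} by induction on the derivation of $\Gamma \vdash e : \tau$. The first step is to unpack what $\Gamma \approx \Gamma'$ provides: a bijection between the leaves of $\Gamma$ and those of $\Gamma'$ that preserves the tree shape of the bunch, the indices $p$ on its internal commas, and the type recorded at each leaf. This bijection yields an injective renaming $\rho$ of variables such that $e[vars(\Gamma')/vars(\Gamma)]$ equals $\rho(e)$ up to $\alpha$-equivalence (as usual the free variables of $e$ are among $vars(\Gamma)$), so it is enough to show that any rule instance concluding $\Gamma \vdash e : \tau$ can be replayed under $\rho$ to conclude $\Gamma' \vdash \rho(e) : \tau$. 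The leaf sensitivities are simply transported along $\rho$; the only rule in which they matter on their own is Axiom, whose side condition $s \geq 1$ survives because $\approx$ keeps the sensitivity at each leaf. (If one reads $\approx$ with arbitrary sensitivities, Axiom forces the caveat that $\approx$ must not lower them; in all uses of this lemma the sensitivities are in fact unchanged.)

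Two auxiliary facts make the induction uniform. First, $\approx$ is an equivalence relation compatible with the context operations: scaling commutes with it, and — the key point — plugging decomposes: if $\Xi \approx \Gamma(\Delta)$ then $\Xi = \Gamma'(\Delta')$ for some context $\Gamma'(\star)$ with $\Gamma'(\star) \approx \Gamma(\star)$ (extending $\approx$ so that the hole matches the hole) and some bunch $\Delta'$ with $\Delta' \approx \Delta$; this is proved by induction on the shape of the context $\Gamma(\star)$. Second, $\approx$ commutes with the rearrangement relation $\leftrightsquigarrow$: if $\Gamma \leftrightsquigarrow \Gamma_1$ and $\Gamma_1 \approx \Gamma'$, then there is $\Gamma_0$ with $\Gamma \approx \Gamma_0$ and $\Gamma_0 \leftrightsquigarrow \Gamma'$ — a diamond property that holds because $\leftrightsquigarrow$ only permutes and re-associates the tree while $\approx$ only relabels the leaves.

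Given these, the cases are mechanical. Leaves (the Axiom, \RI, and 1I rules) are immediate. For rules whose conclusion bunch is a single comma, such as \lolliE and \tensorI, the definition of $\approx$ splits $\Gamma'$ along the same comma, and one applies the induction hypothesis to each premise and then the same rule; for \lolliI one first extends $\rho$ with a fresh name for the bound variable. The hole-context rules — \tensorE, \plusE, $!$I, $!$E, Weak, Contr, Bind-P and Bind-H — all follow one pattern: use the plugging-decomposition fact to write $\Gamma'$ as $\Gamma'(\Delta')$ with $\Delta'$ related to the plugged subbunch, apply the induction hypothesis to the premises (extending $\rho$ with fresh names whenever a new leaf like $[x : \tau]_s$ is introduced), and reassemble with the same rule; Exch additionally invokes the diamond property, and Contr additionally uses transitivity of $\approx$, since $Contr(p, \Delta, \Delta')$ has the common shape of $\Delta$ and $\Delta'$. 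I expect the main obstacle to be exactly these hole-context rules: stating the plugging-decomposition lemma correctly — in particular defining $\approx$ on bunch contexts and checking it interacts properly with scaling (for $!$I and $!$E) and with Contr — and threading one coherent renaming $\rho$ through rules like \tensorE whose two premises mention both shared and freshly bound variables.
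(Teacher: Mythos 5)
Your proof is correct and is essentially the routine induction the paper has in mind: the paper states this lemma without giving a proof, and your replay-the-derivation argument---with the plugging-decomposition lemma for hole contexts and the commutation of $\approx$ with $\leftrightsquigarrow$ for the Exch case---supplies exactly the details it leaves implicit. Your parenthetical caveat is also well taken: since the paper's definition of $\approx$ does not constrain leaf sensitivities, the Axiom case (and hence the lemma read literally) would fail if sensitivities could drop below $1$, so the intended reading---consistent with the lemma's only use, in the $\otimes_\infty$ pairing derivation---is that $\Gamma'$ differs from $\Gamma$ only in variable names.
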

Now, the $\with$ connective in Fuzz supports two operations, projections and pairing. The connective  $\otimes_\infty$  of \system\; also supports these operations, but as derived forms.  First, projections can be encoded by defining $\pi_i(e)$ for $i=1,2$  as  $\LetPairIn{x_1}{x_2}{e}{x_i}$.
%Second, pairing is defined by using  Lemma \Cref{lem:renaming} and a derivation combining contraction and $\otimes E$.
Second, for pairing assume we have two derivations of
$\Gamma \vdash e_i:\sigma_i$ for $i=1,2$, and let $\Gamma'$ be an environment
obtained from $\Gamma$ by renaming all variables to fresh ones. Then we have
$\Gamma \approx \Gamma'$ and thus
$$
\inferrule*[Right=Contr]
{\inferrule*[Right=\tensorI]
    {\Gamma \vdash e_1 : \sigma_1 \and
      \inferrule*[Right=\Cref{lem:renaming}]
      {\Gamma \vdash e_2 : \sigma_2 \and \Gamma \approx \Gamma'}
      {\Gamma' \vdash e_2[vars(\Gamma')/vars(\Gamma)]  : \sigma_2 }}
    {\Gamma \,_{\infty}  \Gamma' \vdash (e_1, e_2[vars(\Gamma')/vars(\Gamma)] ) : \sigma_1 \otimes_{\infty} \sigma_2}
    }
      {Contr(\infty,\Gamma,\Gamma') \vdash (e_1, e_2 ) : \sigma_1 \otimes_{\infty} \sigma_2}
$$
Note that we have defined $\sqrt[\infty]{x^\infty + y^\infty} = \max(x,y)$ by
taking the limit of $\sqrt[p]{x^p + y^p}$ when $p$ goes to infinity, and thus we
have $Contr(\infty,\Gamma,\Gamma') = \Gamma$. Therefore the pairing rule of
$\with$ is derivable for $\otimes_\infty$.

\section{Semantics}
\label{sec:semantics}

Having defined the syntax of \system{} and its type system, we are ready to
present its semantics. We opt for a denotational formulation, where types $\tau$
and bunches $\Gamma$ are interpreted as metric spaces $\intrp\tau$ and
$\intrp\Gamma$, and a derivation $\pi$ of $\Gamma \vdash e : \tau$ is
interpreted as a non-expansive function
$\intrp\pi : \intrp\Gamma \to \intrp\tau$.  For space reasons, we do not provide
an operational semantics for the language, but we foresee no major difficulties
in doing so, since the term language is mostly inherited from Fuzz, which does
have a denotational semantics proved sound with respect to an operational
semantics~\cite{DBLP:conf/popl/AmorimGHKC17}.

\paragraph{Types}
Each type $\tau$ is interpreted as a metric space $\intrp\tau$ in a
compositional fashion, by mapping each type constructor to the corresponding
operation on metric spaces defined in \Cref{fig:type-interp}.  We now explain
these definitions.

The operations of the first four lines of \Cref{fig:type-interp} come from prior
work on Fuzz~\cite{DBLP:conf/popl/AmorimGHKC17, DBLP:conf/lics/AmorimGHK19}. The
definition of $\otimes_p$ uses as carrier set the cartesian product, just as
$\otimes$ in previous works, but endows it with the \lp distance, defined in
\Cref{sec:lp-distances}. In the particular case of $p=1$, $\otimes_1$ is the
same as $\otimes$.

As for $\multimap_p$, we want to define it in such a way that currying and
uncurrying work with respect to $\otimes_p$, which will allow us to justify the
introduction and elimination forms for that connective.  For that we first
choose as carrier set the set $A \multimap B$ of non-expansive functions from
$A$ to $B$.
%The interpretation of $\multimap_p$ uses the following metric on non-expansive
%functions:
This set carries the metric
\begin{align}
\begin{split}
  & d_{A \multimap_p B}(f, g)\\
  & = \inf \{ r\in \Dist \mid \forall x,y \in A,
    d_B(f(x), g(y)) \le \sqrt[p]{r^p + d_A(x,y)^p }\}
    \label{eq:multimap-dist}
\end{split}
\end{align}
This metric is dictated by the type of the application operator in the \lp{}
norm: $(A \multimap_p B) \otimes_p A \multimap B$.  Intuitively, if $f$ and $g$
are at distance $r$, and we want application to be non-expansive, we need to
satisfy
\[ d_B(f(x), g(y)) \le \sqrt[p]{r^p + d_A(x,y)^p} \] for every $x, y \in A$.
The above definition says that we pick the distance to be the smallest possible
$r$ that makes this work. Note that this choice is forced upon us: in
category-theoretic jargon, the operations of currying and uncurrying, which are
intimately tied to the application operator, correspond to an adjunction between
two functors, which implies that any other metric space that yields a similar
adjunction with respect to $\otimes_p$ must be isomorphic to $\multimap_p$.  In
particular, this implies that its metric will be the same as the one of
$\multimap_p$.

%Again, in the particular case where $p=1$ we recover the distance $d_{A \multimap B}$ from \cite{DBLP:conf/popl/AmorimGHKC17}.

%Each type $\tau$ is interpreted as a metric space $\intrp\tau$ in a
%compositional fashion, by mapping each type constructor to the corresponding
%operation on metric spaces defined in \Cref{fig:type-interp}.  Most of these
%operations are standard, and have been adapted from prior work on
%Fuzz~\cite{DBLP:conf/popl/AmorimGHKC17, DBLP:conf/lics/AmorimGHK19}.  The
%definition of $\otimes_p$ uses the \lp distance, as in \Cref{sec:lp-distances}. We denote as $A \multimap B$ the set of non-expansive functions from $A$ to $B$.
%%The interpretation of $\multimap_p$ uses the following metric on non-expansive
%%functions:
%The interpretation of $\multimap_p$ uses the following metric on the carrier set $A \multimap B$:
%\begin{align}
%\begin{split}
%  & d_{A \multimap_p B}(f, g)\\
%  & = \inf \{ r\in \Dist \mid \forall x,y \in A,
%    d_B(f(x), g(y)) \le \sqrt[p]{r^p + d_A(x,y)^p }\}
%    \label{eq:multimap-dist}
%\end{split}
%\end{align}
For $\bigcirc_P A$ and $\bigcirc_H A$ the carrier set is the set $D A$ of discrete distributions over $A$.
As to the metric on the carrier set, the interpretation of $\bigcirc_P$ uses the max divergence, used in
the definition of differential privacy (see Sect. \ref{sec:dp}).  The interpretation of $\bigcirc_H$ uses
instead the Hellinger distance (see e.g. \cite{DBLP:conf/lics/AmorimGHK19}):
\begin{align}
  \HD_A(\mu, \nu) & \triangleq \sqrt{
                    \frac 1 2 \sum_{x \in A}
                    | \sqrt{\mu(x)} - \sqrt{\nu(x)} | ^2
                    }
  \label{eq:hellinger-dist}
\end{align}
\begin{figure}
  \begin{center}
    \begin{tabular}{c | c | c}
      Space $X$& $|X|$ &$d_X(x, y)$ \\
      \hline
      $1$ & $\{ * \}$ & 0 \\
      $\R$ & $\R$ & $|x - y|$ \\
      & &
          \multirow{3}{*}{$\begin{cases}
          s\cdot d_A(x, y) \text{ if }s\neq \infty\\
          \infty \text{ if }s=\infty, x\neq y \in A\\
          0 \text{ if }s=\infty, x= y \in A\\
        \end{cases}$}\\
   $!_sA$& $|A|$&\\
  &&\\
      % \multirow{3}{*}{$A \oplus B$}
  %       & $|A| + |B|$ &\multirow{3}{*}{$\begin{cases}
  %         d_A(a, b)\text{ if }a, b \in A\\
        &  &\multirow{3}{*}{$\begin{cases}
          d_A(x, y)\text{ if }x, y \in A\\
          d_B(x, y)\text{ if }x, y \in B\\
          \text{else }\infty\\
        \end{cases}$}\\
  $A \oplus B$&$|A| + |B|$&\\
  &&\\
      $A \otimes_p B$ & $|A|\times|B|$ & $\sqrt[p]{ d_A(\pi_1(x), \pi_1(y))^p +
        d_B(\pi_2(x),\pi_2(y))^p } $\\
      $A \multimap_p B$ & $A \multimap B$ & cf. \Cref{eq:multimap-dist} \\
      $\bigcirc_P A$ & $DA$ & $\MD_A(x,y)$; cf. \Cref{eq:privacy-distance} \\
      $\bigcirc_H A$ & $DA$ & $\HD_A(x,y)$; cf. \Cref{eq:hellinger-dist}
    \end{tabular}
  \end{center}
  %}
  \caption{Operations on metric spaces for interpreting types}\label{fig:type-interp}
  \end{figure}
\paragraph{Bunches}

The interpretation of bunches is similar to that of types. Variables correspond
to scaled metric spaces, whereas $,_p$ corresponds to $\otimes_p$:
\begin{align*}
  \intrp \cdot & = 1 &
  \intrp {[ x : \tau ]_s } & = {!_s \intrp {\tau}} &
  \intrp {\Gamma_1 \,_p \Gamma_2 } & =\intrp {\Gamma_1}\otimes_p \intrp
                                     {\Gamma_2}.
\end{align*}

One complication compared to prior designs is the use of an explicit exchange
rule, which is required to handle the richer structure of contexts.
Semantically, each use of exchange induces an isomorphism of metric spaces:

\begin{restatable}{theorem}{bunchmonoidisomorphism}
  \label{thm:bunch-monoid-isomorphism}
  Each derivation of $\Gamma \leftrightsquigarrow \Delta$ corresponds to an
  isomorphism of metric spaces $\intrp \Gamma \cong \intrp \Delta$.
\end{restatable}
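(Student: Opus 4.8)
The plan is to proceed by induction on the derivation of $\Gamma \leftrightsquigarrow \Delta$ (the seven clauses of \Cref{fig:bunchoperations}), producing in each case a distance-preserving bijection $\intrp\Gamma \to \intrp\Delta$. This suffices: an isomorphism in $\Met$ is precisely a distance-preserving bijection (if $f$ is non-expansive and so is $f^{-1}$, then $d_Y(fx_1,fx_2)\le d_X(x_1,x_2) = d_X(f^{-1}fx_1,f^{-1}fx_2)\le d_Y(fx_1,fx_2)$, so equality holds throughout), and such maps compose; hence the induction can build the required isomorphism by composing a small fixed stock of ``structural'' isomorphisms with the isomorphisms handed up by the induction hypotheses. (Equivalently, one may read the construction as a recursion on the derivation.)

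The first step is to assemble that stock by checking that, for each fixed $p$, $(\Met, \otimes_p, 1)$ behaves like a symmetric monoidal category: (i) $\otimes_p$ is functorial on isomorphisms, i.e. if $f : A \to A'$ and $g : B \to B'$ are isomorphisms then so is $f \times g : A \otimes_p B \to A' \otimes_p B'$; (ii) the unitors $1 \otimes_p A \cong A$ and $A \otimes_p 1 \cong A$ are isomorphisms; (iii) the swap $A \otimes_p B \cong B \otimes_p A$ is an isomorphism; and (iv) the reassociation $(A \otimes_p B) \otimes_p C \cong A \otimes_p (B \otimes_p C)$ is an isomorphism. Each is a one-line computation against $d_{A \otimes_p B}((a,b),(a',b')) = \sqrt[p]{d_A(a,a')^p + d_B(b,b')^p}$ from \Cref{fig:type-interp}: the swap preserves distances because $u^p + v^p = v^p + u^p$, reassociation because both sides equal $\sqrt[p]{d_A^p + d_B^p + d_C^p}$, the unitors because $d_1(*,*) = 0$ does not affect an $L^p$ combination, and $f \times g$ because $f$ and $g$ are themselves distance-preserving. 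We need none of the monoidal coherence diagrams, only that these particular maps are isomorphisms.

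Then each clause of $\leftrightsquigarrow$ matches one construction, after unfolding $\intrp{\Gamma_1\,_p\Gamma_2} = \intrp{\Gamma_1}\otimes_p\intrp{\Gamma_2}$ and $\intrp\cdot = 1$: the equality clause gives the identity; the two empty-context clauses compose a unitor with the iso from the hypothesis; the congruence clause is functoriality of $\otimes_p$ applied to the two hypotheses; the commuting clause further post-composes the swap; the associativity clause post-composes the suitably oriented reassociation with the tensored hypotheses; and the symmetry clause for $\leftrightsquigarrow$ simply inverts the iso from its hypothesis. In each case the domain and codomain come out as the intended $\intrp\Gamma$ and $\intrp\Delta$.

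The one place requiring care, and the only place something could go wrong, is uniformity in $p$, in particular the boundary $p = \infty$, where $\sqrt[p]{\cdot}$ must be read as the limiting $\max$ (as fixed in \Cref{sec:lp-distances}): there ``$u^p + v^p = v^p + u^p$'' becomes ``$\max(u,v) = \max(v,u)$'' and the root-flattening step becomes ``$\max(\max(u,v),w) = \max(u,v,w)$'', so I would carry $p = \infty$ as a parallel subcase wherever an $L^p$ identity is used. It is also worth noting that the structural rules of \Cref{fig:bunchoperations} only ever relate tensors with the same index (associativity and commutativity require $p = q$), which is exactly what makes the single-$p$ monoidal argument applicable; no isomorphism relating $\otimes_p$ and $\otimes_q$ for $p \neq q$ is needed, and none holds in general. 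I anticipate no genuine difficulty beyond this bookkeeping.
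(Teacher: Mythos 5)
Your proposal is correct and follows essentially the same route as the paper: a structural induction on the derivation of $\Gamma \leftrightsquigarrow \Delta$, interpreting each clause by the corresponding structural isomorphism (identity, unitors, swap, associator, congruence of $\otimes_p$, and inversion for the symmetry rule). The paper's proof merely exhibits the underlying maps case by case, so your explicit verification that each is a distance-preserving bijection (including the $p=\infty$ reading of the $L^p$ identities) adds detail rather than a different method.
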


Before stating the interpretation of typing derivations, we give an overview of
important properties of the above constructions that will help us prove the
soundness of the interpretation.

\paragraph{Scaling}

Much like in prior work~\cite{DBLP:conf/popl/AmorimGHKC17,
  DBLP:conf/lics/AmorimGHK19}, we can check the following equations:
\begin{proposition}
  \label{prop:scaling-distr}
  \begin{align*}
    !_{s_1} !_{s_2} A & = {\!_{s_1\cdot s_2} A}
    & !_s (A \oplus B) & =  \!_s A \oplus  \!_s B
    & !_s (A \otimes_p B) & =  \!_s A \otimes_p  \!_s B.
  \end{align*}
\end{proposition}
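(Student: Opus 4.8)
The plan is to treat each of the three identities as an equality of metric spaces, so it suffices to check that the two sides have the same carrier set and the same distance function. The carriers match immediately from \Cref{fig:type-interp}: the operation $!_s$ leaves the underlying set unchanged, $\oplus$ forms the disjoint union, and $\otimes_p$ the cartesian product, so the two sides of the three equations have carriers $|A|$, $|A|+|B|$, and $|A|\times|B|$ respectively. All the content is therefore in the distance functions, which I would verify by a case analysis mirroring the one already built into the definition of $!_s$: whether the relevant scalar is finite or $\infty$, and, when it is $\infty$, whether the two points are equal.

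For $!_{s_1}!_{s_2}A = !_{s_1 s_2}A$: when $s_1$ and $s_2$ are both finite, both sides evaluate to $(s_1 s_2)\cdot d_A(x,y)$ by associativity of multiplication on $\Dist$. When at least one of $s_1, s_2$ is $\infty$ we have $s_1 s_2 = \infty$ (using $\infty\cdot s = s\cdot\infty = \infty$, in particular $0\cdot\infty = \infty$), and one checks directly that both sides return $0$ on equal points and $\infty$ on distinct points. The case $s_1 = 0$, $s_2 = \infty$ is the one where the convention $0\cdot\infty=\infty$ really matters: there $!_{s_2}A = {!_\infty A}$ assigns $\infty$ to distinct points, and scaling that space by $0$ must still assign $0\cdot\infty = \infty$ to them, matching $!_{\infty}A$.

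For $!_s(A\oplus B) = {!_s A}\oplus{!_s B}$: split on whether the two points lie in the same summand. If they do, say both in $A$, both sides reduce to $d_{!_s A}$ on those points. If they lie in different summands, the right-hand side is $\infty$ by the ``else'' clause of $\oplus$, and the left-hand side is $s$ times $\infty$ (or, when $s=\infty$, simply $\infty$ since the points are distinct), which is again $\infty$. For $!_s(A\otimes_p B) = {!_s A}\otimes_p{!_s B}$, the key algebraic fact is that scaling pulls through the $L^p$ pseudonorm: for any finite $s$ and any $a,b\in\Dist$,
\[ \sqrt[p]{(sa)^p + (sb)^p} \;=\; s\,\sqrt[p]{a^p + b^p}, \]
read as $\max(sa,sb) = s\max(a,b)$ when $p=\infty$, and valid also when $a$ or $b$ (or both) equals $\infty$ and when $s=0$. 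Applying it with $a = d_A(\pi_1(x),\pi_1(y))$ and $b = d_B(\pi_2(x),\pi_2(y))$ settles the finite-$s$ case. When $s=\infty$, both sides yield $0$ precisely when the two pairs coincide and $\infty$ otherwise, since a pair of equal points forces both component distances into the ``equal-point'' branch of each $!_\infty$ factor while a pair of distinct points makes at least one of them $\infty$.

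I expect the only real difficulty to be bookkeeping in the $\infty$/$0$ corner cases --- lining up the special-cased branch in the definition of $!_s$ with the arithmetic conventions on $\Dist$ --- rather than anything deep; once the case split follows the shape of the definition of $!_s$, each branch is a one-line verification. It is worth stressing that these identities hold on the nose, as equalities of metric spaces, rather than merely up to isomorphism, which is what will later let us use them freely as rewrite rules in the soundness proof for the interpretation.
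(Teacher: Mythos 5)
Your proposal is correct, and it is exactly the intended argument: the paper itself offers no explicit proof of \Cref{prop:scaling-distr} (it is stated as a routine check, deferring to prior work on Fuzz), and the verification amounts to precisely what you do --- unfolding the definitions in \Cref{fig:type-interp}, splitting cases according to the finite/$\infty$ branches of $!_s$, and using homogeneity of the $L^p$ (pseudo)norm together with the convention $0\cdot\infty=\infty$. Your observation that the identities hold as strict equalities of metric spaces (same carrier, same distance), not merely up to isomorphism, is also the right reading of the statement.
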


%Moreover, an $s$-sensitive function of type $A \to B$ is the same thing as a
%non-expansive function of type $!_s A \to B$.
Moreover, an $s$-sensitive function from $A$ to $B$ is the same thing as a
non-expansive function of type $!_s A \multimap B$.

\begin{proposition}
  \label{prop:scaled-bunch-comm}
  For every bunch $\Gamma$, we have $\intrp{s\Gamma} = {!_s \intrp\Gamma}$.
\end{proposition}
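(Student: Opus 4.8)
The plan is to prove the equation by structural induction on the bunch $\Gamma$, following the recursive clauses that define bunch scaling (\Cref{fig:bunchoperations}) and the recursive clauses that define the interpretation of bunches, and appealing to the scaling equations of \Cref{prop:scaling-distr} at the inductive steps. Since the claim is an equality of metric spaces — identical carrier sets and identical distance functions, not merely an isomorphism — in each case I would check that both sides agree on the nose.

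For the empty bunch, $s\;\cdot = \cdot$ by definition, so $\intrp{s\;\cdot} = \intrp\cdot = 1$, and it remains to observe that $!_s 1 = 1$: the operation $!_s$ does not touch the carrier $\{*\}$, and $d_{!_s 1}(*,*) = 0$ in every case (it equals $s \cdot 0 = 0$ when $s \neq \infty$, and it is $0$ by the diagonal clause in the definition of $!_s$ when $s = \infty$). For a leaf $\Gamma = [x : \tau]_r$ we have $s\Gamma = [x : \tau]_{s \cdot r}$, hence $\intrp{s\Gamma} = {!_{s\cdot r}\intrp\tau}$, while ${!_s\intrp\Gamma} = {!_s(!_r\intrp\tau)} = {!_{s\cdot r}\intrp\tau}$ by the first equation of \Cref{prop:scaling-distr} (which already absorbs the delicate boundary cases $r = \infty$, $s = 0$ and $r = 0$, $s = \infty$). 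For a concatenation $\Gamma = \Gamma_1 \,_p \Gamma_2$ we have $s\Gamma = s\Gamma_1 \,_p s\Gamma_2$, and therefore
\[
  \intrp{s\Gamma}
  = \intrp{s\Gamma_1} \otimes_p \intrp{s\Gamma_2}
  = ({!_s\intrp{\Gamma_1}}) \otimes_p ({!_s\intrp{\Gamma_2}})
  = {!_s(\intrp{\Gamma_1} \otimes_p \intrp{\Gamma_2})}
  = {!_s\intrp\Gamma},
\]
using the induction hypotheses on $\Gamma_1$ and $\Gamma_2$ for the second equality and the distributivity of $!_s$ over $\otimes_p$ from \Cref{prop:scaling-distr} (valid for every $p$, including $p = \infty$) for the third.

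I do not expect a genuine obstacle here: the statement is essentially bookkeeping, reflecting that the interpretation of a bunch is assembled by exactly the same clauses as scaling and that $!_{(-)}$ composes multiplicatively and commutes with $\otimes_p$. The only points deserving a moment's attention are the degenerate uses of $\infty$ in the definition of $!_s$ — concretely, verifying $!_s 1 = 1$ in the base case and the $0$/$\infty$ interaction in the leaf case — and both of these are already handled inside \Cref{prop:scaling-distr}, on which we are entitled to rely.
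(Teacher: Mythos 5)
Your induction is correct and complete: the paper states this proposition without proof, treating it as routine bookkeeping, and your argument (structural induction on the bunch, with the leaf and concatenation steps discharged by the equations of \Cref{prop:scaling-distr}, plus the explicit check that $!_s 1 = 1$ in the empty case) is exactly the intended justification.
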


%\begin{restatable}{proposition}{contrsound}
%  \label{prop:contrsound}
%  Suppose that we have two bunches $\Gamma \approx \Delta$.  The carrier sets of
%  $\intrp\Gamma$ and $\intrp\Delta$ are the same. Moreover, for any $p$, the
%  diagonal function $\delta(x) = (x,x)$ is a non-expansive function of type
%  \[ \intrp{Contr(p,\Gamma,\Delta)} \to \intrp\Gamma \otimes_p \intrp\Delta. \]
%\end{restatable}

\paragraph{Tensors} The properties on \lp distances allow us to relate product
types with different values of $p$.

\begin{proposition}\label{prop:relationTensors} [Subtyping of tensors]
  \begin{enumerate}
  \item Let $A$, $B$ be two metric spaces and $p, q\in \R^{\ge1}_{\infty}$ with $p\leq q$.
    Then the identity map on pairs belongs to the two following spaces:
    \begin{align*}
      A \otimes_p B & \multimap A \otimes_qB
      & !_{2^ {1/p-1/q}} (A \otimes_q B) & \multimap A \otimes_pB.
    \end{align*}
  \item In particular, when $p=1$ and $q=2$, the identity map belongs to:
    \begin{align*}
      A \otimes_1 B & \multimap A \otimes_2B
      & !_{\sqrt {2}}(A \otimes_2 B) & \multimap A \otimes_1B.
    \end{align*}
  \end{enumerate}
\end{proposition}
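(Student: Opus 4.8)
The plan is to reduce both statements to the comparison inequalities between $L^p$ distances on tuples established in \Cref{prop:relations_dp}, specialized to the case $n = 2$. First I would observe that the two maps in question are literally the identity function on the common carrier set: by the definition of $\otimes_p$ in \Cref{fig:type-interp} we have $|A \otimes_p B| = |A| \times |B| = |A \otimes_q B|$, and by the definition of $!_s$ we have $|{!_{s}(A \otimes_q B)}| = |A \otimes_q B| = |A| \times |B|$. So in each case the only thing left to check is that the identity satisfies the required sensitivity bound on distances, where the distance on $A \otimes_p B$ is exactly the $d_p$ of \Cref{prop:relations_dp} applied to the two metric spaces $X_1 = A$, $X_2 = B$ (i.e.\ with $n = 2$).

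For the first map $A \otimes_p B \multimap A \otimes_q B$, non-expansiveness unfolds to the requirement that, for all $(a,b),(a',b') \in |A| \times |B|$,
\[
  d_q\big((a,b),(a',b')\big) \le d_p\big((a,b),(a',b')\big),
\]
which is precisely the left inequality of \eqref{prop:relations_dp1} with $n = 2$. For the second map, by the interpretation of $!_s$ in \Cref{fig:type-interp} the distance on $!_{2^{1/p - 1/q}}(A \otimes_q B)$ between two pairs is $2^{1/p - 1/q} \cdot d_q\big((a,b),(a',b')\big)$ — and here the scaling factor is finite and nonzero, since $1 \le p \le q$ forces $0 \le 1/p - 1/q \le 1$, so no special case for $\infty$ is triggered — so non-expansiveness of the identity amounts to
\[
  d_p\big((a,b),(a',b')\big) \le 2^{1/p - 1/q} \cdot d_q\big((a,b),(a',b')\big).
\]
Since $n^{1/p - 1/q} = 2^{1/p - 1/q}$ when $n = 2$, this is exactly the right inequality of \eqref{prop:relations_dp1}. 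Part (2) is then the instance $p = 1$, $q = 2$, for which $2^{1/p - 1/q} = 2^{1 - 1/2} = \sqrt 2$, and the two resulting inequalities are \eqref{prop:relations_dp2} with $n = 2$.

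There is no real obstacle here: the proposition is essentially a repackaging of \Cref{prop:relations_dp} at $n = 2$ into the language of non-expansive maps and the $!$ type constructor. The only points that deserve a line of care are the degenerate cases already absorbed into \Cref{prop:relations_dp} — infinite component distances, and the case $q = \infty$ where $d_q$ is a componentwise maximum rather than a genuine $L^q$ sum — but since \Cref{prop:relations_dp} is stated over $\Dist$-valued metrics and the $L^p$ pseudonorm conventions of \Cref{sec:lp-distances} already cover $p = \infty$ and infinite entries, these are handled uniformly. As a sanity check, when $p = q$ both maps collapse to the identity with scaling factor $2^0 = 1$, and the constant $2^{1/p - 1/q}$ coincides with the factor $c(p,q)$ appearing in the contraction operation of \Cref{fig:bunchoperations}, which is exactly what makes that definition sound.
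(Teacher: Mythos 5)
Your proposal is correct and follows the same route as the paper: both parts are obtained by unfolding the definitions of $\otimes_p$ and $!_s$ and invoking \Cref{prop:relations_dp} (inequality \eqref{prop:relations_dp1}) in the case $n=2$. Your write-up merely spells out the definitional unfolding and edge cases in more detail than the paper's brief proof.
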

\begin{proof}
  For (1), the fact that the identity belongs to the first space follows from
  the fact that $d_q (x,y) \leq d_p(x,y)$, by \Cref{prop:relations_dp}
  (\Cref{prop:relations_dp1}).  The second claim is derived from
  \Cref{prop:relations_dp} (\Cref{prop:relations_dp1}) in the case $n=2$.
\end{proof}
\begin{remark}\label{remark:tensorinclusions}
  \Cref{prop:relationTensors} allows us to relate different spaces of functions
  with multiple arguments. For example,
  \begin{align*}
    (A \otimes_2 B \multimap
    C) & \subseteq  (A \otimes_1 B \multimap C) \\
    (A \otimes_1 B \multimap C)
       & \subseteq (!_{\sqrt {2}} (A \otimes_2 B) \multimap C).
  \end{align*}
  \system{} does not currently exploit these inclusions in any significant way,
  but we could envision extending the system with a notion of subtyping to
  further simplify the use of multiple product metrics in a single program.
\end{remark}

We also have the following result, which is instrumental to prove the soundness
of the contraction rule.

\begin{proposition}
  \label{prop:tensor-distr}
  Let $X, Y, Z, W$ be metric spaces, and $p, q \in \R_\infty^{\geq 1}$ with
  $p \neq \infty$.  The canonical isomorphism of sets
  \[ (X \times Y) \times (Z \times W) \cong (X \times Z) \times (Y \times W),\]
  which swaps the second and third components, is a non-expansive function of
  type
  \[ !_{c(p,q)}((X \otimes_q Y) \otimes_p (Z \otimes_q W)) \to (X \otimes_p Z)
    \otimes_q (Y \otimes_p W), \] where $c(p,q)$ is defined as in
  \Cref{fig:bunchoperations}.
\end{proposition}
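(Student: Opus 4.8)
The plan is to reduce the statement to a purely numerical inequality between $L^p$ and $L^q$ pseudonorms, namely the bound $c(p,q)$ from \Cref{fig:bunchoperations}, and then lift it through the definitions of $\otimes_p$, $\otimes_q$, and $!$. First I would unfold the metrics on both sides. Writing $x = x'$ for the $X$-component distance $d_X(\pi_1 x, \pi_1 x')$, and similarly $y,z,w$ for the other three component distances, the source metric evaluates to
\[
  \sqrt[p]{\left(\sqrt[q]{x^q + y^q}\right)^p + \left(\sqrt[q]{z^q + w^q}\right)^p},
\]
while the target metric evaluates to
\[
  \sqrt[q]{\left(\sqrt[p]{x^p + z^p}\right)^q + \left(\sqrt[p]{y^p + w^p}\right)^q}.
\]
So the proposition is exactly the claim that the target is at most $c(p,q)$ times the source, for all nonnegative (possibly infinite) reals $x,y,z,w$. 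The case $p=q$ is trivial ($c=1$, both sides coincide), and the case $q=\infty$ should be handled separately by taking $\max$ in place of the $q$-root; the case where any component is $\infty$ is immediate since both sides are $\infty$. So the real work is $1 \le p < q < \infty$.

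For the main case I would argue as follows. Think of the $2\times 2$ array $\begin{bmatrix} x & y \\ z & w\end{bmatrix}$: the source first takes $L^q$ along rows, then $L^p$ of the two results; the target first takes $L^p$ along columns, then $L^q$ of those. This is precisely a ``mixed-norm swap'': we want $\| \cdot \|_{q,p}$ (inner $q$, outer $p$) to dominate $\| \cdot \|_{p,q}$ (inner $p$, outer $q$) up to the constant $c(p,q) = 2^{1/p - 1/q}$. Since $p \le q$, the general principle (a Minkowski-type inequality for iterated norms) gives $\| \cdot \|_{q,p} \le \| \cdot \|_{p,q}$ with constant $1$ in one direction, but here we need the \emph{other} direction, which is false with constant $1$ and requires paying the dimension factor. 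The cleanest route is via \Cref{prop:relations_Lp_norms}: for a length-$2$ vector $\vec v$, $\|\vec v\|_p \le 2^{1/p - 1/q}\|\vec v\|_q$ (this is \Cref{prop:relations_Lp_norms2} with $n=2$), and $\|\vec v\|_q \le \|\vec v\|_p$ (\Cref{prop:relations_Lp_norms1}). Applying the first inequality to the outer norm of the target (turning its outer $L^q$ into $2^{1/p-1/q}$ times an outer... wait, wrong direction) — more carefully: bound the target's outer $L^q$-norm from above by its outer $L^p$-norm is false; instead bound target $\le 2^{1/q-1/p}$... Let me restate the intended chain: apply $\|\vec v\|_q \le \|\vec v\|_p$ is also wrong way. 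The correct chain is: target uses outer $q \ge$ inner-derived exponents; use that for $n=2$, swapping the order of an inner-$p$/outer-$q$ into inner-$q$/outer-$p$ costs at most $2^{1/p-1/q}$, which one proves by applying \Cref{prop:relations_Lp_norms2} ($n=2$) to collapse both iterated norms toward a common exponent (say $p$) and \Cref{prop:relations_Lp_norms1} to expand back, so that the only loss is a single factor $2^{1/p-1/q}$. I would spell this out by first showing $\|\,(\|r_1\|_q, \|r_2\|_q)\,\|_p \ge 2^{-(1/p-1/q)}\|\,(\|c_1\|_p,\|c_2\|_p)\,\|_q$ where $r_i$ are rows and $c_j$ columns — equivalently, that each iterated norm over the $2\times2$ array is within a factor $2^{1/p-1/q}$ of the ``total'' norm $\|(x,y,z,w)\|$ at a suitable single exponent, via \Cref{prop:relations_Lp_norms}.

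Once the scalar inequality target $\le c(p,q)\cdot\text{source}$ is established, the proposition follows formally: the metric on $!_{c(p,q)}((X\otimes_q Y)\otimes_p(Z\otimes_q W))$ is by definition $c(p,q)$ times the source metric (using $c(p,q) \ne \infty$, so no special $\infty$ case in the scaling), and the canonical set isomorphism is a bijection on carriers, so the computed inequality is exactly the non-expansiveness condition $d_{\mathrm{target}}(\Phi(u),\Phi(u')) \le d_{!_{c(p,q)}(\mathrm{source})}(u,u')$. I expect the main obstacle to be getting the direction and the exact value of the constant right in the mixed-norm swap: it is easy to prove the inequality with a worse constant (e.g.\ $2^{1/p}$ or $n^{1/p-1/q}$ with $n=4$ counting all four entries), and some care with \Cref{prop:relations_Lp_norms2} applied twice at $n=2$ (rather than once at $n=4$) is needed to land exactly on $2^{1/p-1/q}$; the $q=\infty$ and $\infty$-component edge cases are routine but must be checked so that $c(p,\infty)=1$ is consistent with the $\max$ definitions.
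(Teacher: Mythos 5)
Your overall strategy---unfold both metrics, reduce the claim to the pointwise inequality $\mathrm{target} \le c(p,q)\cdot\mathrm{source}$ over nonnegative (possibly infinite) $x,y,z,w$, and prove it by collapsing both iterated norms to a single exponent, paying the two-dimensional comparison factor of \Cref{prop:relations_Lp_norms} exactly once---is sound, and it is essentially the pointwise counterpart of the paper's proof, which instead composes identity maps through intermediate spaces using \Cref{prop:relationTensors}, \Cref{prop:scaling-distr}, and associativity/commutativity of a single $\otimes_r$. However, there is a genuine gap: you declare the ``real work'' to be $1\le p<q<\infty$ and never treat $q<p$, which the proposition allows (only $p=\infty$ is excluded) and which the paper handles as a second, separate chain. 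That case is not recovered by symmetry: transposing the $2\times 2$ array and exchanging $p\leftrightarrow q$ in your $p<q$ inequality produces the reverse bound ($\mathrm{source}\le c\cdot\mathrm{target}$), not the one you need. Your collapse argument does extend, but the roles of the two norm comparisons must be swapped: for $q<p$ pay the factor on the outer norm, $\lVert v\rVert_q\le 2^{1/q-1/p}\lVert v\rVert_p$ for $2$-vectors applied to the pair of column $p$-norms, and use the free inequality $\lVert v\rVert_p\le\lVert v\rVert_q$ on the inner row norms, giving $\mathrm{target}\le 2^{1/q-1/p}\cdot\mathrm{source}=c(p,q)\cdot\mathrm{source}$. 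As written, the proposal proves only the $p\le q$ half of the statement.

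Two side claims are also incorrect, though harmlessly so. First, for $p\le q$ the inequality $\mathrm{target}\le\mathrm{source}$ already holds with constant $1$: it is Minkowski's inequality for iterated norms (for $p=1$, $q=2$ it is literally the triangle inequality for the Euclidean norm), so it is not ``false with constant 1''; you simply end up proving the weaker bound with $c(p,q)\ge 1$, which is all the proposition asserts. Second, $c(p,\infty)=2^{1/p}$, not $1$ (it is $c(\infty,q)$ that equals $1$, and $p=\infty$ is excluded here); your $q=\infty$ edge case still goes through because the constant-$1$ bound holds there, but the justification as stated rests on a misreading of the definition of $c$ in \Cref{fig:bunchoperations}.
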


\begin{proof}
  First, suppose that $p \leq q$.  Then we can write the isomorphism as a
  composite of the following non-expansive functions:
  \begin{align*}
    & !_{c(p,q)}((X \otimes_q Y) \otimes_p (Z \otimes_q W) \\
    & \to {!_{c(p,q)}((X \otimes_q Y) \otimes_q (Z \otimes_q W))}
    & \text{\Cref{prop:relationTensors}} \\
    & \cong {!_{c(p,q)}((X \otimes_q Z) \otimes_q (Y \otimes_q W))}
    & \text{assoc., comm. of $\otimes_q$} \\
    & = {!_{c(p,q)}(X \otimes_q Z)} \otimes_q {!_{c(p,q)} (Y \otimes_q W)}
    & \text{\Cref{prop:scaling-distr}} \\
    & = (X \otimes_p Z) \otimes_q (Y \otimes_p W)
    & \text{\Cref{prop:relationTensors}}.
  \end{align*}
  Otherwise, $p > q$, and we reason as follows.
  \begin{align*}
    & !_{c(p,q)}((X \otimes_q Y) \otimes_p (Z \otimes_q W) \\
    & \to {!_{c(p,q)}((X \otimes_p Y) \otimes_q (Z \otimes_p W))}
    & \text{\Cref{prop:relationTensors}} \\
    & \cong {!_{c(p,q)}((X \otimes_p Z) \otimes_p (Y \otimes_p W))}
    & \text{assoc., comm. of $\otimes_p$} \\
    & = (X \otimes_p Z) \otimes_q (Y \otimes_p W)
    & \text{\Cref{prop:relationTensors}}.
  \end{align*}
\end{proof}
One can then prove the following property:
\begin{restatable}{proposition}{contrsound}
  \label{prop:contrsound}
  Suppose that we have two bunches $\Gamma \approx \Delta$.  The carrier sets of
  $\intrp\Gamma$ and $\intrp\Delta$ are the same. Moreover, for any $p$, the
  diagonal function $\delta(x) = (x,x)$ is a non-expansive function of type
  \[ \intrp{Contr(p,\Gamma,\Delta)} \to \intrp\Gamma \otimes_p \intrp\Delta. \]
\end{restatable}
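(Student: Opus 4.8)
The plan is to argue by induction on the common tree structure of $\Gamma$ and $\Delta$, following the derivation of $\Gamma \approx \Delta$. The statement about carrier sets is immediate, since carriers are insensitive to sensitivity annotations and to variable names: $|\intrp{[x:\tau]_s}| = |\intrp{\tau}| = |\intrp{[y:\tau]_r}|$ and $|\intrp{\Gamma_1 \,_q \Gamma_2}| = |\intrp{\Gamma_1}| \times |\intrp{\Gamma_2}|$; because $Contr(p,\Gamma,\Delta)$ has the same tree shape as $\Gamma$ (it only changes sensitivities), the same induction gives $|\intrp{Contr(p,\Gamma,\Delta)}| = |\intrp{\Gamma}| = |\intrp{\Delta}|$, so the diagonal $\delta(x) = (x,x)$ at least makes sense as a map of the stated type.

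The base cases for non-expansiveness are direct. For $\Gamma = \Delta = \cdot$, the target $1 \otimes_p 1$ carries the zero metric, so $\delta$ is trivially non-expansive. For $\Gamma = [x:\tau]_s$ and $\Delta = [y:\tau]_r$ we have $Contr(p,\Gamma,\Delta) = [x:\tau]_{\sqrt[p]{s^p + r^p}}$, so the source is $!_{\sqrt[p]{s^p+r^p}}\intrp{\tau}$ and the target $!_s\intrp{\tau} \otimes_p !_r\intrp{\tau}$; since $d((a,a),(b,b)) = \sqrt[p]{(s\cdot d_\tau(a,b))^p + (r\cdot d_\tau(a,b))^p} = \sqrt[p]{s^p+r^p}\cdot d_\tau(a,b)$, the diagonal is in fact an isometry here (the cases where $s$, $r$, or $p$ equal $\infty$, where the $p$-th root collapses to $\max$, follow from the conventions of \Cref{sec:metric-spaces} and \Cref{fig:type-interp}).

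For the inductive step, let $\Gamma = \Gamma_1 \,_q \Gamma_2$ and $\Delta = \Delta_1 \,_q \Delta_2$ with $\Gamma_i \approx \Delta_i$, and let $\delta_i : \intrp{Contr(p,\Gamma_i,\Delta_i)} \to \intrp{\Gamma_i} \otimes_p \intrp{\Delta_i}$ be the diagonals supplied by the induction hypothesis. Unfolding the definition of $Contr$ (\Cref{fig:bunchoperations}), the interpretation of bunches, and \Cref{prop:scaled-bunch-comm}, the source is $!_{c(p,q)}\bigl(\intrp{Contr(p,\Gamma_1,\Delta_1)} \otimes_q \intrp{Contr(p,\Gamma_2,\Delta_2)}\bigr)$. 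I would then build $\delta$ as the composite of $!_{c(p,q)}(\delta_1 \otimes_q \delta_2)$ — non-expansive because $\otimes_q$ acts functorially on non-expansive maps and $!_{c(p,q)}$ preserves non-expansiveness — which lands in $!_{c(p,q)}\bigl((\intrp{\Gamma_1}\otimes_p\intrp{\Delta_1}) \otimes_q (\intrp{\Gamma_2}\otimes_p\intrp{\Delta_2})\bigr)$, followed by the ``swap the middle two factors'' isomorphism into $(\intrp{\Gamma_1}\otimes_q\intrp{\Gamma_2}) \otimes_p (\intrp{\Delta_1}\otimes_q\intrp{\Delta_2}) = \intrp{\Gamma} \otimes_p \intrp{\Delta}$. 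Tracing $(x_1,x_2)$ through: $!_{c(p,q)}(\delta_1 \otimes_q \delta_2)$ produces $\bigl((x_1,x_1),(x_2,x_2)\bigr)$ and the swap turns this into $\bigl((x_1,x_2),(x_1,x_2)\bigr)$, i.e.\ $(x,x)$ with $x = (x_1,x_2)$, so the composite really is $\delta$.

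The main obstacle is the non-expansiveness of that swap. When $p \neq \infty$ and $q \neq \infty$ this is exactly \Cref{prop:tensor-distr}, instantiating $X,Y,Z,W$ as $\intrp{\Gamma_1},\intrp{\Delta_1},\intrp{\Gamma_2},\intrp{\Delta_2}$ and its parameters $p,q$ as our $q,p$; the scaling it requires, $c(q,p)$, agrees with the $c(p,q)$ we have precisely because $p$ and $q$ are both finite. When $p = \infty$ or $q = \infty$, one of the products involved is the $\max$-metric product: \Cref{prop:tensor-distr} does not apply verbatim (its hypothesis forces the relevant index to be finite, and for $p = \infty$ the factors $c(p,q)$ and $c(q,p)$ differ), so these cases must be discharged by a short direct computation. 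They reduce to the $n = 2$ instances of \Cref{prop:relations_Lp_norms} — the monotonicity of $L^q$-norms together with $\sqrt[p]{x^p + y^p} \le 2^{1/p}\max(x,y)$, and their $\max$-versions — which give exactly the bound the swap needs (with scaling $1$ when $p = \infty$ and $2^{1/p}$ when $q = \infty$, matching $c(p,q)$ in each case). Apart from these edge cases and the routine bookkeeping of $0$ and $\infty$, the argument is a straightforward structural induction.
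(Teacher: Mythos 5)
Your proof is correct and takes essentially the same route as the paper's: induction on the derivation of $\Gamma \approx \Delta$, a direct computation at empty bunches and leaves, and in the inductive step the factorization of the diagonal as $!_{c(p,q)}(\delta_1 \otimes_q \delta_2)$ followed by the middle-swap map, justified via \Cref{prop:scaled-bunch-comm}, \Cref{prop:scaling-distr} and \Cref{prop:tensor-distr}. If anything you are more careful than the paper at the corners: the paper dispatches $p=\infty$ up front (since $Contr(\infty,-,-)$ is a pointwise max and $\otimes_\infty$ a categorical product) and does not explicitly address the $q=\infty$ instance of the swap or the $c(p,q)$ versus $c(q,p)$ bookkeeping, which your direct two-component norm computations handle.
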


\paragraph{Function Types}
The metric on $\multimap_p$ can be justified by the following result:
\begin{proposition}\label{prop:adjunction}
  For every metric space $X$ and every $p \in \R^{\geq 1}_\infty$, there is an
  adjunction of type $(-) \otimes_p X \dashv X \multimap_p (-)$ in $\Met$ given
  by currying and uncurrying.  (Both constructions on metric spaces are extended
  to endofunctors on $\Met$ in the obvious way.)
\end{proposition}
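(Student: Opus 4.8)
The plan is to present the adjunction as a natural bijection of hom-sets $\Met(A \otimes_p X, B) \cong \Met(A, X \multimap_p B)$ implemented by currying and uncurrying. The point is that this bijection, together with its naturality and the triangle identities, already holds at the level of the underlying sets — where $(-) \times |X| \dashv |X| \to (-)$ — so the only genuine content is checking that currying and uncurrying preserve non-expansiveness.

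First I would record an auxiliary fact about the metric of \Cref{eq:multimap-dist}: the infimum defining $d_{A \multimap_p B}(f,g)$ is \emph{attained}, i.e., writing $r_0$ for that infimum, the inequality $d_B(f(x), g(y)) \le \sqrt[p]{r_0^p + d_A(x,y)^p}$ still holds for all $x, y \in A$. This follows because the set of admissible $r$ is upward closed (the right-hand side is monotone in $r$) and, for fixed $x, y$, the function $r \mapsto \sqrt[p]{r^p + d_A(x,y)^p}$ is continuous on $\Dist$ — including the boundary cases $p = \infty$, where it is $\max(r, d_A(x,y))$, and $d_A(x,y) = \infty$ or $r_0 = \infty$, where the claim is immediate — so we may pass to the limit $r \downarrow r_0$ in the inequality. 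In passing this also shows $d_{A \multimap_p B}$ is a genuine pseudosemimetric (reflexivity uses that $f$ is non-expansive; symmetry is clear), so $X \multimap_p B$ is indeed an object of $\Met$.

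Next I would check the two directions. If $f : A \otimes_p X \to B$ is non-expansive then for each $a$ the slice $x \mapsto f(a,x)$ is non-expansive, since $d_B(f(a,x), f(a,y)) \le \sqrt[p]{d_A(a,a)^p + d_X(x,y)^p} = d_X(x,y)$; and $\mathrm{curry}(f) : A \to X \multimap_p B$ is non-expansive because, taking $r = d_A(a,a')$ in the definition of the $\multimap_p$ metric, the required inequality $d_B(f(a,x), f(a',y)) \le \sqrt[p]{d_A(a,a')^p + d_X(x,y)^p}$ is exactly non-expansiveness of $f$. Conversely, if $g : A \to X \multimap_p B$ is non-expansive, I would invoke the attainment lemma: since $d_{X \multimap_p B}(g(a), g(a')) \le d_A(a,a')$ and the infimum is attained, $d_B(g(a)(x), g(a')(y)) \le \sqrt[p]{d_{X \multimap_p B}(g(a),g(a'))^p + d_X(x,y)^p} \le \sqrt[p]{d_A(a,a')^p + d_X(x,y)^p}$, which is precisely non-expansiveness of $\mathrm{uncurry}(g) : A \otimes_p X \to B$. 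As currying and uncurrying are mutually inverse operations on functions, this yields the bijection.

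Finally I would promote both constructions to endofunctors — $(-) \otimes_p X$ sends $k$ to $k \times \mathrm{id}_X$, non-expansive since $k$ is; $X \multimap_p (-)$ sends $h$ to postcomposition with $h$, non-expansive by the same limiting argument — and observe that naturality of the bijection in $A$ and $B$ holds because the relevant squares commute in $\mathbf{Set}$ and all maps involved are the underlying set-theoretic ones. (Equivalently, one may exhibit the unit $\eta_A(a) = (x \mapsto (a,x))$ and the counit $\varepsilon_B(\phi, x) = \phi(x)$, proving $\varepsilon$ non-expansive via the attainment lemma and the triangle identities by the same computations as in $\mathbf{Set}$.) I expect the attainment lemma, together with its $p = \infty$ and infinite-distance boundary cases, to be the only real obstacle; the rest is either a one-line metric estimate or formal reasoning inherited from the cartesian closed structure of $\mathbf{Set}$.
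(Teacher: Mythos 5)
Your proposal is correct and follows the route the paper itself intends: the metric of \Cref{eq:multimap-dist} is chosen precisely so that evaluation, currying and uncurrying are non-expansive, and the paper only sketches this, so your explicit verification supplies exactly the missing content. The one genuinely delicate point — that the infimum in \Cref{eq:multimap-dist} is attained (needed for uncurrying and for the counit), including the $p=\infty$ and infinite-distance cases — is identified and handled correctly by your limiting argument, and the remaining steps (slices, admissibility of $r = d_A(a,a')$, functoriality by postcomposition, naturality inherited from $\mathbf{Set}$) are all sound.
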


Because right adjoints are unique up to isomorphism, this definition is a direct
generalization of the metric on functions used in
Fuzz~\cite{DBLP:conf/icfp/ReedP10, DBLP:conf/popl/AmorimGHKC17,
  DBLP:conf/lics/AmorimGHK19}, which corresponds to $\multimap_1$.

\begin{restatable}{theorem}{supdistance}
  \label{thm:d-iff-dhat}
  Suppose that $A$ and $B$ are \emph{proper} metric spaces, and let
  $f, g : A \to B$ be non-expansive.  Then
  $d_{A \multimap_1 B}(f, g) = \sup_x d_B(f(x),g(x))$.
\end{restatable}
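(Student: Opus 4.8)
The plan is to unfold the definition of $d_{A \multimap_1 B}$ from \Cref{eq:multimap-dist} specialized to $p = 1$ and show that the infimum appearing there is attained exactly at $\sup_x d_B(f(x),g(x))$. Write $S = \sup_{x \in A} d_B(f(x),g(x)) \in \Dist$ and let $\mathcal R = \{\, r \in \Dist \mid \forall x,y \in A,\ d_B(f(x),g(y)) \le r + d_A(x,y) \,\}$, so that $d_{A \multimap_1 B}(f,g) = \inf \mathcal R$. I will establish $\inf \mathcal R = S$ by showing that $S$ is simultaneously a lower bound for $\mathcal R$ and a member of $\mathcal R$.

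For the lower bound, I would fix an arbitrary $r \in \mathcal R$ and an arbitrary $x \in A$; instantiating the defining condition of $\mathcal R$ with $y := x$ and using the reflexivity axiom $d_A(x,x) = 0$ yields $d_B(f(x),g(x)) \le r$. Taking the supremum over $x$ gives $S \le r$, and since $r \in \mathcal R$ was arbitrary, $S \le \inf \mathcal R$. This direction uses nothing beyond the metric axiom $d_A(x,x)=0$.

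For membership, I would fix arbitrary $x, y \in A$ and bound $d_B(f(x),g(y))$ by routing through the point $g(x)$, using the triangle inequality in $B$ — which is available precisely because $B$ is assumed to be a \emph{proper} metric space:
\[
  d_B(f(x),g(y)) \le d_B(f(x),g(x)) + d_B(g(x),g(y)) \le S + d_A(x,y),
\]
where the first summand is bounded by $S$ by definition of $S$ and the second by non-expansiveness of $g$. Hence $S \in \mathcal R$, so $\inf \mathcal R \le S$. Combining the two inequalities gives $d_{A \multimap_1 B}(f,g) = \inf \mathcal R = S$. The $\infty$-valued cases need no special care: even when $S = \infty$ we still have $S \in \mathcal R$ (the inequality is vacuous since $r + \infty = \infty$), so $\mathcal R$ is nonempty and the argument is unchanged.

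There is no serious obstacle here; the only points that require attention are (i) recognizing that the properness hypothesis on $B$ is exactly what licenses the triangle step in the membership argument, and that this step genuinely fails for $p > 1$ — there one would need $d_B(f(x),g(y)) \le \sqrt[p]{S^p + d_A(x,y)^p}$, but the triangle inequality only delivers the weaker $S + d_A(x,y) \ge \sqrt[p]{S^p + d_A(x,y)^p}$, which is why the theorem is stated only for $\multimap_1$; and (ii) checking that the arithmetic of $\Dist$ behaves as expected in the edge cases. Note that properness of $A$ and non-expansiveness of $f$ are not needed; symmetrically, one could route through $f(y)$ instead of $g(x)$ and use non-expansiveness of $f$.
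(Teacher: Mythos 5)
Your proof is correct and follows essentially the same route as the paper's: both directions rest on specializing $y := x$ with $d_A(x,x) = 0$ for the lower bound, and on the triangle inequality in $B$ together with non-expansiveness of $g$ for the upper bound; phrasing it as ``$S$ is both a lower bound for and a member of $\mathcal R$'' is just a repackaging of the paper's comparison of upper-bound sets. Your side observations (only properness of $B$ and non-expansiveness of one of the two functions are used, and why the argument breaks for $p>1$) are accurate but do not change the argument.
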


We conclude with another subtyping result involving function spaces.

\begin{restatable}{theorem}{lollisubtyping}
  \label{thm:lolli-1-le-lolli-p}
  For all non-expansive functions $f, g \in A \to B$ and $p\geq1$, we have
  $d_{A \multimap_1 B}(f,g) \le d_{A \multimap_p B}(f,g).$ In particular, the
  identity function is a non-expansive function of type
  $(A \multimap_p B) \to (A \multimap_1 B)$.
\end{restatable}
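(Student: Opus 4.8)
The plan is to unfold the definition of the function-space metric from \Cref{eq:multimap-dist} and to show that every real (or infinite) witness $r$ for the $\multimap_p$ distance is also a witness for the $\multimap_1$ distance; then the infimum defining $d_{A \multimap_1 B}(f,g)$ is taken over a set that contains the one defining $d_{A \multimap_p B}(f,g)$, hence is no larger.

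Concretely, I would first establish the pointwise inequality $\sqrt[p]{r^p + d^p} \le r + d$ for all $r, d \in \Dist$ and all $p \ge 1$. For finite $p$, $r$, $d$ this is exactly \Cref{prop:relations_Lp_norms} (\Cref{prop:relations_Lp_norms1}) applied to the vector $(r,d) \in (\Dist)^2$ with the pair of exponents $1 \le p$, i.e.\ $\|(r,d)\|_p \le \|(r,d)\|_1$. The case $p = \infty$ follows since $\|(r,d)\|_\infty = \max(r,d) \le r + d$, and the case where $r = \infty$ or $d = \infty$ is immediate because then both sides equal $\infty$.

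Then, suppose $r \in \Dist$ lies in the set defining $d_{A \multimap_p B}(f,g)$, i.e.\ $d_B(f(x), g(y)) \le \sqrt[p]{r^p + d_A(x,y)^p}$ for all $x, y \in A$. Instantiating the pointwise inequality above at $d = d_A(x,y)$ gives $d_B(f(x),g(y)) \le r + d_A(x,y)$ for all $x, y$, which says exactly that $r$ lies in the set defining $d_{A \multimap_1 B}(f,g)$. Hence the admissible set of $r$ for $\multimap_p$ is contained in that for $\multimap_1$, and taking infima yields $d_{A \multimap_1 B}(f,g) \le d_{A \multimap_p B}(f,g)$. For the ``in particular'' claim, the carrier sets of $A \multimap_p B$ and $A \multimap_1 B$ are both the set $A \multimap B$ of non-expansive functions from $A$ to $B$, so the identity is a well-defined map between these two metric spaces, and the inequality just proved is precisely the statement that it is non-expansive (indeed $1$-sensitive).

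I do not anticipate a genuine obstacle here: the argument is a one-line comparison of $L^p$ pseudonorms on a two-element vector together with monotonicity of $\inf$ under set inclusion. The only point requiring care is the arithmetic in $\Dist$ — handling the $p = \infty$ limit definition of the pseudonorm and the possibility of infinite distances as explicit edge cases rather than glossing over them.
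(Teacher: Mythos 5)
Your proposal is correct, and it rests on the same key fact as the paper's own proof---the comparison $\|(r,d)\|_p \le \|(r,d)\|_1$ from \Cref{prop:relations_Lp_norms}---but the way you exploit it is different and noticeably cleaner. The paper reasons about where the two infima are attained: it writes both distances as $\arginf$s, asserts that at the optimum the defining constraints become equalities ($r_1 + d_A(x,y) = \sqrt[p]{r_2^p + d_A(x,y)^p}$), and then compares $r_1$ with $r_2$ via monotonicity of the $L^p$ norms. That step is shaky as written: the infimum need not be attained, and the constraint is a universally quantified family of inequalities over all $x,y \in A$, so there is no single pair at which it can be assumed to hold with equality. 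Your argument sidesteps this entirely: you transfer witnesses, showing that any $r$ admissible for $d_{A \multimap_p B}(f,g)$ is admissible for $d_{A \multimap_1 B}(f,g)$ because $\sqrt[p]{r^p + d^p} \le r + d$ (with the $p=\infty$ and infinite-distance cases handled explicitly), and then conclude by monotonicity of infima under set inclusion. What this buys is a fully rigorous proof with no attainment assumption and correct treatment of the edge cases in $\Dist$; the ``in particular'' clause then follows exactly as you state, since $A \multimap_p B$ and $A \multimap_1 B$ share the carrier $A \multimap B$ and the proved inequality is literally non-expansiveness of the identity.
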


\paragraph{Probability Distributions}

Prior work~\cite{DBLP:conf/lics/AmorimGHK19} proves that the return and bind
operations on probability distributions can be seen as non-expansive functions:
\begin{align*}
  \eta & : {!_\infty A} \to \bigcirc_PA \\
  (-)^\dagger(-) & : (!_\infty A \multimap_1 \bigcirc_PB) \otimes_1 \bigcirc_PA
                   \to \bigcirc_P B.
\end{align*}
These properties ensure the soundness of the typing rules for $\bigcirc_P$ in
Fuzz, and also in \system{}.  For $\bigcirc_H$, we can use the following
composition principle.
\begin{restatable}{theorem}{bindsoundness}
  \label{thm:bind-soundness}
  The following types are sound for the monadic operations on distributions,
  seen as non-expansive operations, for any $p\geq 1$:
  \begin{align*}
    \eta
    & : {!_\infty A} \to \bigcirc_H A \\
    (-)^\dagger(-)
    & : (!_\infty A \multimap_p \bigcirc_H B) \otimes_2 \bigcirc_H A \to
      \bigcirc_H B.
  \end{align*}
\end{restatable}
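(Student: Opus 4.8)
The plan is to establish the two typings separately, the first being essentially immediate. For $\eta : {!_\infty A} \to \bigcirc_H A$: the metric on ${!_\infty A}$ assigns distance $\infty$ to distinct points and $0$ to equal ones, so the non-expansiveness inequality $\HD_A(\eta(x), \eta(y)) \le d_{{!_\infty A}}(x,y)$ holds trivially when $x \neq y$, and reduces to $\HD_A(\eta(x), \eta(x)) = 0$ when $x = y$ (recall $\eta(x)$ is the point mass at $x$).

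For the bind operator I would first simplify the source metric. Since every pair of distinct points of ${!_\infty A}$ is at distance $\infty$, the term $d_{{!_\infty A}}(x,y)^p$ appearing in \Cref{eq:multimap-dist} is $0$ when $x = y$ and $\infty$ otherwise; hence, for \emph{every} $p$, the metric $d_{{!_\infty A} \multimap_p \bigcirc_H B}(f,g)$ collapses to $\sup_x \HD_B(f(x), g(x))$. Write $r$ for this quantity and $t = \HD_A(\mu,\nu)$. The domain $({!_\infty A} \multimap_p \bigcirc_H B) \otimes_2 \bigcirc_H A$ of the bind operator carries the $L^2$ distance, so its distance between $(f,\mu)$ and $(g,\nu)$ is exactly $\sqrt{r^2 + t^2}$; it therefore suffices to prove
\[ \HD_B(f^\dagger\mu, g^\dagger\nu)^2 \;\le\; r^2 + t^2. \]

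The key move is to pass to the Bhattacharyya affinity $\mathrm{BC}(\mu,\nu) = \sum_x \sqrt{\mu(x)\nu(x)}$; expanding the square in \Cref{eq:hellinger-dist} yields the identity $\HD(\mu,\nu)^2 = 1 - \mathrm{BC}(\mu,\nu)$ (so, in particular, $\HD \le 1$ everywhere and $r \le 1$). The displayed inequality is then equivalent to $\mathrm{BC}(f^\dagger\mu, g^\dagger\nu) \ge 1 - r^2 - t^2$. I would bound the left-hand affinity from below by applying Cauchy--Schwarz (\Cref{prop:Holder} with $p = q = 2$) in the summation variable $x$, separately for each output $y$: since $f^\dagger\mu(y) = \sum_x \mu(x) f(x)(y)$ and likewise for $g^\dagger\nu$,
\[ \sqrt{f^\dagger\mu(y)} \cdot \sqrt{g^\dagger\nu(y)} \;\ge\; \sum_x \sqrt{\mu(x) f(x)(y)} \cdot \sqrt{\nu(x) g(x)(y)} \;=\; \sum_x \sqrt{\mu(x)\nu(x)} \cdot \sqrt{f(x)(y) g(x)(y)}. \]
Summing over $y$ and exchanging the two (nonnegative) sums gives $\mathrm{BC}(f^\dagger\mu, g^\dagger\nu) \ge \sum_x \sqrt{\mu(x)\nu(x)} \cdot \mathrm{BC}(f(x), g(x))$. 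Each factor satisfies $\mathrm{BC}(f(x), g(x)) = 1 - \HD_B(f(x), g(x))^2 \ge 1 - r^2 \ge 0$, so pulling it out yields $\mathrm{BC}(f^\dagger\mu, g^\dagger\nu) \ge (1-r^2)\mathrm{BC}(\mu,\nu) = (1-r^2)(1-t^2) \ge 1 - r^2 - t^2$, which is what we wanted. Unfolding the definitions of $\otimes_2$ and $d_{\multimap_p}$ once more then gives the claimed typing of $(-)^\dagger(-)$.

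I expect the main obstacle to be obtaining the $L^2$ (Pythagorean) combination of the two distances, which is exactly what matches the $\otimes_2$ in the statement: the naive two-step triangle-inequality argument, $\HD_B(f^\dagger\mu, g^\dagger\nu) \le \HD_B(f^\dagger\mu, f^\dagger\nu) + \HD_B(f^\dagger\nu, g^\dagger\nu)$, only delivers the additive bound $r + t$, i.e.\ compatibility with $\otimes_1$ rather than $\otimes_2$. The affinity reformulation is what makes the \emph{squared} distances combine additively and lets Cauchy--Schwarz supply the correct exponent; beyond that, the remaining work is routine bookkeeping: checking that $f^\dagger\mu$ is again a distribution over $B$, that the sums (over the countable supports involved) may be rearranged and all converge because every affinity lies in $[0,1]$, and that the collapse of the $\multimap_p$ metric described above is also valid at $p = \infty$.
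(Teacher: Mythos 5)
Your argument is correct, but it takes a genuinely different route from the paper. The paper's proof does not reprove the Hellinger composition principle: it cites it as an external result (Proposition 5 of Barthe and Olmedo), namely $\HD_B(\lift f \mu, \lift g \nu) \le \sqrt{\HD_A(\mu,\nu)^2 + \sup_{x} \HD_B(f(x),g(x))^2}$, and then converts the $\sup$ into a function-space distance in two steps: first it identifies $\sup_x \HD_B(f(x),g(x))$ with $d_{\multimap_1}(f,g)$ via \Cref{thm:d-iff-dhat} (which is where the properness hypotheses enter), and then it relaxes $\multimap_1$ to $\multimap_p$ via \Cref{thm:lolli-1-le-lolli-p}, yielding the stated type $(!_\infty A \multimap_p \bigcirc_H B) \otimes_2 \bigcirc_H A \to \bigcirc_H B$. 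You instead (i) prove the composition inequality from scratch through the Bhattacharyya affinity, the identity $\HD^2 = 1 - \mathrm{BC}$, and Cauchy--Schwarz (your chain $\mathrm{BC}(f^\dagger\mu,g^\dagger\nu) \ge (1-r^2)(1-t^2) \ge 1-r^2-t^2$ is exactly the standard argument behind the cited proposition and checks out, including the sign condition $r\le 1$); and (ii) avoid both auxiliary theorems by observing that over the domain $!_\infty A$ the $\multimap_p$ metric of \Cref{eq:multimap-dist} collapses to $\sup_x \HD_B(f(x),g(x))$ uniformly in $p$ (including $p=\infty$), which handles all $p$ at once. Your version buys self-containedness and sidesteps the properness side conditions needed for \Cref{thm:d-iff-dhat}; the paper's version buys brevity by reusing the literature and lemmas it has already established, and its route through $\multimap_1$ makes explicit that the generalization to arbitrary $p$ is just the subtyping fact $d_{\multimap_1} \le d_{\multimap_p}$. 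You also spell out the $\eta$ case, which the paper dismisses with ``we focus on bind''; your treatment of it is the intended one.
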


\paragraph{Derivations} Finally, a derivation tree builds a function from the context's space to the subject's space.
In the following definition, we use the metavariables $\gamma$ and $\delta$ to
denote variable assignments---that is, mappings from the variables of
environments $\Gamma$ and $\Delta$ to elements of the corresponding metric
spaces.  We use $\gamma(\delta)$ to represent an assignment in
$\intrp{\Gamma(\Delta)}$ that is decomposed into two assignments $\gamma(\star)$
and $\delta$ corresponding to the $\Gamma(\star)$ and $\Delta$ portions.
Finally, we use the $\lambda$-calculus notation $f \; x$ to denote a function
$f$ being applied to the value $x$.

\begin{definition}
  \label{def:derivation-semantics}
  Given a derivation $\pi$ proving $\:\Gamma \vdash e : \tau$, its
  interpretation $\intrp{\pi}\in \intrp \Gamma \rightarrow \intrp \tau$ is given
  by structural induction on $\pi$ as follows:

\begin{tabular}{ll}
  $\intrp {Axiom} \triangleq \lambda x .\; x $ &$\intrp {\R I } \triangleq \lambda () .\; r \in \R $ \\
 $\intrp {\multimap I \;\pi} \triangleq \lambda \gamma .\; \lambda x .\; \intrp \pi \;(\gamma , x) $
 & $ \intrp{\multimap E \; \pi_1 \; \pi_2} \triangleq \lambda (\gamma, \delta).\; \intrp {\pi_2} \; \gamma \; ( \intrp{\pi_1} \; \delta )$\\
 $\intrp {1I} \triangleq \lambda () .\; () $
  & $\intrp {\otimes I \; \pi_1 \; \pi_2}  \triangleq \lambda (\gamma , \delta) .\; (\intrp {\pi_1} \; \gamma) , (\intrp {\pi_2} \; \delta)$\\
&$\intrp {\otimes E \; \pi_1 \; \pi_2}
    \triangleq \lambda \gamma(\delta) .\; \intrp {\pi_2} \; \gamma(\intrp {\pi_1} \delta )$ \\
$\intrp {\oplus_iI \; \pi } \triangleq \lambda \gamma .\; \inj_i \intrp {\pi} \;\gamma $&
$\intrp {\oplus E \; \pi_1 \; \pi_2 } \triangleq \lambda \delta (\gamma).\; [ \intrp {\pi_2}  ,\intrp {\pi_3}] (\delta(\intrp {\pi_1}\gamma))$\\
%$\intrp {\oplus E \; \pi_1 \; \pi_2 } \triangleq \lambda \gamma (\inj_1 \; a).\; \intrp {\pi_1} \; \gamma( a )$&
% $\intrp {\oplus E \; \pi_1 \; \pi_2 } \triangleq \lambda \gamma (\inj_2 \; b).\; \intrp {\pi_2} \; \gamma( b )$ \\
$\intrp {!I \; \pi} \triangleq \intrp{\pi}$&  $\intrp {!E \; \pi_1 \; \pi_2} \triangleq \lambda \; \delta(\gamma).\;
    \intrp{\pi_2} \; \delta(\intrp {\pi_1}\; \gamma)$\\
 $\intrp {Contr \; \pi } \triangleq \lambda \gamma(\delta).\; \intrp \pi \; \gamma( \delta, \delta )$ &
$\intrp {Weak \; \pi } \triangleq \lambda \gamma(\delta) .\; \intrp \pi \; \gamma( \: () \: )$\\
$\intrp {Exch \; \pi } \triangleq \lambda \gamma'. \intrp{\pi} \phi_{\gamma' \slash \gamma}(\gamma') $ &
$\intrp {\mbox{Bind-P} \; \pi_1 \; \pi_2 } \triangleq \lambda \gamma' .\; {(\intrp {\pi_2}\gamma')}^\dagger(\intrp {\pi_1}\gamma')$\\
$\intrp {\mbox{Return-P} \; \pi } \triangleq \lambda \gamma .\;
    \eta(\intrp{\pi}\; \gamma)$ &
%$\intrp {Bind \; \pi_1 \; \pi_2 } \triangleq \lambda \gamma .\;
%    \text{bind}\; \intrp{\pi_1} \intrp{\pi_2}$\\
%$\intrp {Return \; \pi } \triangleq \lambda \gamma .\;
%    \text{return}\; (\intrp{\pi_1}\; \gamma)$ &
\end{tabular}
 % $\intrp {Exch \; \pi } \triangleq \lambda \Gamma'. \intrp{\pi} \Gamma $
%

where in $\intrp {Exch \; \pi }$, the map $\phi_{\Gamma' \slash \Gamma}$ is the
isomorphism defined by \Cref{thm:bunch-monoid-isomorphism}.  and for the two
last cases see definitions in equations (\ref{eq:kleisli-distrReturn}) and
(\ref{eq:kleisli-distr}) (Bind-H and Return-H are defined in the same way).
\end{definition}

\begin{restatable}[Soundness]{theorem}{soundness}
  \label{thm:derivation-semantics-non-expansive}
  Given a derivation $\pi$ proving $\Gamma \vdash e : \tau$, then $\intrp \pi$
  is a non-expansive function from the space $\intrp \Gamma$ to the space
  $\intrp \tau$.
\end{restatable}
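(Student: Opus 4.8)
The plan is to prove \Cref{thm:derivation-semantics-non-expansive} by structural induction on the derivation $\pi$, showing at each step that the function $\intrp\pi$ given in \Cref{def:derivation-semantics} is non-expansive from $\intrp\Gamma$ to $\intrp\tau$. Each typing rule becomes one case, and for each we must verify that the distance in the codomain is bounded by the distance in the domain, using the interpretations of bunches ($,_p \mapsto \otimes_p$, $[x:\tau]_s \mapsto {!_s\intrp\tau}$) from \Cref{fig:type-interp} together with the structural facts already collected: \Cref{prop:scaling-distr}, \Cref{prop:scaled-bunch-comm}, \Cref{prop:relationTensors}, \Cref{prop:tensor-distr}, \Cref{prop:contrsound}, \Cref{prop:adjunction}, \Cref{thm:bunch-monoid-isomorphism}, and \Cref{thm:bind-soundness} (plus the prior-work facts about $\eta$ and $(-)^\dagger$ for $\bigcirc_P$).

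First I would dispatch the easy cases. \textsf{Axiom} is the identity on $\intrp{[x:\tau]_s}={!_s\intrp\tau}$, which is non-expansive into $\intrp\tau$ exactly because $s\geq 1$ (the scaling-by-$s$ distance dominates $d_\tau$). \textsf{\RI} and \textsf{1I} are constant functions out of the one-point space $1$. \textsf{Weak} is post-composition with the projection that forgets $\intrp\Delta$, which is non-expansive since adding a component under any $\otimes_p$ only increases distance; \textsf{Exch} is post-composition with the isomorphism of \Cref{thm:bunch-monoid-isomorphism}. \textsf{$\oplus_i$I} is post-composition with a coproduct injection (an isometry), and \textsf{$\oplus$E} is handled by the disjointness of the two summands in the metric on $A\oplus B$, reducing to the two inductive hypotheses after plugging into the hole $\Delta(\star)$. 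For $!I$ the interpretation is literally unchanged ($\intrp{!I\,\pi}=\intrp\pi$) and by \Cref{prop:scaled-bunch-comm} we have $\intrp{s\Gamma}={!_s\intrp\Gamma}$ while $\intrp{!_s\tau}={!_s\intrp\tau}$, so non-expansiveness transports through the $s$-scaling.

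The genuinely structural cases are $\multimap$I/$\multimap$E, $\otimes$I/$\otimes$E, $!E$, \textsf{Contr}, and the four probabilistic rules. For $\multimap$I/$\multimap$E the key is \Cref{prop:adjunction}: $\multimap$I is precisely currying along the adjunction $(-)\otimes_p X \dashv X\multimap_p(-)$ applied to $\intrp\pi:\intrp\Gamma\otimes_p\intrp\tau\to\intrp\sigma$, and $\multimap$E is post-composition with the (non-expansive) counit/application map, after pairing the two inductive hypotheses across $\intrp{\Gamma,_p\Delta}=\intrp\Gamma\otimes_p\intrp\Delta$. For $\otimes$I we pair the two IHs, which is non-expansive into $\intrp\Gamma\otimes_p\intrp\Delta=\intrp{\Gamma,_p\Delta}$. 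For $\otimes$E and $!E$ we must reason about plugging a bunch into a hole $\Gamma(\star)$: non-expansiveness is compositional in the hole, so it suffices to know that substituting a non-expansive map into the distinguished position of a bunch context yields a non-expansive map on the whole context; this follows because every bunch constructor ($!_s(-)$, $(-)\otimes_q(-)$) is functorial and (weakly) monotone on $\Met$, and the $s$-scaling on $\Delta$ matches the $rs$ vs. $r\Gamma$ bookkeeping via \Cref{prop:scaling-distr}. The \textsf{Contr} case is exactly where \Cref{prop:contrsound} is designed to be used: the diagonal $\delta(x)=(x,x)$ is non-expansive of type $\intrp{Contr(p,\Delta,\Delta')}\to\intrp\Delta\otimes_p\intrp\Delta'$, so plugging it into the hole and composing with the IH for $\Gamma(\Delta,_p\Delta')\vdash e$ (after a renaming, which is sound by \Cref{lem:renaming} and the fact that $\intrp\Delta$ and $\intrp{\Delta'}$ share a carrier set) gives the result. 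Finally, the Bind/Return rules for $\bigcirc_P$ reduce to the prior-work typings of $\eta$ and $(-)^\dagger$, and those for $\bigcirc_H$ reduce to \Cref{thm:bind-soundness}; in both cases \textsf{Bind} additionally needs $\intrp{Contr(1,\Gamma,\Delta)}$ resp. $\intrp{Contr(2,\Gamma,\Delta)}$ to factor non-expansively through $\intrp\Gamma\otimes_1\intrp\Delta$ resp. $\intrp\Gamma\otimes_2\intrp\Delta$, which is again \Cref{prop:contrsound}, after which one curries/pairs exactly as in the $\otimes$/$\multimap$ cases and applies the bind typing.

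I expect the main obstacle to be the bookkeeping around bunches with holes --- making precise and checking the lemma that non-expansiveness is preserved under plugging a non-expansive function into a hole $\Gamma(\star)$, uniformly across the $!E$, $\otimes$E, $\oplus$E, and \textsf{Contr} rules, and reconciling the scaling factors (the $s$ in $\Gamma(s\Delta)$, the $rs$ in the premise) with \Cref{prop:scaling-distr} and \Cref{prop:scaled-bunch-comm}. The $c(p,q)$ fudge factors that appear inside $Contr$ (via \Cref{prop:tensor-distr}) also require care: one must confirm that the scaling absorbed into $\intrp{Contr(p,\Gamma,\Delta)}$ is exactly what \Cref{prop:contrsound} delivers, with no off-by-a-constant slack. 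None of the individual verifications is deep — they are all consequences of the propositions already established — but getting the hole-substitution lemma and the scaling arithmetic exactly right is where the real work lies.
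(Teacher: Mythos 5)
Your proposal is correct and follows essentially the same route as the paper's own proof: structural induction on the derivation, checking that each semantic clause of \Cref{def:derivation-semantics} is non-expansive and composing, with \Cref{prop:contrsound}, \Cref{prop:adjunction}, \Cref{prop:scaling-distr}, \Cref{thm:bunch-monoid-isomorphism}, and \Cref{thm:bind-soundness} discharging the respective cases. In fact your case analysis (including the hole-substitution/functoriality bookkeeping for $\Gamma(\star)$ and the scaling arithmetic) is spelled out in more detail than the paper's proof, which only sketches the \lolliE{} and $!$I cases and asserts the rest.
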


\section{Examples}\label{sect:examples}

We now look at examples of programs that illustrate the use of \lp metrics.

\paragraph{Currying and Uncurrying}

%The last two examples illustrate how product types with different metrics can be
%manipulated with primitives in a Fuzz-like type system.  Even before \system{},
%other languages had already noted that such alternative metrics could be useful.
%For example, the Duet language~\cite{DBLP:journals/pacmpl/NearDASGWSZSSS19}
%features primitive vector and matrix types under the $L^2$ distance, which are
%required for the Gaussian mechanism, one of the basic building blocks of
%differential privacy.
%
%The main novelty of \system{} is the use of bunches to simplify the manipulation
%of such product types component wise (e.g., by doing case analysis).
%Accordingly, we'll now shift gears and consider how the bunched environments of
%\system{} allow us to simplify the use of such types.  To start,
Let us illustrate the use of higher-order functions with combinators for
currying and uncurrying.
%\begin{align*}
%  curry &: ((\tau\otimes_p \sigma) \multimap_p \rho) \multimap (\tau\multimap_p \sigma \multimap_p \rho) & curry &\;f \; x \; y = f (x, y)\\
%  uncurry &: (\tau\multimap_p \sigma \multimap_p \rho) \multimap ((\tau\otimes_p \sigma) \multimap_p \rho). & uncurry &\;f \; z = \LetPairIn{x}{y}{z}{f\; x\; y}
%\end{align*}
\begin{align*}
  curry &: ((\tau\otimes_p \sigma) \multimap_p \rho) \multimap (\tau\multimap_p \sigma \multimap_p \rho)\\
  curry &\;f \; x \; y = f (x, y)\\
  uncurry &: (\tau\multimap_p \sigma \multimap_p \rho) \multimap ((\tau\otimes_p \sigma) \multimap_p \rho).\\
  uncurry &\;f \; z = \LetPairIn{x}{y}{z}{f\; x\; y}
\end{align*}
Note that the indices on $\otimes$ and $\multimap$ need to be the same. The
reason can be traced back to the $\multimap\mathrm{E}$ rule
(cf. \Cref{fig:BunchedFuzzTypingRules}), which uses the $,_p$ connective to
eliminate $\multimap_p$ \ifappendix (cf. \Cref{fig:curry-deriv} in the Appendix
  for a detailed derivation).  \else (cf. the currying and uncurrying derivation
  in the appendix of the full paper for a detailed derivation).  \fi If the
indices do not agree, currying is not possible; in other words, we cannot in
general soundly curry a function of type $\tau\otimes_p \sigma \multimap_q \rho$
to obtain something of type $\tau \multimap_p \sigma \multimap_q \rho$.
However, if $q \leq p$, note that it would be possible to soundly view
$\tau \otimes_q \sigma$ as a subtype of $\tau \otimes_p \sigma$, thanks to
\Cref{prop:relationTensors}.  In this case, we could then convert from
$\tau \otimes_p \sigma \multimap_q \rho$ to
$\tau \otimes_q \sigma \multimap_q \rho$ (note the variance), and then curry to
obtain a function of type $\tau \multimap_q \sigma \multimap_q \rho$.

\aaa{We should include an example demonstrating why this is not sound in
  general.}

\paragraph{Precise sensitivity for functions with multiple arguments}

Another useful feature of \system{} is that its contraction rule allows us to
split sensitivities more accurately than if we used the contraction rule that is
derivable in the original Fuzz.  Concretely, suppose that we have a program
$\lambda p. \LetPairIn{x}{y}{p} f(x,y) + g(x,y)$, where $f$ and $g$ have types
$f : (!_2\R) \otimes_2 \R \multimap \R$ and
$g : \R \otimes_2 (!_2\R) \multimap \R$, and where we have elided the wrapping
and unwrapping of $!$ types, for simplicity.

Let us sketch how this program is typed in \system{}. Addition belongs to
$\R \otimes_1 \R \multimap \R$, so by \Cref{prop:relationTensors} it can also be
given the type $!_{\sqrt {2}} (\R \otimes_2 \R) \multimap \R$. Thus, we can
build the following derivation for the body of the program:
\begin{mathpar}
\mprset { fraction ={===}}
  \inferrule*[Left=Contr]{
\Gamma \vdash  f(x_1,y_1) + g(x_2,y_2) : \R }
 {[x :\R ]_{\sqrt {10}} \; ,_2 [y :\R ]_{\sqrt {10}} \vdash  f(x,y) + g(x,y) : \R }
\end{mathpar}
% \begin{mathpar}
%   \inferrule*[Left=2Contr]{
% [x_1 :\R ]_{2\sqrt {2}} ,_2 [y_1 :\R ]_{\sqrt {2}} ,_2 [x_2 :\R
% ]_{\sqrt {2}} ,_2 [y_2 :\R ]_{2\sqrt {2}} \vdash  f(x_1,y_1) + g(x_2,y_2) : \R }
%  {[x :\R ]_{\sqrt {10}} \; ,_2 [y :\R ]_{\sqrt {10}} \vdash  f(x,y) + g(x,y) : \R }
% \end{mathpar}
where
$\Gamma= ([x_1 :\R ]_{2\sqrt {2}} ,_2 [y_1 :\R ]_{\sqrt {2}}) ,_2 ([x_2 :\R
]_{\sqrt {2}} ,_2 [y_2 :\R ]_{2\sqrt {2}})$, and where we used contraction twice
to merge the $x$s and $y$s. Note that
$||(2\sqrt {2}, \sqrt {2})||_2=\sqrt {8+2}=\sqrt {10}$, which is why the final
sensitivities have this form.
By contrast, consider how we might attempt to type this program directly in the
original Fuzz. Let us assume that we are working in an extension of Fuzz with
types for expressing the domains of $f$ and $g$, similarly to the $L^2$ vector
types of Duet~\cite{DBLP:journals/pacmpl/NearDASGWSZSSS19}.  Moreover, let us
assume that we have coercion functions that allow us to cast from
$(!_2\R) \otimes_2 (!_2\R)$ to $(!_2\R) \otimes_2 \R$ and
$\R \otimes_2 (!_2 \R)$.  If we have a pair
$p : !_2 ((!_2 \R) \otimes_2 (!_2\R))$, we can split its sensitivity to call $f$
and $g$ and then combine their results with addition.  However, this type is
equivalent to $!_4 (\R \otimes_2 \R)$, which means that the program was given a
worse sensitivity (since $\sqrt {10}< 4$).
Of course, it would also have been possible to extend Fuzz with a series of
primitives that implement precisely the management of sensitivities performed by
bunches.  However, here this low-level reasoning is handled directly by the type
system.

\aaa{This last point is a bit weak, since we do not have an implementation of
  the type checker.}

\paragraph{Programming with matrices}

The Duet language~\cite{DBLP:journals/pacmpl/NearDASGWSZSSS19} provides several
matrix types with the $L^1$, $L^2$, or $L^\infty$ metrics, along with primitive
functions for manipulating them.  In \system{}, these types can be defined
directly as follows: $\mathbb{M}_p[m,n] = \otimes_1^m \otimes_p^n \R$. Following
Duet, we use the $L^1$ distance to combine the rows and the $L^p$ distance to
combine the columns.
One advantage of having types for matrices defined in terms of more basic
constructs is that we can program functions for manipulating them directly,
without resorting to separate primitives.  For example, we can define the
following terms in the language:
\begin{align*}
addrow &: \mathbb{M}_{p} [1,n] \otimes_1 \mathbb{M}_{p} [m,n]
\multimap \mathbb{M}_{p} [m+1,n]  \\
addcolumn &: \mathbb{M}_{1} [1,m] \otimes_1 \mathbb{M}_{1} [m,n]
\multimap \mathbb{M}_{1} [m,n+1] \\
addition &: \mathbb{M}_{1} [m,n] \otimes_1 \mathbb{M}_{1} [m,n] \multimap \mathbb{M}_{1} [m,n].
\end{align*}
The first program, $addrow$, appends a vector, represented as a $1 \times n$
matrix, to the first row of a $m\times n$ matrix. The second program,
$addcolumn$, is similar, but appends the vector as a column rather than a row.
Because of that, it is restricted to $L^1$ matrices.
\aaa{Can we generalize addcolumn to $L^p$ matrices? It feels like it should be
  possible, at least if we use $\otimes_p$ to separate the two arguments.}
Finally, the last program, $addition$, adds the elements of two matrices
pointwise.

% One drawback of our encoding is that these programs need to be defined
% separately for each matrix dimension.  In practice, it would be desirable to
% have a dependently typed version of \system{}, along the lines of
% DFuzz~\cite{DBLP:conf/popl/GaboardiHHNP13}, to simplify the manipulation of
% matrices of arbitrary size.

\paragraph{Vector addition over sets}
Let us now show an example of a Fuzz term for which using $L^p$ metrics allows
to obtain a finer sensitivity analysis.
We consider sets of vectors in $\R^d$ and the function $\mathit{vectorSum}$
which, given such a set, returns the vectorial sum of its elements.  In Fuzz,
this function can be defined via a summation primitive
$\mathit{sum} : {!_\infty (!_\infty \tau \multimap \R)} \multimap \Set \tau
\multimap \R,$ which adds up the results of applying a function to each element
of a set~\cite{DBLP:conf/icfp/ReedP10}.  The definition is:
\begin{align*}
  \mathit{vectorSum} & : {!_d \Set(\otimes_1^d \R)} \multimap_1 \otimes_1^d \R \\
  \mathit{vectorSum}\;s & = (\mathit{sum}\;\pi_1\;s,\ldots,\mathit{sum}\;\pi_d\;s).
\end{align*}
Here, $\pi_i : \otimes_1^d \R \multimap \R$ denotes the $i$-th projection, which
can be defined by destructing a product.  Set types in Fuzz are equipped with
the Hamming metric \cite{DBLP:conf/icfp/ReedP10}, where the distance between two
sets is the number of elements by which they differ.  Note that, to ensure that
$\mathit{sum}$ has bounded sensitivity, we need to clip the results of its
function argument to the interval $[-1,1]$.  Fuzz infers a sensitivity of $d$
for this function because its argument is used with sensitivity 1 in each
component of the tuple.
In \system{}, we can define the same function as above, but we also have the
option of using a different $L^p$ distance to define
$\mathit{vectorSum}$, which leads to the type
$!_{d^{1/p}} \Set(\otimes_p^d \R) \multimap \otimes_p^d \R$, with a sensitivity
of $d^{1/p}$.
For the sake of readability, we'll show how this term is typed in the case
$d=2$.  By typing each term  $(sum \; \pi_i \;z_i)$ and applying
$(\otimes I)$ we get:
$$  [z_1: \Set (\R \otimes_p \R) ]_1 \; ,_p  [z_2: \Set (\R \otimes_p \R)
]_1\vdash (sum \; \pi_1 \;z_1,sum \; \pi_2 \;z_2): \R \otimes_p \R. $$
By
applying contraction we get:
$ [z : \Set (\R \otimes_p \R) ]_{2^{1/p}}  \vdash (sum \; \pi_1
\;z,sum \; \pi_2 \;z): \R \otimes_p \R. $
%By
%applying the contraction rule, we obtain:
%$$  [z : \Set (\R \otimes_p \R) ]_{2^{1/p}}  \vdash (sum \; \pi_1
%\;z,sum \; \pi_2 \;z): \R \otimes_p \R. $$
The claimed type is finally obtained
by $(! E)$ and $(\multimap I)$.

\paragraph{Computing distances}

Suppose that the type $X$ denotes a proper metric space (that is, where the
triangle inequality holds).  Then we can incorporate its distance function in
\system{} with the type $X \otimes_1 X \multimap \R$.
Indeed, let $x$, $x'$, $y$ and $y'$ be arbitrary elements of $X$.  Then
\begin{align*}
d_X(x,y) - d_X(x',y')
& \leq d_X(x,x') + d_X(x',y') + d_X(y',y) - d_X(x',y') \\
& = d_X(x,x') + d_X(y,y') = d_1((x,y),(x',y')).
\end{align*}
By symmetry, we also know that $d_X(x',y') - d_X(x,y) \leq d_1((x,y),(x',y'))$.
Combined, these two facts show
\begin{align*}
  d_{\R}(d_X(x,y), d_X(x',y'))
  & = |d_X(x,y) - d_X(x',y')| \leq d_1((x,y),(x',y')),
\end{align*}
which proves that $d_X$ is indeed a non-expansive function.

\paragraph{Calibrating noise to $L^p$\/distance}
Hardt and Talwar~\cite{HardtT10} have proposed a generalization of the Laplace
mechanism, called the $K$-norm mechanism, to create a differentially private
variant of a database query $f: \db\to \mathbb{R}^d$.  The difference is that
the amount of noise added is calibrated to the sensitivity of $f$ measured with
the $K$ norm, as opposed to the $L^1$ distance used in the original Laplace
mechanism.  When $K$ corresponds to the $L^p$ norm, we will call this the
$L^p$-mechanism, following Awan and Slavkovich~\cite{osti_10183971}.

\begin{definition}
  Given $f:\db\to \mathbb{R}^d$ with $L^p$ sensitivity $s$ and $\epsilon>0$, the
  $L^p$-mechanism is a mechanism that, given a database $D\in\db$, returns a
  probability distribution over $y\in \mathbb{R}^d$ with density given by:
\[
\frac{\exp(\frac{-\epsilon||f(D)-y||_p}{2s})}
{\int \exp(\frac{-\epsilon||f(D)-y||_p}{2s})d y}
\]
\end{definition}
This mechanism returns with high probability (which depends on $\epsilon$ and on
the sensitivity $s$) a vector $y\in\mathbb{R}^d$ which is close to $f(D)$ in
$L^p$ distance.  Such a mechanism can be easily integrated in \system{} through
a primitive:
$$
{\tt LpMech} :  {!_{\infty}(!_s {\tt dB}\multimap \otimes_p^d\mathbb{R})} \multimap {!_\epsilon {\tt dB}} \multimap  \bigcirc_P (\otimes^d_p\mathbb{R})
$$
(Strictly speaking, we would need some discretized version of the above
distribution to incorporate the mechanism in \system{}, but we'll ignore this
issue in what follows.)
The fact that ${\tt LpMech}$ satisfies $\epsilon$-differential privacy follows
from the fact that this mechanism is an instance of the \emph{exponential
  mechanism}~\cite{McSherryT07}, a basic building block of differential
privacy. It is based on a scoring function assigning a score to every pair
consisting of a database and a potential output, and it attempts to return an
output with approximately maximal score, given the input database. As shown by
Gaboardi et al.~\cite{DBLP:conf/popl/GaboardiHHNP13}, the exponential mechanism
can be added as a primitive to Fuzz with type:
$$
{\tt expmech} : {!_\infty \Set(\mathcal{O})} \multimap {!_\infty (!_\infty
  \mathcal{O} \multimap !_s {\tt dB}\multimap \mathbb{R})} \multimap
!_\epsilon{\tt dB} \multimap \bigcirc_P \mathcal{O},
$$
where $\mathcal{O}$ is the type of outputs.  The function ${\tt LpMech}$ is an
instance of the exponential mechanism where $\mathcal{O}$ is
$\otimes_p^d\mathbb{R}$ and the score is $\lambda y \lambda D. ||f(D) - y||_p$.

% AAA: Double-check that we can use the norm operator in Bunched Fuzz!

To define the $L^p$ mechanism with this recipe, we need to reason about the
sensitivity of this scoring function.  In Fuzz, this would not be possible,
since the language does not support reasoning about the sensitivity of $f$
measured in the $L^p$ distance.  In \system{}, however, this can be done
easily. %Below, we will see two examples of how
%the $L^p$ mechanism can lead to more accurate privacy guarantees.
Below, we will see an example (Gradient descent) of how the $L^p$ mechanism can
lead to a finer privacy guarantee.  \pb{Modified this last sentence as finally
  we are not using the $L^p$ mechanism for k-means but only for gradient
  descent.}

\paragraph{Gradient descent} Let us now give an example where we use
the $L^p$ mechanism. An example of differentially private gradient descent example
with linear model in Fuzz was given in
\cite{DBLP:journals/pacmpl/Winograd-CortHR17} (see Sect. 4.1, 4.2 and
Fig. 6 p. 16, Fig. 8 p.19).
%As k-means this is an iterable algorithm.
 This algorithm proceeds by iteration. Actually it was given for an extended language called
Adaptative Fuzz, but the code already gives an algorithm in (plain)
Fuzz. We refer the reader to this reference for the description of all
functions, and here we will only describe how one can adapt the
algorithm to \system{}.

Given a set of $n$ records $x_i \in \R^d$, each with a \textit{label}
$y_i \in \R$, the goal is to find a parameter vector $\theta \in \R^d$ that
minimizes the difference between the labels and their \textit{estimates}, where
the estimate of a label $y_i$ is the inner product $\langle
x_i,\theta\rangle$. That is, the goal is to minimize the loss function
$L(\theta ,(x,y))= \frac{1}{n}\cdot \Sigma_{i=1}^{n}(\langle
x_i,\theta\rangle-y_i)^2.$ The algorithm starts with an initial parameter vector
$(0,\dots,0)$ and it iteratively produces successive $\theta$ vectors until a
termination condition is reached.

The Fuzz program uses the data-type $bag \; \tau$ representing bags or multisets
over $\tau$. A $bagmap$ primitive is given for it. The type $I$ is the unit
interval $[0,1]$.
The main function is called $updateParameter$ and updates one component of the
model $\theta$; it is computed in the following way:
\begin{itemize}
\item with the function $calcGrad : \db \to \R$, compute a component
  $(\nabla L( \theta,(x,y)))_j$ of the $\R^d$ vector $\nabla L( \theta,(x,y))$
  \footnote{Actually $calcGrad$ computes $(\nabla L( \theta,(x,y)))_j$ up to a
    multiplicative constant, 2/n, which is mutliplied afterwards in the
    $updateParameter$ function. }.
\item then Laplacian noise is postcomposed with $calcGrad$ in the
  $updateParameter$ function. This uses a privacy budget of $2\epsilon$. It has
  to be done for each one of the $d$ components of $\nabla L( \theta,(x,y))$,
  thus on the whole, for one step, a privacy budget of $2d\epsilon$.
\item The iterative procedure of gradient descent is given by the function
  $gradient$ in Fig. 8 p. 19 of
  \cite{DBLP:journals/pacmpl/Winograd-CortHR17}. We forget here about the
  adaptative aspect and just consider iteration with a given number $n$ of
  steps. In this case by applying $n$ times $updateParameter$ one gets a privacy
  budget of $2d n\epsilon$.
\end{itemize}

We modify the program as follows to check it in \system{} and use the
$L^p$-mechanism.
Instead of computing over $\R$ we want to compute over $ \otimes_p^d \R$ for a
given $p\geq 1$, so $\R^d$ equipped with \lp distance. The records $x_i$ are in
$ \otimes_p^d I$ and the labels $y_i$ in $I$. The database type is
${\tt dB}=bag \; (I \otimes_p(\otimes_p^d I)) $. The distance between two bags in
${\tt dB}$ is the number of elements by which they differ.

We assume a primitive $ bagVectorSum$ with type
$!_{d^{1/p}} bag \; ( \otimes_p^d I) \multimap \otimes_p^d \R$ (it could be
defined as the $\mathit{vectorSum}$ defined above for sets, using a $sum$
primitive for bags). Given a bag $m$, $(bagVectorSum \; m)$ returns the
vectorial sum of all elements of $m$. We can check that the sensitivity of
$ bagVectorSum$ is indeed $d^{1/p}$ because given two bags $m$ and $m'$ that are
at distance 1, if we denote by $u$ the vector by which they differ, we have:
\begin{align*}
d_{(\otimes_p^d \R) }(bagVectorSum (m),bagVectorSum (m'))   &= ||u
||_p \\
  & \leq  (\Sigma_{j=1}^d 1)^{1/p}  = d^{1/p}
\end{align*}
%
%%
%\begin{align*}
%d_{(\otimes_p^d \R) }(bagVectorSum (m),bagVectorSum (m'))   &= ||u
%||_p = (\Sigma_{j=1}^d |u_j|^p)^{1/p}  \\
%  & \leq  (\Sigma_{j=1}^d 1)^{1/p}  = d^{1/p}
%\end{align*}
%%
By adapting the $calcGrad$ Fuzz term of
\cite{DBLP:journals/pacmpl/Winograd-CortHR17} using $ bagVectorSum$ we obtain a
term $VectcalcGrad$ with the \system{} type
$!_{\infty} \otimes_p^d \R \multimap !_{d^{1/p}} \db \multimap \otimes_p^d \R$.
Given a vector $\theta$ and a database $(y,x)$, $ VectcalcGrad$ computes the
updated vector $\theta'$. Finally we define the term $updateVector$ by adding
noise to $ VectcalcGrad$ using the the $L^p$-mechanism. Recall the type of
${\tt LpMech}$:
$ !_{\infty}(!_s \db\multimap \otimes_p^d\mathbb{R})\multimap !_\epsilon\db
\multimap \bigcirc_P (\otimes^d_p\mathbb{R}).  $ We define $ updateVector$ and
obtain its type as follows:
$$ updateVector= \lambda \theta. ({\tt LpMech} \; (VectcalcGrad \;
\theta)):  \; !_{\infty} \otimes_p^d \R \multimap !_\epsilon \db
\multimap  \bigcirc_P ( \otimes_p^d\R) $$
By iterating $ updateVector$ $n$ times one obtains a privacy budget of $n
\epsilon$.

\pb{I am not quite sure about how the privacy budget of this Bunched
  Fuzz/LpMech version compares to the privacy budget of the original
  Fuzz/Laplace mech version. }

% Local Variables:
% TeX-master: "main.tex"
% End:

\ifalgrules
\section{Implementation}
To experiment with the \system{} design, we implemented a prototype for a
fragment of the system based on DFuzz~\cite{DBLP:conf/popl/GaboardiHHNP13,
DBLP:conf/ifl/AmorimGAH14}.\footnote{%
\url{https://github.com/junewunder/bunched-fuzz}} The type-checker generates a
set of numeric constraints that serve as verification conditions to guarantee a
valid typing.
The implementation required adapting some of the current rules to an algorithmic
formulation (found in \ifappendix \Cref{fig:alg-rules}). \else the full
version). \fi In addition to the modifications introduced in the DFuzz type
checker compared to its original version~\cite{DBLP:conf/popl/GaboardiHHNP13,
DBLP:conf/ifl/AmorimGAH14}, we also made the following changes and
simplifications:
\begin{itemize}
\item We did not include explicit contraction and weakening rules. Instead, the
rules are combined with those for checking other syntactic constructs. To do
away with an explicit contraction rule, in rules that have multiple antecedents,
such as the \tensorI{} rule, we used the $Contr$ operator to combine the
antecedents' environments, rather than using the $p$-concatenation operator for
bunches.
\item We did not include the rules for checking probabilistic programs with the
Hellinger distance.
\item Bound variables are always added at the top of the current environment, as
in the \lolliI{} rule of the original rules; it is not possible to introduce new
variables arbitrarily deep in the environment.
\end{itemize}
While, strictly speaking, the resulting system is incomplete with respect to the
rules presented here, it is powerful enough to check \ifappendix the K-means
example of \Cref{ex:k-means}. \else an implementation of K-means that
generalizes a previous version implemented for
Fuzz~\cite{DBLP:conf/icfp/ReedP10}.\fi{} On the other hand, because our
implementation is based on the one of DFuzz, which features dependent types, we
allow functions that are polymorphic on types, sizes and $p$ parameters, which
allows us to infer sensitivity information that depends on run-time sizes.

\fi

\section{Related Work}

\system{} is inspired by BI, the logic of bunched
implications~\cite{DBLP:journals/bsl/OHearnP99}, which has two
connectives for combining contexts.  Categorically, one of these
connectives corresponds to a Cartesian product, whereas the other
corresponds to a monoidal, or tensor product.  While related to linear
logic, the presence of the two context connectives allows BI to derive
some properties that are not valid in linear logic.
%some properties that are not possible in linear logic.
For example,
the cartesian product does not distribute over sums in linear logic
but it does distribute over sums in BI.

We have shown how the rules for such type systems are reminiscent of the ones
used in type systems for the calcuclus of bunched
implications~\cite{DBLP:journals/jfp/OHearn03}, and for reasoning about
categorical grammars~\cite{DBLP:series/lncs/6850}.  Specifically, O'Hearn
introduces a type system with two products and two
arrows~\cite{DBLP:journals/jfp/OHearn03}. Typing environments are bunches of
variable assignments with two constructors, corresponding to the two
products. Our work can be seen as a generalization of O'Hearn's work to handle
multiple products and to reason about program sensitivity.

Moot and Retor\'e \cite[Chapter 5]{DBLP:series/lncs/6850} introduce the multimodal
Lambek calculus, which extends the non-associative Lambek calculus, a classical
tool for describing categorical grammars. This generalization uses an indexed
family of connectives and trees to represent environments.  The main differences
with our work are: our indexed products are associative and commutative, while
theirs are not; our type system is affine; our type system includes a monad for
probabilities which does not have a correspondent construction in their logic;
our type system also possesses the graded comonad $!_s$ corresponding to the $!$
modality of linear logic, the interaction between this comonad and the bunches
is non-trivial and it requires us to explicitly define a notion of
contraction. Besides the fact that the main properties we study, metric
interpretation and program sensitivity, are very different from the ones studied
by the above authors, there are some striking similarities between the two
systems.

A recent work by Bao et al.~\cite{10.1145/3498719} introduced a novel bunched
logic with indexed products and magic wands with a preorder between the
indices. This logic is used as the assertion logic of a separation logic
introduced to reason about negative dependence between random variables. The
connectives studied in this work share some similarities with the ones we study
here and it would be interesting to investigate further the similarities,
especially from a model-theoretic perspective.

Because contexts in the original Fuzz type system are biased towards the $L^1$
distance, it is not obvious how Fuzz could express the composition principles of
the Hellinger distance.  Recent work showed how this could be amended via a
\emph{path construction} that recasts relational program properties as
sensitivity properties~\cite{DBLP:conf/lics/AmorimGHK19}.  Roughly speaking,
instead of working directly with the Hellinger distance $d_H$, the authors consider a
family of relations $R_{\alpha}$ given by
\[ R_{\alpha} = \{ (\mu_1,\mu_2) \mid d_H(\mu_1, \mu_2) \leq \alpha \}. \] Such
a relation induces another metric on distributions, $d_{\alpha,H}$, where the
distance between two distributions is the length of the shortest path connecting
them in the graph corresponding to $R_\alpha$. This allows them to express the
composition principles of the Hellinger distance directly in the Fuzz type
system, albeit at a cost: the type constructor for probability distributions is
graded by the distance bound $\alpha$.  Thus, the sensitivity information of a
randomized algorithm with respect to the Hellinger distance must also be encoded
in the codomain of the function, as opposed to using just its domain, as done
for the original privacy metric of Fuzz.  By contrast, \system{} does not
require the grading $\alpha$ because it can express the composition principle of
the Hellinger distance directly, thanks to the use of the $L^2$ distance on
bunches.

Duet~\cite{DBLP:journals/pacmpl/NearDASGWSZSSS19} can be seen as an extension of
Fuzz to deal with more general privacy distances. It consists of a two-layer
language: a sensitivity language and a privacy language. The sensitivity
language is very similar to Fuzz. However, it also contains some basic
primitives to manage vectors and matrices. As in Fuzz, the vector types come
with multiple distances but differently from Fuzz, Duet also uses the $L^2$
distance. The main reason for this is that Duet also supports the Gaussian
mechanism which calibrates the noise to the $L^2$ sensitivity of the
function. Our work is inspired by this aspect of Duet, but it goes beyond it by
giving a logical foundation to $L^p$ vector distances.  Another language
inspired by Fuzz is the recently proposed
Jazz~\cite{DBLP:journals/corr/abs-2010-11342}. Like Duet, this language has two
products and primitives tailored to the $L^2$ sensitivity of functions for the
Gaussian mechanism. Interestingly, this language uses contextual information to
achieve more precise bounds on the sensitivities. The semantics of Jazz is
different from the metric semantics we study here; however, it would be
interesting to explore whether a similar contextual approach could be also used
in a metric setting.

\section{Conclusion and Future work}
In this work we have introduced \system, a type system for reasoning
about program sensitivity in the style of
Fuzz~\cite{DBLP:conf/icfp/ReedP10}. \system{} extends the type theory of
Fuzz by considering new type constructors for $L^p$ distances and
bunches to manage different products in typing environments.  We have
shown how this type system supports reasoning about both deterministic
and probabilistic programs.

There are at least two directions that we would like to explore in
future works. On the one hand, we would like to understand if the
typing rules we introduced here could be of more general use in the
setting of probabilistic programs. We have already discussed the
usefulness for other directions in the deterministic
case~\cite{DBLP:series/lncs/6850}. One way to approach this problem
could be by looking at the family of products recently identified
in~\cite{10.1145/3498719}. These products give a model for a logic to reason about
negative dependence between probabilistic variables. It would be
interesting to see if the properties of these products match the one
we have here.

On the other hand, we would like to understand if \system{} can be
used to reason about more general examples in differential
privacy. One way to approach this problem could be to consider examples based on the use of Hellinger distance that have been studied in the literature on probabilistic inference~\cite{DBLP:conf/ccs/BartheFGAGHS16}.

\subsubsection{Acknowledgements} This material is based upon work supported by the NSF under Grant No. 1845803 and 2040249. The third author  was partially supported by the french Program “Investissements d’avenir” (I-ULNE SITE / ANR-16-IDEX-0004 ULNE) managed by the National Research Agency.
\bibliography{biblio}

%%% -*-BibTeX-*-
%%% Do NOT edit. File created by BibTeX with style
%%% ACM-Reference-Format-Journals [18-Jan-2012].

\begin{thebibliography}{27}

%%% ====================================================================
%%% NOTE TO THE USER: you can override these defaults by providing
%%% customized versions of any of these macros before the \bibliography
%%% command.  Each of them MUST provide its own final punctuation,
%%% except for \shownote{}, \showDOI{}, and \showURL{}.  The latter two
%%% do not use final punctuation, in order to avoid confusing it with
%%% the Web address.
%%%
%%% To suppress output of a particular field, define its macro to expand
%%% to an empty string, or better, \unskip, like this:
%%%
%%% \newcommand{\showDOI}[1]{\unskip}   % LaTeX syntax
%%%
%%% \def \showDOI #1{\unskip}           % plain TeX syntax
%%%
%%% ====================================================================

\ifx \showCODEN    \undefined \def \showCODEN     #1{\unskip}     \fi
\ifx \showDOI      \undefined \def \showDOI       #1{#1}\fi
\ifx \showISBNx    \undefined \def \showISBNx     #1{\unskip}     \fi
\ifx \showISBNxiii \undefined \def \showISBNxiii  #1{\unskip}     \fi
\ifx \showISSN     \undefined \def \showISSN      #1{\unskip}     \fi
\ifx \showLCCN     \undefined \def \showLCCN      #1{\unskip}     \fi
\ifx \shownote     \undefined \def \shownote      #1{#1}          \fi
\ifx \showarticletitle \undefined \def \showarticletitle #1{#1}   \fi
\ifx \showURL      \undefined \def \showURL       {\relax}        \fi
% The following commands are used for tagged output and should be
% invisible to TeX
\providecommand\bibfield[2]{#2}
\providecommand\bibinfo[2]{#2}
\providecommand\natexlab[1]{#1}
\providecommand\showeprint[2][]{arXiv:#2}

\bibitem[Awan and Slavkovic(2020)]%
        {osti_10183971}
\bibfield{author}{\bibinfo{person}{Jordan Awan} {and}
  \bibinfo{person}{Aleksandra Slavkovic}.} \bibinfo{year}{2020}\natexlab{}.
\newblock \showarticletitle{Structure and Sensitivity in Differential Privacy:
  Comparing K-Norm Mechanisms}.
\newblock \bibinfo{journal}{\emph{J. Amer. Statist. Assoc.}}
  (\bibinfo{year}{2020}).
\newblock
\urldef\tempurl%
\url{https://doi.org/10.1080/01621459.2020.1773831}
\showDOI{\tempurl}


\bibitem[{Azevedo de Amorim} et~al\mbox{.}(2014)]%
        {DBLP:conf/ifl/AmorimGAH14}
\bibfield{author}{\bibinfo{person}{Arthur {Azevedo de Amorim}},
  \bibinfo{person}{Marco Gaboardi}, \bibinfo{person}{Emilio Jes{\'{u}}s~Gallego
  Arias}, {and} \bibinfo{person}{Justin Hsu}.} \bibinfo{year}{2014}\natexlab{}.
\newblock \showarticletitle{Really Natural Linear Indexed Type Checking}. In
  \bibinfo{booktitle}{\emph{Proceedings of the 26th 2014 International
  Symposium on Implementation and Application of Functional Languages, {IFL}
  '14, Boston, MA, USA, October 1-3, 2014}},
  \bibfield{editor}{\bibinfo{person}{Sam Tobin{-}Hochstadt}} (Ed.).
  \bibinfo{publisher}{{ACM}}, \bibinfo{pages}{5:1--5:12}.
\newblock
\urldef\tempurl%
\url{https://doi.org/10.1145/2746325.2746335}
\showDOI{\tempurl}


\bibitem[{Azevedo de Amorim} et~al\mbox{.}(2019)]%
        {DBLP:conf/lics/AmorimGHK19}
\bibfield{author}{\bibinfo{person}{Arthur {Azevedo de Amorim}},
  \bibinfo{person}{Marco Gaboardi}, \bibinfo{person}{Justin Hsu}, {and}
  \bibinfo{person}{Shin{-}ya Katsumata}.} \bibinfo{year}{2019}\natexlab{}.
\newblock \showarticletitle{Probabilistic Relational Reasoning via Metrics}. In
  \bibinfo{booktitle}{\emph{34th Annual {ACM/IEEE} Symposium on Logic in
  Computer Science, {LICS} 2019, Vancouver, BC, Canada, June 24-27, 2019}}.
  \bibinfo{publisher}{{IEEE}}, \bibinfo{pages}{1--19}.
\newblock
\urldef\tempurl%
\url{https://doi.org/10.1109/LICS.2019.8785715}
\showDOI{\tempurl}


\bibitem[{Azevedo de Amorim} et~al\mbox{.}(2017)]%
        {DBLP:conf/popl/AmorimGHKC17}
\bibfield{author}{\bibinfo{person}{Arthur {Azevedo de Amorim}},
  \bibinfo{person}{Marco Gaboardi}, \bibinfo{person}{Justin Hsu},
  \bibinfo{person}{Shin{-}ya Katsumata}, {and} \bibinfo{person}{Ikram
  Cherigui}.} \bibinfo{year}{2017}\natexlab{}.
\newblock \showarticletitle{A semantic account of metric preservation}. In
  \bibinfo{booktitle}{\emph{{POPL} 2017}}. \bibinfo{publisher}{{ACM}}.
\newblock
\urldef\tempurl%
\url{http://dl.acm.org/citation.cfm?id=3009890}
\showURL{%
\tempurl}


\bibitem[Bao et~al\mbox{.}(2022)]%
        {10.1145/3498719}
\bibfield{author}{\bibinfo{person}{Jialu Bao}, \bibinfo{person}{Marco
  Gaboardi}, \bibinfo{person}{Justin Hsu}, {and} \bibinfo{person}{Joseph
  Tassarotti}.} \bibinfo{year}{2022}\natexlab{}.
\newblock \showarticletitle{A Separation Logic for Negative Dependence}.
\newblock \bibinfo{journal}{\emph{Proc. ACM Program. Lang.}}
  \bibinfo{volume}{6}, \bibinfo{number}{POPL}, Article \bibinfo{articleno}{57}
  (\bibinfo{date}{jan} \bibinfo{year}{2022}), \bibinfo{numpages}{29}~pages.
\newblock
\urldef\tempurl%
\url{https://doi.org/10.1145/3498719}
\showDOI{\tempurl}


\bibitem[Barthe et~al\mbox{.}(2016)]%
        {DBLP:conf/ccs/BartheFGAGHS16}
\bibfield{author}{\bibinfo{person}{Gilles Barthe}, \bibinfo{person}{Gian~Pietro
  Farina}, \bibinfo{person}{Marco Gaboardi}, \bibinfo{person}{Emilio
  Jes{\'{u}}s~Gallego Arias}, \bibinfo{person}{Andy Gordon},
  \bibinfo{person}{Justin Hsu}, {and} \bibinfo{person}{Pierre{-}Yves Strub}.}
  \bibinfo{year}{2016}\natexlab{}.
\newblock \showarticletitle{Differentially Private Bayesian Programming}. In
  \bibinfo{booktitle}{\emph{Proceedings of the 2016 {ACM} {SIGSAC} Conference
  on Computer and Communications Security, Vienna, Austria, October 24-28,
  2016}}, \bibfield{editor}{\bibinfo{person}{Edgar~R. Weippl},
  \bibinfo{person}{Stefan Katzenbeisser}, \bibinfo{person}{Christopher
  Kruegel}, \bibinfo{person}{Andrew~C. Myers}, {and} \bibinfo{person}{Shai
  Halevi}} (Eds.). \bibinfo{publisher}{{ACM}}, \bibinfo{pages}{68--79}.
\newblock
\urldef\tempurl%
\url{https://doi.org/10.1145/2976749.2978371}
\showDOI{\tempurl}


\bibitem[Barthe and Olmedo(2013)]%
        {DBLP:conf/icalp/BartheO13}
\bibfield{author}{\bibinfo{person}{Gilles Barthe} {and}
  \bibinfo{person}{Federico Olmedo}.} \bibinfo{year}{2013}\natexlab{}.
\newblock \showarticletitle{Beyond Differential Privacy: Composition Theorems
  and Relational Logic for f-divergences between Probabilistic Programs}. In
  \bibinfo{booktitle}{\emph{Automata, Languages, and Programming - 40th
  International Colloquium, {ICALP} 2013, Riga, Latvia, July 8-12, 2013,
  Proceedings, Part {II}}} \emph{(\bibinfo{series}{Lecture Notes in Computer
  Science}, Vol.~\bibinfo{volume}{7966})},
  \bibfield{editor}{\bibinfo{person}{Fedor~V. Fomin}, \bibinfo{person}{Rusins
  Freivalds}, \bibinfo{person}{Marta~Z. Kwiatkowska}, {and}
  \bibinfo{person}{David Peleg}} (Eds.). \bibinfo{publisher}{Springer},
  \bibinfo{pages}{49--60}.
\newblock
\urldef\tempurl%
\url{https://doi.org/10.1007/978-3-642-39212-2\_8}
\showDOI{\tempurl}


\bibitem[Bousquet and Elisseeff(2002)]%
        {DBLP:journals/jmlr/BousquetE02}
\bibfield{author}{\bibinfo{person}{Olivier Bousquet} {and}
  \bibinfo{person}{Andr{\'{e}} Elisseeff}.} \bibinfo{year}{2002}\natexlab{}.
\newblock \showarticletitle{Stability and Generalization}.
\newblock \bibinfo{journal}{\emph{J. Mach. Learn. Res.}}  \bibinfo{volume}{2}
  (\bibinfo{year}{2002}), \bibinfo{pages}{499--526}.
\newblock
\urldef\tempurl%
\url{http://jmlr.org/papers/v2/bousquet02a.html}
\showURL{%
\tempurl}


\bibitem[Boyd and Vandenberghe(2004)]%
        {coBook04}
\bibfield{author}{\bibinfo{person}{Stephen Boyd} {and} \bibinfo{person}{Lieven
  Vandenberghe}.} \bibinfo{year}{2004}\natexlab{}.
\newblock \bibinfo{booktitle}{\emph{Convex Optimization}}.
\newblock \bibinfo{publisher}{{Cambridge University Press}}.
\newblock
\showISBNx{0521833787}


\bibitem[Chaudhuri et~al\mbox{.}(2011)]%
        {DBLP:conf/sigsoft/ChaudhuriGLN11}
\bibfield{author}{\bibinfo{person}{Swarat Chaudhuri}, \bibinfo{person}{Sumit
  Gulwani}, \bibinfo{person}{Roberto Lublinerman}, {and} \bibinfo{person}{Sara
  NavidPour}.} \bibinfo{year}{2011}\natexlab{}.
\newblock \showarticletitle{Proving programs robust}. In
  \bibinfo{booktitle}{\emph{SIGSOFT/FSE'11 19th {ACM} {SIGSOFT} Symposium on
  the Foundations of Software Engineering {(FSE-19)} and ESEC'11: 13th European
  Software Engineering Conference (ESEC-13), Szeged, Hungary, September 5-9,
  2011}}, \bibfield{editor}{\bibinfo{person}{Tibor Gyim{\'{o}}thy} {and}
  \bibinfo{person}{Andreas Zeller}} (Eds.). \bibinfo{publisher}{{ACM}},
  \bibinfo{pages}{102--112}.
\newblock
\urldef\tempurl%
\url{https://doi.org/10.1145/2025113.2025131}
\showDOI{\tempurl}


\bibitem[Csiszár and Shields(2004)]%
        {CIT-004}
\bibfield{author}{\bibinfo{person}{I. Csiszár} {and} \bibinfo{person}{P.C.
  Shields}.} \bibinfo{year}{2004}\natexlab{}.
\newblock \showarticletitle{Information Theory and Statistics: A Tutorial}.
\newblock \bibinfo{journal}{\emph{Foundations and Trends® in Communications
  and Information Theory}} \bibinfo{volume}{1}, \bibinfo{number}{4}
  (\bibinfo{year}{2004}), \bibinfo{pages}{417--528}.
\newblock
\showISSN{1567-2190}
\urldef\tempurl%
\url{https://doi.org/10.1561/0100000004}
\showDOI{\tempurl}


\bibitem[Dwork et~al\mbox{.}(2006)]%
        {DBLP:conf/tcc/DworkMNS06}
\bibfield{author}{\bibinfo{person}{Cynthia Dwork}, \bibinfo{person}{Frank
  McSherry}, \bibinfo{person}{Kobbi Nissim}, {and} \bibinfo{person}{Adam~D.
  Smith}.} \bibinfo{year}{2006}\natexlab{}.
\newblock \showarticletitle{Calibrating Noise to Sensitivity in Private Data
  Analysis}. In \bibinfo{booktitle}{\emph{Theory of Cryptography, Third Theory
  of Cryptography Conference, {TCC} 2006, New York, NY, USA, March 4-7, 2006,
  Proceedings}} \emph{(\bibinfo{series}{Lecture Notes in Computer Science},
  Vol.~\bibinfo{volume}{3876})}, \bibfield{editor}{\bibinfo{person}{Shai
  Halevi} {and} \bibinfo{person}{Tal Rabin}} (Eds.).
  \bibinfo{publisher}{Springer}, \bibinfo{pages}{265--284}.
\newblock
\showISBNx{3-540-32731-2}
\urldef\tempurl%
\url{https://doi.org/10.1007/11681878\_14}
\showDOI{\tempurl}


\bibitem[Dwork and Roth(2014)]%
        {DworkR14}
\bibfield{author}{\bibinfo{person}{Cynthia Dwork} {and} \bibinfo{person}{Aaron
  Roth}.} \bibinfo{year}{2014}\natexlab{}.
\newblock \showarticletitle{The Algorithmic Foundations of Differential
  Privacy}.
\newblock \bibinfo{journal}{\emph{Found. Trends Theor. Comput. Sci.}}
  \bibinfo{volume}{9}, \bibinfo{number}{3-4} (\bibinfo{year}{2014}),
  \bibinfo{pages}{211--407}.
\newblock
\urldef\tempurl%
\url{https://doi.org/10.1561/0400000042}
\showDOI{\tempurl}


\bibitem[Gaboardi et~al\mbox{.}(2013)]%
        {DBLP:conf/popl/GaboardiHHNP13}
\bibfield{author}{\bibinfo{person}{Marco Gaboardi}, \bibinfo{person}{Andreas
  Haeberlen}, \bibinfo{person}{Justin Hsu}, \bibinfo{person}{Arjun Narayan},
  {and} \bibinfo{person}{Benjamin~C. Pierce}.} \bibinfo{year}{2013}\natexlab{}.
\newblock \showarticletitle{Linear dependent types for differential privacy}.
  In \bibinfo{booktitle}{\emph{{POPL} '13}}. \bibinfo{publisher}{{ACM}}.
\newblock
\urldef\tempurl%
\url{https://doi.org/10.1145/2429069.2429113}
\showDOI{\tempurl}


\bibitem[Girard(1987)]%
        {DBLP:journals/tcs/Girard87}
\bibfield{author}{\bibinfo{person}{Jean{-}Yves Girard}.}
  \bibinfo{year}{1987}\natexlab{}.
\newblock \showarticletitle{Linear Logic}.
\newblock \bibinfo{journal}{\emph{Theor. Comput. Sci.}}  \bibinfo{volume}{50}
  (\bibinfo{year}{1987}), \bibinfo{pages}{1--102}.
\newblock
\urldef\tempurl%
\url{https://doi.org/10.1016/0304-3975(87)90045-4}
\showDOI{\tempurl}


\bibitem[Gonin and Money(1989)]%
        {10.5555/64130}
\bibfield{author}{\bibinfo{person}{Ren\'{e} Gonin} {and}
  \bibinfo{person}{Arthur~H. Money}.} \bibinfo{year}{1989}\natexlab{}.
\newblock \bibinfo{booktitle}{\emph{Nonlinear Lp-Norm Estimation}}.
\newblock \bibinfo{publisher}{Marcel Dekker, Inc.}, \bibinfo{address}{USA}.
\newblock
\showISBNx{0824781252}


\bibitem[Haeberlen et~al\mbox{.}(2011)]%
        {DBLP:conf/uss/HaeberlenPN11}
\bibfield{author}{\bibinfo{person}{Andreas Haeberlen},
  \bibinfo{person}{Benjamin~C. Pierce}, {and} \bibinfo{person}{Arjun Narayan}.}
  \bibinfo{year}{2011}\natexlab{}.
\newblock \showarticletitle{Differential Privacy Under Fire}. In
  \bibinfo{booktitle}{\emph{20th {USENIX} Security Symposium, San Francisco,
  CA, USA, August 8-12, 2011, Proceedings}}. \bibinfo{publisher}{{USENIX}
  Association}.
\newblock
\urldef\tempurl%
\url{http://static.usenix.org/events/sec11/tech/full\_papers/Haeberlen.pdf}
\showURL{%
\tempurl}


\bibitem[Hardt and Talwar(2010)]%
        {HardtT10}
\bibfield{author}{\bibinfo{person}{Moritz Hardt} {and} \bibinfo{person}{Kunal
  Talwar}.} \bibinfo{year}{2010}\natexlab{}.
\newblock \showarticletitle{On the geometry of differential privacy}. In
  \bibinfo{booktitle}{\emph{Proceedings of the 42nd {ACM} Symposium on Theory
  of Computing, {STOC} 2010, Cambridge, Massachusetts, USA, 5-8 June 2010}},
  \bibfield{editor}{\bibinfo{person}{Leonard~J. Schulman}} (Ed.).
  \bibinfo{publisher}{{ACM}}, \bibinfo{pages}{705--714}.
\newblock
\urldef\tempurl%
\url{https://doi.org/10.1145/1806689.1806786}
\showDOI{\tempurl}


\bibitem[june wunder et~al\mbox{.}(2022)]%
        {DBLP:journals/corr/abs-2202-01901}
\bibfield{author}{\bibinfo{person}{june wunder}, \bibinfo{person}{Arthur
  {Azevedo de Amorim}}, \bibinfo{person}{Patrick Baillot}, {and}
  \bibinfo{person}{Marco Gaboardi}.} \bibinfo{year}{2022}\natexlab{}.
\newblock \showarticletitle{Bunched Fuzz: Sensitivity for Vector Metrics}.
\newblock \bibinfo{journal}{\emph{CoRR}}  \bibinfo{volume}{abs/2202.01901}
  (\bibinfo{year}{2022}).
\newblock
\showeprint[arXiv]{2202.01901}
\urldef\tempurl%
\url{https://arxiv.org/abs/2202.01901}
\showURL{%
\tempurl}


\bibitem[McSherry and Talwar(2007)]%
        {McSherryT07}
\bibfield{author}{\bibinfo{person}{Frank McSherry} {and} \bibinfo{person}{Kunal
  Talwar}.} \bibinfo{year}{2007}\natexlab{}.
\newblock \showarticletitle{Mechanism Design via Differential Privacy}. In
  \bibinfo{booktitle}{\emph{48th Annual {IEEE} Symposium on Foundations of
  Computer Science {(FOCS} 2007), October 20-23, 2007, Providence, RI, USA,
  Proceedings}}. \bibinfo{publisher}{{IEEE} Computer Society},
  \bibinfo{pages}{94--103}.
\newblock
\urldef\tempurl%
\url{https://doi.org/10.1109/FOCS.2007.41}
\showDOI{\tempurl}


\bibitem[Moot and Retor{\'{e}}(2012)]%
        {DBLP:series/lncs/6850}
\bibfield{author}{\bibinfo{person}{Richard Moot} {and}
  \bibinfo{person}{Christian Retor{\'{e}}}.} \bibinfo{year}{2012}\natexlab{}.
\newblock \bibinfo{booktitle}{\emph{The Logic of Categorial Grammars - {A}
  Deductive Account of Natural Language Syntax and Semantics}}.
  \bibinfo{series}{Lecture Notes in Computer Science},
  Vol.~\bibinfo{volume}{6850}.
\newblock \bibinfo{publisher}{Springer}.
\newblock
\showISBNx{978-3-642-31554-1}
\urldef\tempurl%
\url{https://doi.org/10.1007/978-3-642-31555-8}
\showDOI{\tempurl}


\bibitem[Near et~al\mbox{.}(2019)]%
        {DBLP:journals/pacmpl/NearDASGWSZSSS19}
\bibfield{author}{\bibinfo{person}{Joseph~P. Near}, \bibinfo{person}{David
  Darais}, \bibinfo{person}{Chike Abuah}, \bibinfo{person}{Tim Stevens},
  \bibinfo{person}{Pranav Gaddamadugu}, \bibinfo{person}{Lun Wang},
  \bibinfo{person}{Neel Somani}, \bibinfo{person}{Mu Zhang},
  \bibinfo{person}{Nikhil Sharma}, \bibinfo{person}{Alex Shan}, {and}
  \bibinfo{person}{Dawn Song}.} \bibinfo{year}{2019}\natexlab{}.
\newblock \showarticletitle{Duet: an expressive higher-order language and
  linear type system for statically enforcing differential privacy}.
\newblock \bibinfo{journal}{\emph{Proc. {ACM} Program. Lang.}}
  \bibinfo{volume}{3}, \bibinfo{number}{{OOPSLA}} (\bibinfo{year}{2019}).
\newblock
\urldef\tempurl%
\url{https://doi.org/10.1145/3360598}
\showDOI{\tempurl}


\bibitem[O'Hearn(2003)]%
        {DBLP:journals/jfp/OHearn03}
\bibfield{author}{\bibinfo{person}{Peter~W. O'Hearn}.}
  \bibinfo{year}{2003}\natexlab{}.
\newblock \showarticletitle{On bunched typing}.
\newblock \bibinfo{journal}{\emph{J. Funct. Program.}} \bibinfo{volume}{13},
  \bibinfo{number}{4} (\bibinfo{year}{2003}), \bibinfo{pages}{747--796}.
\newblock
\urldef\tempurl%
\url{https://doi.org/10.1017/S0956796802004495}
\showDOI{\tempurl}


\bibitem[O'Hearn and Pym(1999)]%
        {DBLP:journals/bsl/OHearnP99}
\bibfield{author}{\bibinfo{person}{Peter~W. O'Hearn} {and}
  \bibinfo{person}{David~J. Pym}.} \bibinfo{year}{1999}\natexlab{}.
\newblock \showarticletitle{The logic of bunched implications}.
\newblock \bibinfo{journal}{\emph{Bull. Symb. Log.}} \bibinfo{volume}{5},
  \bibinfo{number}{2} (\bibinfo{year}{1999}).
\newblock
\urldef\tempurl%
\url{https://doi.org/10.2307/421090}
\showDOI{\tempurl}


\bibitem[Reed and Pierce(2010)]%
        {DBLP:conf/icfp/ReedP10}
\bibfield{author}{\bibinfo{person}{Jason Reed} {and}
  \bibinfo{person}{Benjamin~C. Pierce}.} \bibinfo{year}{2010}\natexlab{}.
\newblock \showarticletitle{Distance makes the types grow stronger: a calculus
  for differential privacy}. In \bibinfo{booktitle}{\emph{{ICFP} 2010}}.
  \bibinfo{publisher}{{ACM}}.
\newblock
\urldef\tempurl%
\url{https://doi.org/10.1145/1863543.1863568}
\showDOI{\tempurl}


\bibitem[Toro et~al\mbox{.}(2020)]%
        {DBLP:journals/corr/abs-2010-11342}
\bibfield{author}{\bibinfo{person}{Mat{\'{\i}}as Toro}, \bibinfo{person}{David
  Darais}, \bibinfo{person}{Chike Abuah}, \bibinfo{person}{Joe Near},
  \bibinfo{person}{Federico Olmedo}, {and} \bibinfo{person}{{\'{E}}ric
  Tanter}.} \bibinfo{year}{2020}\natexlab{}.
\newblock \showarticletitle{Contextual Linear Types for Differential Privacy}.
\newblock \bibinfo{journal}{\emph{CoRR}}  \bibinfo{volume}{abs/2010.11342}
  (\bibinfo{year}{2020}).
\newblock
\showeprint[arXiv]{2010.11342}
\urldef\tempurl%
\url{https://arxiv.org/abs/2010.11342}
\showURL{%
\tempurl}


\bibitem[Winograd{-}Cort et~al\mbox{.}(2017)]%
        {DBLP:journals/pacmpl/Winograd-CortHR17}
\bibfield{author}{\bibinfo{person}{Daniel Winograd{-}Cort},
  \bibinfo{person}{Andreas Haeberlen}, \bibinfo{person}{Aaron Roth}, {and}
  \bibinfo{person}{Benjamin~C. Pierce}.} \bibinfo{year}{2017}\natexlab{}.
\newblock \showarticletitle{A framework for adaptive differential privacy}.
\newblock \bibinfo{journal}{\emph{Proc. {ACM} Program. Lang.}}
  \bibinfo{volume}{1}, \bibinfo{number}{{ICFP}} (\bibinfo{year}{2017}),
  \bibinfo{pages}{10:1--10:29}.
\newblock
\urldef\tempurl%
\url{https://doi.org/10.1145/3110254}
\showDOI{\tempurl}


\end{thebibliography}

\ifappendix

\appendix
\section{Term Calculus Proofs}

%\begin{figure}
%  \begin{center}
%    \begin{tabular}{c | c | c}
%      Space $X$& $|X|$ &$d_X(x, y)$ \\
%      \hline
%      $1$ & $\{ * \}$ & 0 \\
%      $\R$ & $\R$ & $|x - y|$ \\
%      & &
%      %
%          \multirow{3}{*}{$\begin{cases}
%          s\cdot d_A(x, y) \text{ if }s\neq \infty\\
%          \infty \text{ if }s=\infty, x\neq y \in A\\
%          0 \text{ if }s=\infty, x= y \in A\\
%        \end{cases}$}\\
%   $!_sA$& $|A|$&\\
%  &&\\
%      % \multirow{3}{*}{$A \oplus B$}
%  %       & $|A| + |B|$ &\multirow{3}{*}{$\begin{cases}
%  %         d_A(a, b)\text{ if }a, b \in A\\
%        &  &\multirow{3}{*}{$\begin{cases}
%          d_A(x, y)\text{ if }x, y \in A\\
%          d_B(x, y)\text{ if }x, y \in B\\
%          \text{else }\infty\\
%        \end{cases}$}\\
%  $A \oplus B$&$|A| + |B|$&\\
%  &&\\
%      $A \otimes_p B$ & $|A|\times|B|$ & $\sqrt[p]{ d_A(\pi_1(x), \pi_1(y))^p +
%        d_B(\pi_2(x),\pi_2(y))^p } $\\
%      $A \multimap_p B$ & $A \multimap B$ & cf. \Cref{eq:multimap-dist} \\
%      $\bigcirc_P A$ & $DA$ & $\MD_A(x,y)$; cf. \Cref{eq:privacy-distance} \\
%      $\bigcirc_H A$ & $DA$ & $\HD_A(x,y)$; cf. \Cref{eq:hellinger-dist}
%    \end{tabular}
%  \end{center}
%  %}
%  \caption{Operations on metric spaces for interpreting types}\label{fig:type-interp}
%  \end{figure}

\bunchmonoidisomorphism*

\begin{proof}
  Proof by structural induction on $\Gamma \leftrightsquigarrow \Delta$. Let $f$ be the inductive hypothesis.

  $\intrp {\Gamma = \Delta} \triangleq \lambda\; \Gamma.\; \Gamma$

  $\intrp {\Gamma_1 ,_p \Gamma_2 \leftrightsquigarrow \Delta_1 ,_p \Delta_2}
    \triangleq \lambda\; (\Gamma_1,\Gamma_2).\; f\; \Gamma_1, f\; \Gamma_2$

  $\intrp{\Gamma \leftrightsquigarrow \cdot ,_p \Delta}
    \triangleq \lambda\; \Gamma.\; ((),f\; \Gamma) $

  $\intrp{\Gamma \leftrightsquigarrow \Delta ,_p \cdot}
    \triangleq \lambda\; \Gamma.\; (f\; \Gamma, ())$

  $\intrp {\Gamma_1 ,_p \Gamma_2 \leftrightsquigarrow \Delta_2 ,_p \Delta_1}
    \triangleq \lambda\; (\Gamma_1,\Gamma_2).\; f\; \Gamma_2, f\; \Gamma_1$

  $\intrp{(\Gamma_1 ,_p (\Gamma_2 ,_p \Gamma_3)) \leftrightsquigarrow ((\Delta_1 ,_p \Delta_2) ,_p \Delta_3)}$

  \;\;\;\;\;\;$\triangleq \lambda\; (\Gamma_1 ,_p (\Gamma_2 ,_p \Gamma_3)).\;
    (f\; \Gamma_1 ,_p f\; \Gamma_2) ,_p f\; \Gamma_3$

  $\intrp{\cdot ,_p \Delta \leftrightsquigarrow \Gamma}
    \triangleq \lambda\; ((), \Delta).\; f\; \Delta$

  $\intrp{\Delta ,_p \cdot \leftrightsquigarrow \Gamma}
    \triangleq \lambda\; (\Delta, ()).\; f\; \Delta$

  $\intrp{((\Delta_1 ,_p \Delta_2) ,_p \Delta_3) \leftrightsquigarrow (\Gamma_1 ,_p (\Gamma_2 ,_p \Gamma_3))}$

  \;\;\;\;\;\;$\triangleq \lambda\; ((\Delta_1 ,_p \Delta_2) ,_p \Delta_3)).\;
    (f\; \Delta_1 ,_p f\; (\Delta_2 ,_p f\; \Delta_3))$
\end{proof}

\supdistance*

\begin{proof}
  It suffices to show that, for all $r \in \Dist$,
  \[\sup_{x\in A} d_B (f(x),g(x)) \le r
    \iff
    \sup_{x,y\in A} d_B (f(x),g(y)) - d_A(x,y) \le r. \]
  \begin{enumerate}
    \item $(\implies)$

    By the triangle inequality we know
    $$d_B(f(x), g(y)) \le d_B(f(x), g(x)) + d_B(g(x), g(y))$$
    and non-expansiveness gives the inequality
    $$d_B(g(x), g(y)) \le d_A(x, y)$$
    so we can subtract from both sides and get:
    $$d_B(f(x), g(y)) - d_A(x, y) \le d_B(f(x), g(x)) \le r$$

    \item $(\impliedby)$

    By the definition of $\sup$ we get
    $$\sup_{x,x\in A} d_B (f(x),g(x)) - d_A(x,x) \le
    \sup_{x,y\in A} d_B (f(x),g(y)) - d_A(x,y) \le r$$
    Now simplifying the left hand side we know $d_A(x,x) = 0$ because of identity, so $$\sup_{x\in A} d_B (f(x),g(x)) \le r$$
  \end{enumerate}
\end{proof}

\lollisubtyping*

\begin{proof}
  Let $r_1$ be the distance between $f$ and $g$:
  $$
    d_{A \multimap_1 B}(f,g) =
      \arginf_{r_1\in \R^+\cup \{\infty\}}
        \forall x,y \in A, d_B(f(x), g(y)) \le r_1 + d_A(x,y)
  $$
  and let $r_2$ be the distance between $f$ and $g$:
  $$
    d_{A \multimap_p B}(f,g) =
      \arginf_{r_2\in \R^+\cup \{\infty\}}
        \forall x,y \in A, d_B(f(x), g(y)) \le \sqrt[p]{r_2^p + d_A(x,y)^p }
  $$
  Because each $\arginf$ of $r_1$ and $r_2$ are both minimizing to the same value: $\forall x,y \in A, d_B(f(x), g(y))$, we can set that constant and say they are minimizing to the same constant $c$. Also because the smallest number less than or equal to a constant $c$ is $c$, we know that each $\arginf$ is minimizing the expressions $r_1 + d_A(x,y)$ and $\sqrt[p]{r_2^p + d_A(x,y)^p }$ such that they equal $c$. This means we have
    $$r_1 + d_A(x,y) = \sqrt[p]{r_2^p + d_A(x,y)^p }$$
  Using two properties of the $L^p$ metric we find that $r_1 \le r_2$. First by the well-ordered property of the $L^p$ metric,
  $$\forall x, y, ||(x, y)||_{p} \ge ||(x,y)||_{p + 1}$$
  So $r_1$ and $r_2$ must vary because $d_A(x,y)$ is constant. The $L^p$ metric is also monotone with regards to its arguments, meaning that if $x_1 \le x_2$ then $||(x_1, y)||_{p} \le ||(x_2, y)||_p$.

  So given that $||(r_1, d_A(x,y))||_1 = ||(r_2,d_A(x,y))||_p$ then we know that $r_1 \le r_2$ to compensate for the well-ordered property. And because $r_1$ and $r_2$ are the distances returned from $d_{A \multimap_p B}(f,g)$ our lemma holds.
\end{proof}

\bindsoundness*

\begin{proof}
  By unfolding the definitions of non-expansiveness and applying standard
  results about the Hellinger distance.  We focus on bind. The composition
  principle for the Hellinger metric as defined in
  \cite{DBLP:conf/icalp/BartheO13} Proposition 5 is, for $\mu, \nu \in DA$ and
  $f, g \in A \to DB$:
$$
\HD_B(\lift f \mu, \lift g \nu) \le \sqrt{\HD_A(\mu,\nu)^2 + \sup_{x\in A} \HD_B(f(x), g(x))^2}
$$
This shows that the semantics of bind is a non-expansive map. With some algebraic
manipulation we can see the Hellinger distance satisfies
\begin{align*}
  \HD_B(\lift f \mu, \lift g \nu)
    &\le \sqrt{\HD_A(\mu,\nu)^2 + \sup_{x\in A} \HD_B(f(x), g(x))^2} \\
    &\le ||(d_{\bigcirc_H A}(\mu,\nu), \sup_{x\in A} \HD_B(f(x), g(x)))||_2 \\
    &\le ||(d_{\bigcirc_H A}(\mu,\nu), d_{\multimap_1}(f,g))||_2
    & \text{(by \Cref{thm:d-iff-dhat}).} \\
\end{align*}
Hence the type of bind can be expressed as:
$$
(!_sA \multimap_1 \bigcirc_H B)  \otimes_2 \bigcirc_H A \longrightarrow \bigcirc_HB.
$$
We obtain the sought type by applying \Cref{thm:lolli-1-le-lolli-p}.
\end{proof}

\soundness*

\begin{figure*}
\begin{mathpar}
  \inferrule*[Right=Contr]{
  \Gamma \approx \Delta
  \and
  \inferrule*[Right=\tensorI]{
  \Gamma \vdash e_1 : \bigcirc_H \tau
  \and
  \inferrule*[Right=\lolliI]{
  \inferrule*[Right=!E]{
    \inferrule*[Right=!I]{
      \inferrule*[Right=Axiom]{ }{[x : \tau]_1 \vdash x : \tau}
    }{[x : \tau]_s \vdash !x : \!_s \tau}
    \and
    \Delta,_p [x : \tau]_s \vdash e_2 : \bigcirc_H \sigma
  }{\Delta,_p [x : \!_s\tau]_1 \vdash \LetBangIn{x}{x}{e_2} : \bigcirc_H \sigma}
  }{\Delta \vdash \lambda x.\LetBangIn{x}{x}{e_2} : {!_s \tau} \multimap_p \bigcirc_H \sigma}
  }{\Gamma ,_2 \Delta \vdash (e_1, \lambda x. \LetBangIn{x}{x}{e_2}) : \bigcirc_H \tau \otimes_2 (!_s A \multimap_p  \bigcirc_H \sigma)}
  }{Contr(2, \Gamma, \Delta) \vdash (e_1, \lambda x. \LetBangIn{x}{x}{e_2}) : \bigcirc_H \tau \otimes_2 (!_s \tau \multimap_p  \bigcirc_H \sigma)}
\end{mathpar}
\caption{Derivation for soundness of bind}\label{fig:bind-soundness}
\end{figure*}

\begin{proof}
  Every inductive step in \Cref{def:derivation-semantics} is independently non-expansive, and non-expansive functions combine to create non-expansive functions. Hence our semantics is sound. Let's look at some key cases. Consider the \lolliE{} case: (variable names have been slightly altered to avoid confusion)
  \begin{mathpar}
    \inferrule*[Right=\lolliE]
      {\Gamma \vdash f : A \multimap_p B \and \Delta \vdash e : A}
      {\Gamma ,_p \Delta \vdash f \: e : B}
    \and
    \intrp {\text{\lolliE} \; \pi_1 \; \pi_2} \triangleq \lambda (\gamma, \delta).\; \intrp {\pi_2} \; \gamma \; ( \intrp{\pi_1} \; \delta )
  \end{mathpar}
  The \lolliE{} rule takes two derivations $\pi_1$ and $\pi_2$ of types $A \multimap_p B$ and $A$ respectively. $\intrp{\pi_1}$ is a non-expansive function in the set $\intrp\Gamma \to \intrp A \to \intrp B$. $\intrp{\pi_2}$ is a non-expansive function in the set $\intrp\Delta \to \intrp A$. We want to create a non-expansive function of type $\intrp{\Gamma,_p\Delta} \to \intrp B$ which expands to $\intrp{\Gamma}\times\intrp{\Delta} \to \intrp B$. So the function we make is from a pair of environments to a $\intrp B$. $\intrp {\pi_2} \delta$ applies the interpretation of $\pi_2$ to $\delta$ to get an element of $\intrp A$. This is then used as an argument for $\intrp {\pi_2}$ to get an element of $\intrp B$. All functions are non-expansive so the created function is also non-expansive.
  As a simpler example, consider the $\!$I rule.
  \begin{mathpar}
    \inferrule*[Right=!I]
      {\Gamma \vdash e : A}
      {s\Gamma \vdash \! e : \!_s A}
    \and
    \intrp {!I \; \pi} \triangleq \intrp{\pi}
  \end{mathpar}
  This is non-expansive because enforcing sensitivity constraints happens at the type level and is used in the distance metric for the set. Recall that the distance metric for $\intrp{!A}$ is $d_{!A}(x, y) = s\cdot d_A(x, y)$. So the carrier set for the type $\intrp{!_sA}$ is just $\intrp{A}$.

  To restate the proof, all individual steps in the proof are non-expansive, which compose into larger non-expansive functions and so the semantics are sound.
\end{proof}

\begin{figure}
  \begin{mathpar}
    \inferrule*[Right=\lolliI]{
    \inferrule*[Right=\lolliI]{
    \inferrule*[Right=Exch]{
    \inferrule*[Right=\lolliE]{
      \inferrule*[Right=Ax]
      { }
      {[f : (A \otimes_p B) \multimap_p B]_1 \vdash f : (A \otimes_p B) \multimap_p B}
      \and
      \inferrule*[Right=\tensorI]
      {
        \inferrule*[Right=Ax]{ }{[a : A]_1 \vdash a : A}
        \and
        \inferrule*[Right=Ax]{ }{[b : B]_1 \vdash b : B}
      }
      {[a : A]_1,_p[b : B]_1 \vdash (a, b) : A \otimes_p B}
    }
    {
      [f : (A \otimes_p B) \multimap_p B]_1, ([a : A]_1,_p[b : B]_1)
        \vdash f (a, b) : C
    }}
    {
      ([f : (A \otimes_p B) \multimap_p B]_1, [a : A]_1),_p[b : B]_1
        \vdash f (a, b) : C
    }}
    {
      [f : (A \otimes_p B) \multimap_p B]_1, [a : A]_1
        \vdash \lambda b. f (a, b) : B \multimap_p C
    }}
    {
      [f : (A \otimes_p B) \multimap_p B]_1
        \vdash \lambda a. \lambda b. f (a, b) : A \multimap_p B \multimap_p C
    }
    \and
    \inferrule*[Right=\lolliI]{
      \inferrule*[Right=\tensorE]{
        \inferrule*[Right=Ax]{ }{[x : A \otimes_p B]_1 \vdash x : A \otimes_p B}
        \and
        \inferrule*[Right=Exch]{
          \inferrule*[Right=\lolliE]{
            \inferrule*[Right=\lolliE]{
              \inferrule*[Right=Ax]{ }{[f : A \multimap_p B \multimap_p C]_1 \vdash f : A \multimap_p B \multimap_p C}
              \and
              \inferrule*[Right=Ax]{ }{[a : A]_1 \vdash a : A}
            }{
              [f : A \multimap_p B \multimap_p C]_1 ,_p [a : A]_1 \vdash f a : B \multimap_p C
            }
            \and
            \inferrule*[Right=Ax]{ }{[b : B]_1 \vdash b : B}
          }{
            ([f : A \multimap_p B \multimap_p C]_1 ,_p [a : A]_1) ,_p [b : B]_1
              \vdash f\;a\;b : C
          }}{
          [f : A \multimap_p B \multimap_p C]_1 ,_p ([a : A]_1 ,_p [b : B]_1)
            \vdash f\;a\;b : C
        }}{
        [f : A \multimap_p B \multimap_p C]_1 ,_p [x : A \otimes_p B]
          \vdash \LetPairIn{a}{b}{x}{f\;a\;b} : C
      }}{
      [f : A \multimap_p B \multimap_p C]_1
        \vdash \lambda x.\; \LetPairIn{a}{b}{x}{f\;a\;b} : A \otimes_p B \multimap_p C
    }
    \end{mathpar}
  \caption{Derivation of currying and uncurrying}\label{fig:curry-deriv}
\end{figure}

\section{Extra Examples}\label{sec:extra_example}

\paragraph{Zero and Infinity}\label{zero_and_infty}
The choice of $\infty \cdot 0 = 0 \cdot \infty = \infty$ is a careful one to avoid bugs and preserve soundness. We are using the same behavior as \cite{DBLP:conf/lics/AmorimGHK19}. Another possible definition would be that of \cite{DBLP:conf/popl/AmorimGHKC17}. To see why this must be the behavior of multiplying zero and infinity, consider the following Fuzz program.

\begin{minipage}[t]{0.3\linewidth}
  \begin{verbatim}
    if x < y
    then 1
    else 0
  \end{verbatim}
\end{minipage}
\begin{minipage}[t]{0.3\linewidth}
  \vspace*{0.4cm}
  which desugars to
\end{minipage}
\begin{minipage}[t]{0.3\linewidth}
  \begin{verbatim}
    case x < y of
    | inl () -> 1
    | inr () -> 0
    end
  \end{verbatim}
\end{minipage}

$x$ and $y$ should be marked as $\infty$-sensitive because we are using the $<$
operation with them, however the body of the case is zero sensitive to the
value returned by $x < y$ so the sensitivity of the expression will be:
$\infty \cdot 0 = \infty$

\paragraph{Rotations}

\aaa{Can we say anything when we vary $\theta$?}

As a warm-up, let us consider how we can extend \system{} with a primitive for
computing rotations on the Cartesian plane.  Given a rotation angle
$\theta \in \R$, we define the following function $R_\theta$:
\begin{align*}
  R_{\theta} & : \R^2 \to \R^2 \\
  R_\theta(x,y) & =(\cos(\theta)x- \sin(\theta)y,\sin(\theta)x+\cos(\theta)y).
\end{align*}
Using the $L^2$ distance we have, for any $(x,y), (x',y') \in \mathbb R^2 $:
$$ d_{2}( R_{\theta}(x,y), R_{\theta}(x',y'))=d_{2}( (x,y),
(x',y')).$$ So, as a function on $(\R^2, d_2)$, $R_{\theta}$ is non-expansive.
In other words, it has type $\R \otimes_2 \R \multimap \R \otimes_2 \R$.

Note that, by contrast, $R_{\theta}$ is not non-expansive for the $L^1$ or
$L^\infty$ distances.  For instance, suppose that $\theta = \pi / 4$, and let
$p = (\sqrt{2}/2,\sqrt{2}/2)$. Then
\begin{align*}
  R_\theta(0,0) & = (0,0) &
  R_\theta(1,0) & = p &
  R_\theta(p) & = (0,1).
\end{align*}
Thus, $d_{1}(R_\theta(0,0),R_\theta(1,0)) = \sqrt{2}/2 + \sqrt{2}/2 = \sqrt{2}$,
which is strictly larger than $d_{1}((0,0),(1,0)) = 1$.  Similarly,
$d_{\infty}(R_\theta(0,0),R_\theta(p)) = \max(0,1) = 1$, which is strictly
larger than $d_{\infty}((0,0),p) = \max(\sqrt{2}/2,\sqrt{2}/2) = \sqrt{2}/2$.

\paragraph{Computing distances}

Suppose that the type $\tau$ denotes a proper metric space.  Then we can
incorporate its distance function in \system{} with the type
\[ \tau \otimes_1 \tau \multimap \R. \] Indeed, let $x$, $x'$, $y$ and $y'$ be
arbitrary elements of $\intrp \tau$.  Then
\begin{align*}
  d(x,y) - d(x',y')
  & \leq d(x,x') + d(x',y') + d(y',y) - d(x',y') \\
  & = d(x,x') + d(y,y') \\
  & = d_{\intrp{\tau \otimes_1 \tau}}((x,y),(x',y')).
\end{align*}
By symmetry, we also know that
$d(x',y') - d(x,y) \leq d_{\intrp{\tau\otimes_1 \tau}}((x,y),(x',y'))$.
Combined, these two facts show
\begin{align*}
  d_{\R}(d(x,y), d(x',y'))
  & = |d(x,y) - d(x',y')| \\
  & \leq d_{\intrp{\tau \otimes_1\tau}}((x,y),(x',y')),
\end{align*}
which proves that the metric on $\intrp{\tau}$ is indeed a non-expansive
function.

\paragraph{Distributivity properties}

In linear logic the following distributivity properties are derivable:
\begin{align*}
  A \otimes (B \oplus C)  & \vdash (A \otimes B) \oplus (A \otimes C) \\
  (A \otimes B) \oplus (A \otimes C)  & \vdash A \otimes (B \oplus C) \\
  (A \with B) \oplus (A \with C) & \vdash A \with (B \oplus C).
\end{align*}
However, $\with$ does not distribute perfectly over $\oplus$, since the converse of
the last statement does not usually hold:
\begin{align*}
  A \with (B \oplus C)   \not\vdash (A \with B) \oplus (A \with C).
\end{align*}

By contrast, in \system{}, $\otimes_p$ does distribute over $\oplus$, as
witnessed by the following judgments
% , so this is particular the case for
%  $\otimes_1$ (corresponding to $\otimes$) and for   $\otimes_{\infty}$ (corresponding to $\with$). The corresponding terms are the following ones (for any $p$):
% With environment $\Gamma = [u : A \otimes_P (B \oplus C)]_1$ we can derive
% $\Gamma \vdash t : (A \otimes_p B) \oplus (A \otimes_p C)$ where $t$ is the
% program
\begin{align*}
[u : \tau \otimes_p (\sigma \oplus \rho)]_1 & \vdash t_1 : (\tau \otimes_p \sigma) \oplus (\tau \otimes_p \rho)\\
[u : (\tau \otimes_p \sigma) \oplus (\tau \otimes_p \rho)]_1 & \vdash t_2 : \tau \otimes_p (\sigma \oplus \rho),
\end{align*}
where
\begin{align*}
  t_1 = \LetPairIn {&u_1} {u_2} {u} {} \text{case } u_2 \text{ of}\\
    &|\;x.\;\inj_1(u_1,x)\\
    &|\;y.\;\inj_2(u_1,y)\\
  t_2 = \textbf{case }& u \textbf{ of}\\
    &|\; x.\;\LetPairIn{x_1}{x_1}{x}{(x_1, \inj_1 x_2)}\\
    &|\; y.\;\LetPairIn{y_1}{y_1}{y}{(y_1, \inj_2 y_2)}
\end{align*}

We can also show the following distributivity properties of scaling:
\begin{align*}
[x : \tau \otimes_p \sigma]_r & \vdash t_1 : {!_r \tau} \otimes_p {!_r \sigma}\\
[z : \!_r \!_s\tau]_1 & \vdash t_2 : \!_s \!_r \tau
\end{align*}
where $t_1$ and $t_2$ are
\begin{align*}
  t_1 & = \LetPairIn {x_1} {x_2} {x} {(!_r x_1,\!_r x_2)}\\
  t_2 & = \LetBangIn{y}{z}{ \LetBangIn{w}{y}{!! w }  }.
\end{align*}

\paragraph{Programming with matrices}

The Duet language~\cite{DBLP:journals/pacmpl/NearDASGWSZSSS19} provides several
matrix types with the $L^1$, $L^2$, or $L^\infty$ metrics, along with primitive
functions for manipulating them.  In \system{}, these types can be defined
directly as follows
\[ \mathbb{M}_p[m,n] = \otimes_1^m \otimes_p^n \R. \] Following Duet, we use the
$L^1$ distance to combine the rows and the $L^p$ distance to combine the
columns.
One advantage of having types for matrices defined in terms of more basic
constructs is that we can program functions for manipulating them directly,
without resorting to separate primitives.  For example, we can define the
following terms in the language:
\begin{align*}
addrow &: \mathbb{M}_{p} [1,n] \otimes_1 \mathbb{M}_{p} [m,n]
\multimap \mathbb{M}_{p} [m+1,n]  \\
addcolumn &: \mathbb{M}_{1} [1,m] \otimes_1 \mathbb{M}_{1} [m,n]
\multimap \mathbb{M}_{1} [m,n+1] \\
addition &: \mathbb{M}_{1} [m,n] \otimes_1 \mathbb{M}_{1} [m,n] \multimap \mathbb{M}_{1} [m,n].
\end{align*}
The first program, $addrow$, appends a vector, represented as a $1 \times n$
matrix, to the first row of a $m\times n$ matrix. The second program,
$addcolumn$, is similar, but appends the vector as a column rather than a row.
Because of that, it is restricted to $L^1$ matrices.
\aaa{Can we generalize addcolumn to $L^p$ matrices? It feels like it should be
  possible, at least if we use $\otimes_p$ to separate the two arguments.}
Finally, the last program, $addition$, adds the elements of two matrices
pointwise.

 One drawback of our encoding is that these programs need to be defined
 separately for each matrix dimension.  In practice, it would be desirable to
 have a dependently typed version of \system{}, along the lines of
 DFuzz~\cite{DBLP:conf/popl/GaboardiHHNP13}, to simplify the manipulation of
 matrices of arbitrary size.

\paragraph{Metrics for lists and inductive types}

In Fuzz, we can define two list types using
recursion~\cite{DBLP:conf/icfp/ReedP10}:
\begin{align*}
\mathtt{list} \; \tau & = \mu \alpha. 1 \oplus (\tau \otimes \alpha) &
\mathtt{alist} \; \tau & = \mu \alpha. 1 \oplus (\tau \with \alpha).
\end{align*}
Following prior work~\cite{DBLP:conf/popl/AmorimGHKC17}, these types can be
interpreted as metric spaces, by computing the initial algebra of a certain
functor.  The carrier of these metric spaces is the set of lists over
$\intrp{\tau}$, endowed with the following metrics:
\begin{align*}
  d_{list}(l,l') & =
  \begin{cases}
    \infty & \text{if $|l| \neq |l'|$} \\
    \sum_{i=1}^{n} d_A(l_i,l'_i) & \text{if $|l| = |l'| = n$}
  \end{cases} \\
  d_{alist}(l,l') & =
  \begin{cases}
    \infty & \text{if $|l| \neq |l'|$} \\
    \max_{i=1}^{n} d_A(l_i,l'_i) & \text{if $|l| = |l'| = n$}.
  \end{cases}
\end{align*}
This construction can be easily adapted to \system and generalized.  First, we
extend \system{} with inductive types, by which we mean recursive types with
strictly positive recursive occurrences (dealing with arbitrary recursive types
should be possible by using a variant of metric
CPOs~\cite{DBLP:conf/popl/AmorimGHKC17}).  Then, we define
\[p\mathtt{list} \; \tau = \mu \alpha. 1 \oplus (\tau \otimes_p \alpha).\] To
interpret such inductive types, we follow the standard recipe.  First, by
standard categorical arguments, we can show that the category of metric spaces
and non-expansive functions has colimits of chains.  Specifically, given a chain
$X_i$ of metric spaces, we can define $X_\infty = \colim_i X_i$ via the formula
\begin{align*}
  |X_\infty| & = \colim_i |X_i| \\
  d_{X_\infty} & = \inf_i d_{X_i}^*,
\end{align*}
where $d_{X_i}^*$ denotes the pushforward of the metric $d_{X_i}$ into
$|X_\infty|$.  Second, we note that a type expression $\tau$ with one free type
variable $\alpha$ corresponds to a cocontinuous functor on metric spaces,
because it is formed by composing cocontinuous functors.  We can compute the
initial algebra of this functor as the colimit of a certain chain, which we take
to be the interpretation of $\mu \alpha. \tau$.

In the case of $p\mathtt{list}\;\tau$, by unfolding definitions, we obtain the
following metric:
\begin{align*}
  d_{\plist\tau}(l,l') & =
  \begin{cases}
    \infty & \text{if $|l| \neq |l'|$} \\
    \sqrt[p]{\sum_{i=1}^{n} d_A(l_i,l'_i)} & \text{if $|l| = |l'| = n$}.
  \end{cases}
\end{align*}
In the cases $p \in \{1,\infty\}$, this reduces to the previous distances on
lists (where, in the case $p = \infty$, we take the limit of the right-hand side
when $p \to \infty$).

The $\plist\; \tau$ type is equipped with the following constructors:
\begin{align*}
nil &: \plist \tau \\
cons &:  \tau \otimes_p \plist\tau \multimap \plist \tau.
\end{align*}
Moreover, we can define functions on lists by structural recursion, which we can
soundly add to \system{} thanks to the universal property of initial algebras.
For example:
\begin{align*}
  append &: \plist \tau  \otimes_p \plist \tau \multimap \plist \tau.
\end{align*}

\paragraph{K-Means} \label{ex:k-means} The k-means algorithm is an iterative algorithm for
finding multiple means in a set of datapoints. These means can be
thought of as approximate "centers" of groupings in the dataset. A
differentially private version of the k-means algorithm typed in Fuzz had been given in \cite{DBLP:conf/icfp/ReedP10}, using the Laplace mechanism.
Here we revisit this example to
illustrate how by using \system{}  typing and $L^p$  one can refine the
sensitivity analysis of an algorithm.

 The $iterate$ Fuzz term defined in \cite{DBLP:conf/icfp/ReedP10}
 takes a set of data points, a list of centers and returns an updated
 list of centers, obtained by grouping each data point to the center
 it is closest to, adding Laplacian noise and then taking the new
 centers to be the mean of each group. It was given the following Fuzz
type\footnote{Actually there were two typos on types in
  \cite{DBLP:conf/icfp/ReedP10}; the type of $iterate$ given here is
the right corrected one, as well as the type of $zip$.}
%$$ iterate : \!_{3} (\set pt) \multimap  \Fuzzlist pt \multimap    \bigcirc_P ( \Fuzzlist pt)$$
$$ iterate : \!_{3} (\set pt) \multimap  !_{\infty}(\Fuzzlist pt) \multimap    \bigcirc_P ( \Fuzzlist pt)$$
The 3 sensitivity of $iterate$ in its first data points set argument comes from
the fact that this argument is used 3 times in the term, thus the Fuzz
contraction rule (in $L^1$) leads to an index 3 for the $!$ of this
argument. The idea here is to use instead in \system{} contraction in a $L^p$
bunch context, which will lead to an index $\sqrt[p]3$ instead of 3. For
enabling that one needs to change the type of the $zip$ intermediate function,
replacing $\multimap$ with $\multimap_p$. Then this forces to take for points
the type $pt=\R \otimes_p \R$ (so using the \lp metric), for lists the type
$\plist \tau$, and to change accordingly the type of $map$ and of the other
intermediary functions. We obtain the following types:

% \pb{I propose to remove the following sentences}
% \system{} allows us to define k-means for an arbitrary \lp space. We can see this affect the sensitivity of the program because the database is $\sqrt[p]3$ sensitive rather than in Fuzz where it is 3 sensitive. This is because Fuzz is written with the underlying metric space of $L^1$.

% % \jw{define pt as $\R \otimes_p \R$}

% % \jw{talk about how it's natural to write programs with arbitrary p norm in bunched fuzz}

% % \jw{point out sensitivity $\sqrt[p]{3}$ which is a small improvement here but indicative of how p norms affect sensitivity in other programs}

% % \jw{ensure that there are closed derivations of assign partition totx toty zip, and use these via weakening to prevent having having to manage these in the context}

% % \jw{make sure we can write the set helper functions using the primitives in bfuzz so we know the primitives are sound in bfuzz. watch out for L1 vs Lp}

% % \jw{can we prove that $A \otimes_p !_\infty B \text{ ISOMORPHIC TO } A \otimes_q !_\infty B$ ?}

% For k-means in \system{} we use the following helper functions:
% \pb{end of removed part}
\begin{align*}
  pt & \triangleq \R \otimes_p \R\\
%   assign &: \plist pt \multimap_\infty \set pt \multimap_p \set (pt \otimes_p int)
assign &:  !_{\infty}(\plist pt )\multimap_p \set pt \multimap_p \set (pt \otimes_p int)
  \\partition &: \set (pt \otimes_p int) \multimap_p \plist (\set pt)
  \\totx, toty &: \set pt \multimap \R
  \\zip &: \plist \tau \multimap_p \plist \sigma \multimap_p \plist (\tau \otimes_p \sigma)
  \\pmap &: \!_\infty(\tau \multimap_p \sigma) \multimap_p (\plist \tau) \multimap_p \plist \sigma
\end{align*}

%We define the $pt$ type (read "point") to be a point in 2D space using
%the \lp metric for some arbitrary fixed $p$.
The term $assign$ takes a list of means and a database and returns pairs of
points matched with the index of the closest mean given in the list of
means. $partition$ takes these labeled points and splits them into a list of
sets of points. $totx$ and $toty$ calculate the total of the $x$ or $y$
coordinates respectively in a set of points. $zip$ is the usual zip function on
lists, and $pmap$ is the usual map function adapted to our fixed \lp
space. Finally, $seq$ binds over every element in a list to take a list of
distributions and return a distribution over
lists. %\jw{mention what monad we're using for this example}

\begin{align*}
  &seq : \plist (\bigcirc_P \tau) \multimap_p \bigcirc_P (\plist \tau)\\
  &seq\; [] = []\\
  &seq\; x :: xs =
  \mLet{y}{x}{
    \mLet{ys}{seq\; xs}{
      \return\; y :: ys}}
\end{align*}
The k-means algorithm is defined below. The user supplies a database and a set of $k$ initial means. The means are either initialized to random points within the dataset or are the output of a previous iteration of the algorithm. The datapoints are then grouped by distance to each mean using $assign$ and new means are calculated by taking the average of each groups $x$'s and $y$'s.
\begin{align*}
 iterate : \!_{\sqrt[p] 3} (&\set pt) \multimap_p !_{\infty}(\plist pt) \multimap_p
  \bigcirc_P (\plist pt)\\
  iterate\; b\; ms =\;&\LetBangIn{b'}{b}{
    \\ &\LetIn{b''}{partition\; (assign\; ms\; b')}{
      \\ &\LetIn{tx}{pmap (add\_noise \circ totx)\; b''}{
        \\ &\LetIn{ty}{pmap (add\_noise \circ toty)\; b''}{
          \\ &\LetIn{t}{pmap (add\_noise \circ size)\; b''}{
            \\ &\LetIn{stats}{zip\; (zip\; (tx, ty), t)}{
              \\ &seq\; (pmap\; avg\; stats)
            }
          }
        }
      }
    }
  }
\end{align*}

Note that if we take $p=1$ we have exactly the same type derivation as
in \cite{DBLP:conf/icfp/ReedP10} in Fuzz.

It is also possible to write a variant of this program which instead
of building two lists, one for component $x$ and one for component
$y$, builds a single lists of vectors  in $\R \otimes_p \R$ by using
a map on the function $\mathit{vectorSum}$ defined before. The sensitivity
obtained with respect to the set of data points argument  is then $1+2^{1/p}$. For this variant one only uses Bunched Fuzz
connective  $\otimes_p$ for the underlying vector type $\R \otimes_p
\R$ but one keeps the Fuzz versions (with $\multimap$) of $map$, $zip$, $assign$\dots
The noise is also added by the Laplace mechanism.

% Now let us present the second version of k-means. Here what we want to
% do is to compute directly on $\R \otimes_p \R$ vectors (points) and at
% the end add noise to $\R \otimes_p \R$ by using the   ${\tt LpMech}$
% mechanism. For that we consider a function $vectoradd$ which, given as
% argument a set of $\R \otimes_p \R$ vectors, returns the vector
% obtained by summing on each component. It can be defined from $tox$,
% $toty$ by:
% $$vectoradd= \lambda m. ((totx \; m) \otimes_p (toty \; m)): !_{
%   2^1/p}  \set pt \multimap pt$$
% Then the new function $iterate'$ is defined from $iterate$ by changing
% lines 3 to 6 by:
% \begin{align*}
% &\LetIn{tx}{pmap (add\_noise \circ totx)\; b''}\\
% &\LetIn{ty}{pmap (add\_noise \circ toty)\; b''}\\
% &\LetIn{t}{pmap (add\_noise \circ size)\; b''}\\
% &\LetIn{stats}{zip\; (zip\; (tx, ty), t)}

% \end{align*}

\contrsound*

\begin{proof}
  By induction on the derivation of $\Gamma \approx \Delta$.  The first point is
  trivial, since $\approx$ relates bunches that differ only on variable names
  and sensitivities, which do not affect the carrier sets.  Thus, we focus on
  the last point.  The case $p = \infty$ is easier, since in this case
  $Contr(\infty, -, -)$ takes the pointwise maximum of all the sensitivities in
  the contexts, and because $\otimes_\infty$ becomes a true product in the
  categorical sense.  Now, suppose that $p < \infty$.
  \begin{itemize}
  \item If $\Gamma$ and $\Delta$ are empty, then the domain and codomain of
    $\delta$ is reduced to a singleton set.  Thus, $\delta$ is trivially
    non-expansive.
  \item Now suppose that $\Gamma = [x : \tau]_s$ and $[y : \tau]_r$.  We need to
    show that the diagonal function is a non-expansive function of type
    \[ !_{\sqrt[p]{s^p +r^p}} \intrp\tau \to {!_s \intrp\tau} \otimes_p {!_r
        \intrp\tau}. \]
    Let $X$ denote the domain of this map, and $Y$ the
    codomain.  First, suppose that $p < \infty$.  Non-expansiveness holds
    because
    \begin{align*}
      d_Y((x,x),(y,y))
      & = \sqrt[p]{(s \cdot d(x,y))^p+(r \cdot d(x,y))^p} \\
      & = \sqrt[p]{(s^p+r^p)d(x,y)^p} \\
      & = \sqrt[p]{s^p+r^p}d(x,y) \\
      & = d_X(x,y).
    \end{align*}
    If $p = \infty$, the above root is actually defined as $\max(s,r)$.  In this
    case, we have
    \begin{align*}
      d_Y((x,x),(y,y))
      & = \max(s \cdot d(x,y), r \cdot d(x,y)) \\
      & \leq \max(\max(s,r) \cdot d(x,y), \max(s,r) \cdot d(x,y)) \\
      & = \max(s,r) \cdot d(x,y) \\
      & = d_X(x,y).
    \end{align*}
  \item Now suppose that $\Gamma = \Gamma_1 ,_q \Gamma_2$,
    $\Delta = \Delta_1 ,_q \Delta_2$, $\Gamma_1 \approx \Delta_1$ and
    $\Gamma_2 \approx \Delta_2$. Abbreviate
    $c(p,q) = 2^{\left|\frac1p - \frac1q\right|}$ as just $c$. By induction, the
    diagonals are non-expansive functions of types
    \begin{align*}
      \intrp{Contr(p,\Gamma_1,\Delta_1)}
      & \to \intrp{\Gamma_1}\otimes_p\intrp{\Delta_1} \\
      \intrp{Contr(p,\Gamma_2,\Delta_2)}
      & \to \intrp{\Gamma_2}\otimes_p\intrp{\Delta_2}.
    \end{align*}
    We can rewrite the diagonal on $\intrp{Contr(p,\Gamma,\Delta)}$ as the
    composite
    \begin{align*}
      & \intrp{Contr(p,\Gamma,\Delta)}\\
      & = {!_c\intrp{Contr(p,\Gamma_1,\Delta_1)}} \otimes_{q}
        {!_c\intrp{Contr(p,\Gamma_2,\Delta_2)}}
      & \text{\Cref{prop:scaled-bunch-comm}, def.} \\
      & \to {!_c(\intrp{\Gamma_1} \otimes_{p} \intrp{\Delta_1})} \otimes_{q}
        {!_c(\intrp{\Gamma_2} \otimes_{p} \intrp{\Delta_2})}
      & \text{induction} \\
      & = {!_c((\intrp{\Gamma_1} \otimes_{p} \intrp{\Delta_1}) \otimes_{q}
        (\intrp{\Gamma_2} \otimes_{p} \intrp{\Delta_2}))}
      & \text{\Cref{prop:scaling-distr}} \\
      & \to (\intrp{\Gamma_1} \otimes_{q} \intrp{\Gamma_2})
        \otimes_{p} (\intrp{\Delta_1} \otimes_{q} \intrp{\Delta_2})
      & \text{\Cref{prop:tensor-distr}} \\
      & = \intrp{(\Gamma_1,_q\Gamma_2)} \otimes_p
        \intrp{(\Delta_1,_q\Delta_2)}\\
      & = \intrp\Gamma \otimes_p \intrp\Delta.
    \end{align*}
  \end{itemize}
\end{proof}

\ifalgrules
\section{Algorithmic Rules}

% newcommand{\CaseOf}[5]{\textbf{case } #1 \textbf{ of } #2.\;#3 \mid #4.\;#5}
% \newcommand{\LetPairIn}[4]{\textbf{let } (#1, #2) = #3 \textbf{ in } #4}
% \newcommand{\LetBangIn}[3]{\textbf{let } !#1 = #2 \textbf{ in } #3}
% \newcommand{\mLet}[3]{\textbf{mlet } #1 = #2 \textbf{ in } #3}
% \newcommand{\LetIn}[3]{\textbf{let } #1 = #2 \textbf{ in } #3}
% \newcommand{\return}{\textbf{return }}
% \newcommand{\inj}{\textbf{inj}}

The system of algorithmic rules is displayed on Fig. \ref{fig:alg-rules}.
\begin{figure}
  \begin{mathpar}
    \inferrule*[Right=Axiom]
      {s \ge 1}
      {[x : \tau]_s \vdash x : \tau}
    \and
    \inferrule*[Right=\RI]
      { }
      {\cdot \vdash r : \R}
    \and
    \inferrule*[Right=1I]{ }{\cdot \vdash () : 1}
    \\
    \inferrule*[Right=\lolliI alg]
      {\Gamma \,_p [x : \tau]_1 \vdash e : \sigma}
      {\Gamma \vdash \lambda x. e : \tau \multimap_p \sigma}
    \and
    \inferrule*[Right=\lolliE alg]
      {\Gamma \vdash f : \tau \multimap_p \sigma \and \Delta \vdash e : \tau \and \Gamma \approx \Delta}
      {Contr(p, \Gamma, \Delta) \vdash f \: e : \sigma}
    \\
    \inferrule*[Right=\tensorI alg]
      {\Gamma \vdash e_1 : \tau \and \Delta \vdash e_2 : \sigma \and \Gamma \approx \Delta}
      {Contr(p, \Gamma , \Delta) \vdash (e_1, e_2) : \tau \otimes_p \sigma}
    \and
    \inferrule*[Right=\tensorE alg]
      { \Delta \vdash e_1 : \tau \otimes_p \sigma
        \and
        \Gamma,_q([x : \tau]_s \,_p [y : \sigma]_s) \vdash e_2 : \rho
        \and
        \Gamma \approx \Delta }
      { Contr(p, \Gamma, s\Delta) \vdash
        \textbf{let$_q$ } (x,_p y) = e_1 \textbf{ in } e_2
        : \rho}
    \\
    \inferrule*[Right=\plusI1 alg]
      {\Gamma \vdash e : \tau}
      {\Gamma \vdash \inj_1 e : \tau \oplus \sigma}
    \and
    \inferrule*[Right=\plusI2 alg]
      {\Gamma \vdash e : \sigma}
      {\Gamma \vdash \inj_2 e : \tau \oplus \sigma}
    \and
    \inferrule*[Right=\plusE alg]
      {
        \Gamma \vdash e_1 : \tau \oplus \sigma
        \and \Delta,_p [x : \tau]_s \vdash e_2 : \rho
        \and \Delta,_p [y : \sigma]_s \vdash e_3 : \rho
        \and \Gamma \approx \Delta }
      {Contr(p, \Delta, s\Gamma) \vdash
        \textbf{case$_p$ } e_1 \textbf{ of } x.\;e_3 \mid y.\;e_3 }
    \\
    \inferrule*[Right=!I]
      {\Gamma \vdash e : \tau}
      {s\Gamma \vdash \! e : \!_s \tau}
    \and
    \inferrule*[Right=!E alg]
      {\Gamma \vdash e_1 : \!_r \tau \and \Delta,_p[x : \tau]_{rs} \vdash e_2 : \sigma}
      {Contr(p, \Delta, s\Gamma) \vdash \textbf{let$_p$ } !x = e_1 \textbf{ in } e_2 : \sigma}
    \\
    \inferrule*[Right=Bind-P]
      { \Gamma \vdash e_1 : \bigcirc_P \tau
        \and
        \Delta ,_p [x : \tau]_s \vdash e_2 : \bigcirc_P \sigma
        \and
        \Gamma \approx \Delta}
      {Contr(1, \Gamma, \Delta) \vdash
        \textbf{mlet$_p$ } x = e_1 \textbf{ in } e_2}
    \and
    \inferrule*[Right=Return-P]
      { \Gamma \vdash e : \tau}
      { \infty\Gamma \vdash \return e : \bigcirc_P \tau}
    \\
    \inferrule*[Right=Bind-H]
      { \Gamma \vdash e_1 : \bigcirc_H \tau
        \and
        \Delta ,_p [x : \tau]_s \vdash e_2 : \bigcirc_H \sigma
        \and
        \Gamma \approx \Delta}
      {Contr(2, \Gamma, \Delta) \vdash
        \textbf{mlet$_p$ } x = e_1 \textbf{ in } e_2}
    \and
    \inferrule*[Right=Return-H]
      { \Gamma \vdash e : \tau}
      { \infty\Gamma \vdash \return e : \bigcirc_H \tau}
  \end{mathpar}
  \caption{Algorithmic Rules}\label{fig:alg-rules}
\end{figure}

\fi

\section{The language as a logic}\label{sect:logic}

%Bunches with multiple holes labeled by a set of variables $X$ are denoted with
%$\Gamma\{ x \mapsto \star \}_{x \in X}$.
%
%\aaa{This notation was introduced earlier, but this was not used elsewhere, so I
%  moved it here.}

We give an alternative presentation of the non-probabilistic fragment
of \system{} as a logic,
by means of a sequent calculus. This logic shares many of the properties
of \system{}. We have also proved a cut elimination result for it.
% We also present \system{} as a non-probablistic logic. The logic shares many of the same properties as as similar rule structure as \system{}. We have also proved cut elimination for the logic.

Bunches with multiple holes labeled by a set of variables $X$ are denoted with
$\Gamma\{ x \mapsto \star \}_{x \in X}$.

\paragraph{Formulas} The syntax of formulas follows much of the same structure as \system{}'s type system.
\begin{align*}
  &A, B ::= 1
    \mid \bot
    \mid \mathbb R
    \mid \!_sA
    \mid A \multimap_p B
    \mid A \otimes_p B
    \mid A \oplus B
  \\ & p \in \R^{\ge1}_\infty , s \in \Sens
\end{align*}
\paragraph{Bunches} Environments are defined as
$$\Gamma ::= \cdot \mid [ A ]_s \mid \Gamma \,_p \Gamma$$
and enjoy the same properties as in \system.

\begin{figure*}
  \begin{mathpar}
  \inferrule*[Right=Axiom]
    { }
    {[A]_1 \vdash A}
  \and
  \inferrule*[Right=\RR]
    { }
    {\cdot \vdash \mathbb R}
  \and
  \inferrule*[Right=1R]{ }{\cdot \vdash 1}
  \and
  \inferrule*[Right=1L]
    {\Gamma(\cdot) \vdash A}
    {\Gamma([1]_1) \vdash A}
  \and
  \inferrule*[Right=$\bot$ L]
    { }
    {\Gamma([\bot]_s) \vdash A}
  \and
  \inferrule*[Right=\lolliR]
    {\Gamma \,_p [A]_1 \vdash B}
    {\Gamma \vdash A \multimap_p B}
  \and
  \inferrule*[Right=\lolliL]
    {\Gamma \vdash A \and \Delta([B]_s) \vdash C}
    {\Delta ([A \multimap_p B]_1 \,_p s\Gamma) \vdash C}
  \and
  \inferrule*[Right=\tensorR]
    {\Gamma \vdash A \and \Delta \vdash B}
    {\Gamma \,_p \Delta \vdash A \otimes_p B}
  \and
  \inferrule*[Right=\tensorL]
    {\Gamma([A]_s \,_p [B]_s) \vdash C}
    {\Gamma([A \otimes_p B]_s) \vdash C}
  \and
  \inferrule*[Right=\plusR1]
    {\Gamma \vdash A}
    {\Gamma \vdash A \oplus B}
  \and
  \inferrule*[Right=\plusR2]
    {\Gamma \vdash B}
    {\Gamma \vdash A \oplus B}
  \and
  \inferrule*[Right=\plusL]
    {\Gamma([A]_s) \vdash C \and \Gamma([B]_s) \vdash C}
    {\Gamma([A \oplus B]_s) \vdash C}
  \and
  \inferrule*[Right=!R]
    {\Gamma \vdash A}
    {s\Gamma \vdash \!_sA}
  \and
  \inferrule*[Right=!L]
    {\Gamma([A]_{r \cdot s}) \vdash B}
    {\Gamma([!_rA]_s) \vdash B}
  \and
  \inferrule*[Right=Contr]
    {\Gamma(\Delta \,_p \Delta') \vdash A \and \Delta \approx \Delta'}
    {\Gamma(Contr(p, \Delta, \Delta')) \vdash A}
  \and
  \inferrule*[Right=Cut]
    {\Gamma \vdash A \and \Delta([A]_s) \vdash B}
    {\Delta(s\Gamma) \vdash B}
  \and
  \inferrule*[Right=Exch]
    {\Gamma \vdash A \and \Gamma \leftrightsquigarrow \Gamma'}
    {\Gamma' \vdash A}
  \and
  \inferrule*[Right=Weak]
    {\Gamma(\cdot) \vdash A}
    {\Gamma(\Delta) \vdash A}
  \end{mathpar}
  \caption{Inference Rules of \system{}'s Logic}
\end{figure*}

\paragraph{Cut Elimination} The Cut rule is admissible in \system{}'s Logic. The
complicated part of this proof is tracking any instances of the principal
formula higher in the derivation tree. This is necessary because of the
generalized contraction rule: any principal formula can be the result of a
contraction. The main engine of cut elimination is the following theorem.

\begin{restatable}{theorem}{cuttable}
  \label{thm:cuttable}
  Given formulas $A$, $B$, context $\Gamma$, and context $\Delta$ with $n$ holes
  labeled by a set of variables $X$, two cut-free derivations $\Gamma \vdash A$
  and $\Delta\{ x \mapsto [A]_{s_x} \}_{x \in X} \vdash B$, then there is a
  cut-free derivation of $\Delta\{ x \mapsto s_x \Gamma \}_{x \in X} \vdash
  B$.
\end{restatable}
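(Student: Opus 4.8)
The plan is to prove \Cref{thm:cuttable} by a lexicographic induction: the outer measure is the size of the cut formula $A$, and the inner measure is the (multiset, or summed) size of the two given derivations. The statement is deliberately generalized to allow $n$ holes in $\Delta$ all labeled by the \emph{same} cut formula $A$ (but possibly with different sensitivity annotations $s_x$), precisely because the generalized contraction rule of \system{} can fuse several hypotheses carrying $A$ into one, so a single bottom-level cut may correspond to many occurrences higher up. Once \Cref{thm:cuttable} is established, admissibility of Cut in the full calculus follows by a straightforward induction on derivations that pushes cuts upward, replacing each $\mathrm{Cut}$ node by an appeal to \Cref{thm:cuttable} on its cut-free premises (themselves obtained by the induction hypothesis).

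First I would set up the case analysis on the last rule used in the right derivation $\Delta\{x \mapsto [A]_{s_x}\}_{x\in X} \vdash B$. The cases split into three families. (1) \emph{The rule does not act on any hole.} Then it acts entirely within the ambient bunch structure $\Delta$; one pushes the cuts into the premises by the inner induction hypothesis and reapplies the same rule, after checking that the bunch operations involved ($s\cdot(-)$, $Contr$, $\leftrightsquigarrow$, hole-plugging) commute appropriately with substituting $s_x\Gamma$ for $[A]_{s_x}$. The scaling and contraction identities from \Cref{fig:bunchoperations}, together with \Cref{prop:scaling-distr} and the associativity/commutativity facts about $\leftrightsquigarrow$, are exactly what make these commutations go through; the $\mathrm{Contr}$ case is the delicate one here, since a contraction in $\Delta$ may merge two sub-bunches each containing holes, and one must check $\Delta_1 \approx \Delta_2$ is preserved after plugging in copies of $\Gamma$ and that the $c(p,q)$ factors line up. (2) \emph{The rule is left-principal on (some of) the holes} — i.e. it is a left rule whose principal formula is $A$ at one or more holes. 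Here we also need the last rule of the \emph{left} derivation $\Gamma \vdash A$: if it is right-principal for the same connective, we perform the key reduction, replacing the cut on $A$ by cuts on the strictly smaller immediate subformulas of $A$ (using the outer induction hypothesis), suitably scaled; if the left derivation ends in a non-principal left rule, a weakening, an exchange, or a contraction, we commute that rule downward past the cut (inner induction). (3) \emph{The right rule is right-principal or otherwise non-principal on the holes}: commute as in case (1).

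The main obstacle I expect is the interaction between \emph{multiple holes} and the \emph{generalized contraction rule} in the right derivation. When the last right rule is $\mathrm{Contr}$ merging $\Delta' \,_p \Delta''$ with $\Delta' \approx \Delta''$, the holes of $\Delta$ are distributed between $\Delta'$ and $\Delta''$; after substitution each copy of $A$ becomes a copy of $\Gamma$, so we must invoke the induction hypothesis on $\Delta'\,_p\Delta''$ (which has \emph{more} holes, but derivations of no greater size — so the inner induction is on derivation size, not hole count) and then check that $Contr(p,-,-)$ applied to the resulting bunch yields exactly $\Delta\{x\mapsto s_x\Gamma\}$ up to the renaming bookkeeping and the $c(p,q)$ coefficients. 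Getting the sensitivity annotations $s_x$ to propagate correctly through $Contr(p,[x:\tau]_{s},[y:\tau]_{r}) = [x:\tau]_{\sqrt[p]{s^p+r^p}}$ while the $p$-indexed bunch structure is rearranged is the fiddly heart of the argument; \Cref{lem:renaming} and \Cref{thm:bunch-monoid-isomorphism} handle the variable-renaming and associativity/commutativity aspects, but one must be careful that the $!L$, $!R$ and $\multimap L$ scalings interact soundly with this, which is exactly where the $\infty\cdot 0 = \infty$ convention matters. A secondary, more routine obstacle is the bookkeeping for the principal cases of $\multimap$ and $\otimes_p$, where the scaling $s\Gamma$ in the $\mathrm{Cut}$ and $\lolliL$ conclusions must be matched against the $!_s$ annotations; these follow the familiar linear-logic pattern adjusted for grades, as in prior Fuzz metatheory.
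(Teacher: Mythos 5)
Your proposal matches the paper's own argument: the paper proves \Cref{thm:cuttable} by exactly this lexicographic induction (outer on the cut formula $A$, inner on the sizes of the two derivations), with the same multi-hole generalization to absorb the generalized contraction rule, the same split into commutation cases (including the delicate $\mathrm{Contr}$ and $\mathrm{Weak}$ bookkeeping, where $s_y$ decomposes as an $L^p$ combination of the sensitivities in the two merged copies) and principal key cases for $(1,\multimap_p,\otimes_p,\oplus,\!_s)$ handled by the outer induction hypothesis. No genuine differences; your plan is essentially the paper's proof.
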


\paragraph{Semantics of the Logic} \system's logic has a similar semantics to metric spaces. Bunches have the same interpretation as in the language, and formulas interpretations are the same as their type counterparts. The semantics of derivations are described in \Cref{def:logic-derivation-semantics}
\begin{definition}
  \label{def:logic-derivation-semantics}
  Every derivation $\tau$ of $\Gamma \vdash A$ has an interpretation to a non-expansive function of type $\intrp \Gamma \rightarrow \intrp A$

  By structural induction on $\tau$.

  $\intrp {Axiom} \triangleq \lambda x .\; x $

  $\intrp {\R R } \triangleq \lambda () .\; r \in \R $

  $\intrp {1R} \triangleq \lambda () .\; 1 $

  $\intrp {1L \; \pi} \triangleq \lambda \Gamma( 1 ) .\; \intrp \pi \; \Gamma(\:()\:) $

  $\intrp {\multimap R \;\pi} \triangleq \lambda \Gamma .\; \lambda A .\; \intrp \pi \;(\Gamma , A) $

  $\intrp {\multimap L \; \pi_1 \; \pi_2} \triangleq \lambda \Delta( f , \Gamma).\; \intrp {\pi_2} \Delta( f( \intrp{\pi_1} ) )$

  $\intrp {\otimes R \; \pi_1 \; \pi_2}  \triangleq \lambda (\Gamma , \Delta) .\; (\intrp {\pi_1} \; \Gamma) , (\intrp {\pi_2} \; \Delta)$

  $\intrp {\otimes L \; \pi}  \triangleq \lambda \Gamma(a, b) .\; (\intrp {\pi} \; \Gamma( a , b ))$

  $\intrp {\oplus_iR \; \pi } \triangleq \lambda \Gamma .\; inj_i \intrp {\pi} \;\Gamma $

  $\intrp {\oplus L \; \pi_1 \; \pi_2 } \triangleq \lambda \Gamma (inj_1 \; a).\; \intrp {\pi_1} \; \Gamma( a )$

  $\intrp {\oplus L \; \pi_1 \; \pi_2 } \triangleq \lambda \Gamma (inj_2 \; b).\; \intrp {\pi_2} \; \Gamma( b )$

  $\intrp {!R \; \pi} \triangleq \intrp{\pi}$

  $\intrp {!L \; \pi} \triangleq \intrp{\pi}$

  $\intrp {Contr \; \pi } \triangleq \lambda \Gamma(\Delta).\; \intrp \pi \; \Gamma( \Delta, \Delta )$

  $\intrp {Cut \; \pi_1 \; \pi_2} \triangleq \lambda \Delta(\Gamma) .\; \intrp {\pi_2} \Delta( \intrp {\pi_1} \Gamma )$

  $\intrp {Exch \; \pi } \triangleq \lambda \Gamma'. \intrp{\pi} \Gamma $

  $\intrp {Weak \; \pi } \triangleq \lambda \Gamma(\Delta) .\; \intrp \pi \; \Gamma( \: () \: )$
\end{definition}

% \section{Cut Elimination}

\begin{theorem}
  \label{thm:cut-elim-toplevel}
  The logic satisfies cut elimination: given a derivation of $\Gamma \vdash A$,
  there exists another derivation of $\Gamma \vdash A$ that does not use the cut
  rule.
\end{theorem}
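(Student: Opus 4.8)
The plan is to reduce the statement to \Cref{thm:cuttable}, which does the real work, by an induction that removes the cuts of a derivation one at a time. Concretely, I would proceed by induction on the number $k$ of instances of the Cut rule occurring in a given derivation of $\Gamma \vdash A$. If $k = 0$ the derivation is already cut-free and there is nothing to prove. If $k > 0$, the first step is to locate a \emph{topmost} cut, that is, an instance $c$ of Cut such that neither of the two subderivations feeding its premises contains any further instance of Cut. Such a $c$ exists because the derivation tree is finite: order the nonempty finite set of Cut instances by ``occurs strictly above'' and take any maximal element. By construction the premise subderivations of $c$, say $\pi_1$ deriving $\Gamma' \vdash A'$ and $\pi_2$ deriving $\Delta([A']_s) \vdash B$, are both cut-free, and the conclusion of $c$ is the sequent $\Delta(s\Gamma') \vdash B$.

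The second step is to apply \Cref{thm:cuttable} in the single-hole special case, taking $X$ to be a singleton $\{x\}$ with $s_x = s$, so that $\Delta\{x \mapsto [A']_{s_x}\}_{x\in X}$ is exactly $\Delta([A']_s)$ and $\Delta\{x \mapsto s_x\Gamma'\}_{x\in X}$ is exactly $\Delta(s\Gamma')$. This produces a cut-free derivation $\pi'$ of $\Delta(s\Gamma') \vdash B$. Since $\pi'$ concludes precisely the sequent that $c$ concludes, one may splice $\pi'$ in place of the whole subderivation ending in $c$ --- replacing $\pi_1$, $\pi_2$, and the application of Cut by $\pi'$ --- while leaving the remainder of the derivation unchanged; the surrounding rule instances still apply verbatim, since their applicability depends only on the sequent concluded by $c$ and not on its derivation. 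The resulting derivation of $\Gamma \vdash A$ has exactly $k - 1$ instances of Cut, so the induction hypothesis yields the desired cut-free derivation.

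The genuine difficulty in this development is concentrated entirely in \Cref{thm:cuttable}, which I am assuming; its own proof must track occurrences of the principal formula through the generalized contraction rule, which is exactly why it is phrased with an arbitrary set $X$ of holes (pushing a cut above a contraction duplicates the cut formula into several holes). Given that lemma, the top-level argument is routine. The two points I would spell out carefully in the write-up are the existence of a topmost cut and the observation that replacing a subderivation by another with the same conclusion preserves derivability. I do not expect to need the multi-hole generality of \Cref{thm:cuttable} at the top level, since each elimination step only ever removes a single-hole cut.
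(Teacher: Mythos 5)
Your proposal is correct and takes essentially the same route as the paper: both defer all the real work to \Cref{thm:cuttable} and use only its single-hole instance at the top level. The only difference is the induction scaffolding --- you induct on the number of cuts and splice out a topmost cut, while the paper does a structural induction on the derivation and, in the Cut case, first applies the induction hypothesis to both premises before invoking \Cref{thm:cuttable}; the two bookkeeping schemes are interchangeable here.
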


\begin{proof}
  Let $\tau$ be a derivation of $\Gamma \vdash A$. Show there exists $\tau'$ which proves $\Gamma \vdash A$ that does not use a cut rule. By induction on the height of $\tau$
  \begin{itemize}
    \item One premise: recur (1)
    \item Two premise: recur (2)
    \item Cut rule
    \\ let $\pi_1$ be the derivation of $\Gamma \vdash A$ and $\pi_2$ be the derivation of $\Delta([A]_s) \vdash B$. We need to show there exists a cut-free derivation of $\Delta(s\Gamma) \vdash B$. Start by calling the IH on $\pi_1$ and $\pi_2$ to get cut-free derivations $\pi_1'$ and $\pi_2'$ their respective proofs. Now by \Cref{thm:cuttable} we can combine $\pi_1'$ and $\pi_2'$ to obtain the desired derivation.
  \end{itemize}
\end{proof}

\cuttable*

\begin{proof}
  By induction on A with inner induction on $\pi_1$ and $\pi_2$.

  \newcommand{\IH}[1]{IH$_{#1}$}

  Let \IH 1 denote induction appealing to the outer measure (size of A) and  \IH 2 denote induction appealing to the inner measure (size of $\pi_1$ and $\pi_2$).

  In each of the key cases the environment of the premise of $\pi_2$ will have exactly one hole, we name this hole $z$. If the hole is a hole that's being tracked in $\Delta$ then we transform based on the key case. If not then we use \IH2 on the premise and continue.

  First we address the non-key cases

  \begin{itemize}
    \item If there are not more holes being tracked in $\Delta$ then we are done.
    \item $\pi_1$ is a left introduction rule. Push $\pi_2$ upwards in $\pi_1$ and call \IH 2
    \item $\pi_2$ is a right introduction rule. Push $\pi_1$ upwards in $\pi_2$ and call \IH 2
    \item $\pi_2$ is a left introduction rule, but it does not introduce the principal formula of $\pi_1$. i.e. $\Gamma \vdash A$ and $\Delta(C) \vdash B$. Push $\pi_1$ upwards in $\pi_2$ and call \IH 2

    \item Suppose the last rule in $\pi_2$ is $Contr$.

      \begin{mathpar}
        \Gamma \vdash A
        \and
        \inferrule*[Right=Contr]
          {\Delta'(\Psi_1 \,_p \Psi_2) \vdash B}
          {\Delta'(Contr(p, \Psi_1, \Psi_2)) \vdash B}
      \end{mathpar}

      The situation looks like this: $\Delta\{ x \mapsto [A]_{s_x}\}_{x \in X} = \Delta'(\Psi)$, where $\Psi = Contr(p, \Psi_1 , \Psi_2)$ and $\Delta'$ is a context with only one hole, $z$. Without loss of generality we assume $z \notin X$. Some of the variables in $X$ fall in a subtree of $\Psi$, call this subset $Y$. This means that $\Psi$, $\Psi_1$, $\Psi_2$, and $\Delta'$ are of the form:
      \begin{align*}
        \Psi &= \Psi'\{ y \mapsto [A]_{L_p(a_y, b_y)}\}_{y \in Y}\\
        \Psi_1  &= \Psi_1'\{y \mapsto [A]_{a_y}\}_{y \in Y}\\
        \Psi_2  &= \Psi_2'\{y \mapsto [A]_{b_y}\}_{y \in Y}\\
        \Delta'(\cdot) &= \Delta''\begin{cases}
          x \mapsto [A]_{s_x}, x \in X \setminus Y \\
          z \mapsto \cdot \\
        \end{cases}
      \end{align*}
      Where the $a_y$'s and $b_y$'s are vectors of numbers such that $s_y = L_p(a_y, b_y)$, and $\Delta''$ is a generalization of $\Delta'$ with holes from $(X \setminus Y) \cup \{z\}$

      Now apply the following transformation to the derivation.

      \begin{mathpar}
        \inferrule*[Right=!Contr]
            {
              \inferrule*[Right=\IH 2]
                {
                  \Gamma \vdash A
                  \and
                  \Delta_1 \vdash B
                }
                {\Delta_2 \vdash B}
            }
            {\Delta_3 \vdash B}
      \end{mathpar}
      where
      \begin{align*}
        \Delta_1 &= \Delta'' \begin{cases}
          x\mapsto [A]_{s_x}, x \in X\setminus Y\\
          z \mapsto \Psi_1'\{y \mapsto [A]_{a_y}\}_{y \in Y} \,_p \Psi_2'\{y \mapsto [A]_{b_y}\}_{y \in Y}\\
        \end{cases}
        \\
        \Delta_2 &= \Delta''\begin{cases}
          x\mapsto s_x\Gamma, x \in X\setminus Y \\
          z \mapsto \Psi_1'\{y \mapsto a_y\Gamma\}_{y \in Y} \,_p \Psi_2'\{y \mapsto b_y\Gamma\}_{y \in Y}
        \end{cases}
        \\
        \Delta_3 &= \Delta'' \begin{cases}
          x\mapsto s_x\Gamma, x \in X\setminus Y \\
          z \mapsto Contr(p, \Psi_1'\{y \mapsto a_y\Gamma\}_{y \in Y} , \Psi_2')\{y \mapsto b_y\Gamma\}_{y \in Y})
        \end{cases}
      \end{align*}
    \item Last inference rule in $\pi_2$ is Weak.
      \begin{mathpar}
        \Gamma \vdash A
        \and
        \inferrule*[Right=!Weak]
          {\Delta'(\cdot) \vdash B}
          {\Delta'(\Psi)}
      \end{mathpar}

      Unifying our contexts we find that $\Delta\{x \mapsto [A]_{s_x}\}_{x \in X} = \Delta'(\Psi)$.

      Find set of variables $Y$ which correspond to the holes in a subtree of $\Psi$. Name $\Delta'$'s hole $z$. Construct a copy of $\Delta$ named $\Delta''$ with holes $(X \setminus Y )\cup \{ z \} $. Also construct a copy of $\Psi$ named $\Psi'$ with all holes in $Y$.
      \begin{align*}
        \Delta\{x \mapsto [A]_{s_x}\}_{x \in X} &= \Delta''\begin{cases}
          x \mapsto [A]_{s_x}, x \in X \setminus Y \\
          z \mapsto \Psi'\{ y\mapsto [A]_{s_y}\}_{y\in Y}\}
        \end{cases} \\
        \Psi &= \Psi'\{ y\mapsto [A]_{s_y}\}_{y\in Y}\}
      \end{align*}
      \begin{mathpar}
        \inferrule*[Right=Weak]
          {
            \inferrule*[Right=\IH 2]
              {
                \Gamma \vdash A
                \and
                \Delta''\{x \mapsto [A]_{s_x}, z \mapsto \cdot\}_{x \in X \setminus Y} \vdash B
              }
              {
                \Delta''\{x \mapsto s_x\Gamma, z \mapsto \cdot\}_{x \in X \setminus Y} \vdash B
              }
          }
          {
            \Delta''\{x \mapsto s_x\Gamma, z \mapsto \Psi'\{y \mapsto s_y\Gamma\}_{y\in Y}\}_{x\in X \setminus Y} \vdash B
          }
      \end{mathpar}
  \end{itemize}

  Now we address the key cases
  \begin{itemize}
    \item (1R, 1L)
    \begin{mathpar}
      \inferrule*[Right=1R]{ }{\cdot \vdash 1}
      \and
      \inferrule*[Right=1L]
        {\Delta(\cdot) \vdash A}
        {\Delta(1) \vdash A}
      \\\rightsquigarrow\\
      \Delta(\cdot) \vdash A
    \end{mathpar}
    \item (\lolliR , \lolliL)
      \begin{mathpar}
      \inferrule*[Right={\lolliR}]
        {\Gamma \,_p [A]_1 \vdash B}
        {\Gamma \vdash A \multimap_p B}
      \and
      \and
      \inferrule*[Right={\lolliL}]
        {\Delta \vdash A \and \Psi([B]_s) \vdash C}
        {\Psi ([A \multimap_p B]_1 \,_p s\Delta) \vdash C }

      \\\rightsquigarrow\\

      \inferrule*[Right=\IH 1]
        {
          \Delta \vdash A
          \and
          \inferrule*[Right=\IH 1]
            {\Gamma \,_p [A]_1 \vdash B \and \Psi([B]_s) \vdash C}
            {\Psi(s\Gamma \,_p [A]_{s}) \vdash C}
        }
        {\Psi(s\Gamma \,_p s\Delta) \vdash C}
      \end{mathpar}
    \item (\tensorR, \tensorL)
      \begin{mathpar}
      \inferrule*[Right=\tensorR]
        {\Gamma \vdash A \and \Delta \vdash B}
        {\Gamma \,_p \Delta \vdash A \otimes_p B}
      \and
      \inferrule*[Right=\tensorL]
        {\Psi([A]_s \,_p [B]_s) \vdash C}
        {\Psi([A \otimes_p B]_s) \vdash C}

      \\\rightsquigarrow\\

      \inferrule*[Right=\IH 1]
        {
          \Gamma \vdash A
          \and
          \inferrule*[Right=\IH 1]
            {\Delta \vdash B \and \Psi([A]_s \,_p [B]_s) \vdash C}
            {\Psi([A]_s \,_p s\Delta) \vdash C}
        }
        {\Psi(s\Gamma \,_p s\Delta) \vdash C}
      \end{mathpar}
    \item (\plusR i, \plusL)
      \begin{mathpar}
        \inferrule*[Right=\plusR i]
          {\Gamma \vdash A_i}
          {\Gamma \vdash A_1 \oplus A_2}
        \and
        \inferrule*[Right=\plusL]
          {\Delta([A_1]_s) \vdash B \and \Delta([A_2]_s) \vdash B}
          {\Delta([A_1 \oplus A_2]_s) \vdash B}
        \\\rightsquigarrow\\
        \inferrule*[Right=\IH 1]
          {
            \Gamma \vdash A_i
            \and
            \Delta([A_i]_s) \vdash B
          }
          {\Delta(s\Gamma) \vdash B}
      \end{mathpar}
    \item (!R, \!L)
      \begin{mathpar}
        \inferrule*[Right=!R]
          {\Gamma \vdash A}
          {s\Gamma \vdash \!_sA}
        \and
        \inferrule*[Right=!L]
          {\Delta([A]_{s \cdot r}) \vdash B}
          {\Delta([!_sA]_r) \vdash B}
        \\\rightsquigarrow\\
        \inferrule*[Right=\IH 1]
          {\Gamma \vdash A \and \Delta([A]_{s \cdot r}) \vdash B}
          {\Delta(sr\Gamma \vdash B}
      \end{mathpar}
  \end{itemize}
\end{proof}

\fi

\end{document}